\newtheorem{theorem}{Theorem}[section]
\newtheorem{lemma}[theorem]{Lemma}
\newtheorem{proposition}[theorem]{Proposition}
\newtheorem{remark}[theorem]{Remark}
\newtheorem{definition}[theorem]{Definition}
\DeclareMathOperator{\zero}{zero}
\newcommand{\R}{\mathbb{R}}
\renewcommand{\S}{\mathbb{S}}
\newcommand{\RR}{\mathbf{\mathscr{R}}}
\newcommand{\eee}{\mathrm{e}}
\newcommand{\ee}{\varepsilon}
\newcommand{\ddd}{\mathrm{d}}
\newcommand{\dd}{\delta}
\newcommand{\mi}{\mathrm{i}}
\renewcommand{\a}{\alpha}
\newcommand{\p}{\partial}
\newcommand{\I}{\textnormal{I}}
\begin{document}

\preprint{AIP/123-QED}

\title[Two-community noisy Kuramoto model with general interaction strengths: Part I]{Two-community noisy Kuramoto model with general interaction strengths: Part I}
% Force line breaks with \\

\author{S. Achterhof}
\affiliation{ 
Mathematical Institute, Leiden University, P.O.\ Box 9512,
2300 RA Leiden, The Netherlands.%\\This line break forced with \textbackslash\textbackslash
}%

\author{J. M. Meylahn}
 \email{j.m.meylahn@uva.nl}
\affiliation{%
Amsterdam Business School, University of Amsterdam, P.O.\ Box 15953,
1001 NL Amsterdam, The Netherlands.%\\This line break forced% with \\
}%

\date{\today}% It is always \today, today,
             %  but any date may be explicitly specified

\begin{abstract}
We generalize the study of the noisy Kuramoto model, considered on a network of two interacting communities, to the case where the interaction strengths within and across communities are taken to be different in general. By developing a geometric interpretation of the self-consistency equations, we are able to separate the parameter space into ten regions in which we identify the maximum number of solutions in the steady state. Furthermore, we prove that in the steady-state only the angles $0$ and $\pi$ are possible between the average phases of the two communities and derive the solution boundary for the unsynchronized solution. Lastly, we identify the equivalence class relation in the parameter space corresponding to the symmetrically synchronized solution.  
\end{abstract}

\maketitle

\begin{quotation}
The Kuramoto model is a model for studying synchronization of oscillators (e.g. fireflies flashing). We consider two groups of oscillators, synchronizing within and across groups. Studying the stationary-states (the states after waiting a long time) of the system leads to a system of equations that cannot be solved analytically. We introduce a geometric interpretation of these equations that allows us to analyze when and how many solutions are possible given a vector of model parameters. It also allows us to identify when symmetric solutions (solutions where the two groups are equally synchronized) and unsychronized solutions occur.
\end{quotation}

\section{Background and motivation}
The Kuramoto model on a two-community network with general interaction strengths has recently received attention in the physics literature \cite{Abrams2008, Hong2011, Hong2012, Kotwal2017, Sonnenschein2015}. Here three approaches have been used, namely, the Ott-Antonsen Ansatz, the Gaussian approximation and the reduction to circular cumulants approach. These are all methods of approximating the low dimensional dynamics of the system, i.e., dynamics of the order parameters. The Gaussian approximation, developed in \cite{Sonnenschein2013, Hannay2018}, can be applied to the noisy Kuramoto model, while the reduction to circular cumulants developed in \cite{Goldobin2018, Tyulkina2018} can be applied to the noisy Kuramoto model only in the small noise limit.  

A common theme in the aforementioned literature is the appearance of bifurcation points, chimera states and traveling waves arising from this simple modification to the Kuramoto model. This suggests that slight increases in terms of complexity on the underlying interaction network structure results in a plethora of complex phenomena.

In a recent paper \cite{Meylahn2020}, one of the authors fully classified the phase diagram for the two-community noisy Kuramoto model in the case when the pair of interaction strengths in the two communities as well as the pair of interaction strengths between the two communities are taken to be the same (henceforth referred to as the symmetric case). This reduces the parameter space to two dimensions. In this case three types of solutions exist: the unsychronized solution, the symmetrically synchronized solution (when both communities are synchronized to the same degree) and the non-symmetrically synchronized solution (when on community is more synchronized than the other). The non-symmetrically synchronized solution appears as a pitchfork bifurcation, resulting in a bifurcation line (or solution boundary) in the phase diagram. The paper also proves that in this simplified case the phase difference between the average phases of the two communities must be zero or $\pi$, which significantly simplifies the analysis.  

The two-community noisy Kuramoto model is, however, not only of interest to mathematicians and physicists, but is also relevant for neurophysiologists. The Suprachiasmatic nucleus (SCN), or body-clock, is a network of neurons in the brain responsible for dictating all bodily rhythms and surprisingly has the same two-community structure in all mammals. The results on the symmetric case might explain the observation of a variety of interesting phenomena in experiments, for example, the transitions to a phase-split state of the SCN observed in mice and rats when exposed to constant light as shown in a recent paper by one of the authors \cite{Rohling2020}.

The presence or absence of a variety of compounds or chemicals in the SCN changes the strength of the interactions between neurons, and the concentration of these chemicals is in turn influenced by a variety of external/environmental factors. This means that an accurate model of the SCN would incorporate time-dependent interaction strength parameters. A first step in this direction is to generalize the results of the previous paper to the case where we allow for four different interaction strength parameters: two for the interactions within the communities and two for the interactions between the communities. This is the goal of this series of papers.

Another interesting application of the two-community Kuramoto model is to the dynamics of opinion formation. As argued in \cite{Binmore2004} political opinions should be represented in at least two dimensions. Furthermore, individuals tend to update their opinion based on the opinions of individuals they come into contact with. These interactions can be both negative and positive (rejecting an opinion due to previous disagreements with an individual or accepting an opinion due to previous agreement). This makes the Kuramoto model with positive and negative interactions a natural candidate to study the dynamics of opinion formation and especially the phenomenon of polarization. This has been explored to some degree by \cite{Hong2011a}, \cite{Xiao2019} and a modification of the Kuramoto model called the Opinion Changing Rate model is studied in \cite{Pluchino2005}.

The paper proceeds as follows. In Section \ref{sec:model} we define the model we will consider and derive the set of self-consistency equations determining all solutions of the model. In Section \ref{sec:overview} we introduce a geometric interpretation of the self-consistency equations, which allows us to split the parameter space into ten regions that can be analyzed separately. In Section \ref{sec:unsynch} we identify regions in the phase space in which the unsynchronized solution is the only solution and analyze a special solution in which both communities are equally synchronized (called the symmetrically synchronized solution).

\section{The model}
\label{sec:model}
Consider two populations of oscillators with sizes $N_1$ and  $N_2$ and with internal mean-filed interactions of strength $\frac{K_1}{N_1}$ and $\frac{K_2}{N_2}$. Furthermore the oscillators in community $1$ experience a mean-field interaction with the oscillators in community 2 of strength $\frac{L_1}{N_2}$ and the oscillators in community $2$ experience a mean-field interaction with the oscillators of community 1 of strength $\frac{L_2}{N_1}$ (see Figure \ref{fig:comicgraph}). We assume that $K_1, K_2 \in \R$ and $L_1, L_2 \in \R \setminus\{0\}$.

\begin{definition}[Two-community noisy Kuramoto model]
The evolution of $\theta_{1,i}$, $i = 1, \ldots, N_1$, on $\S = \R/ 2\pi$ is governed by the SDE
\begin{align}
\ddd \theta_{1,i}(t) &= \omega_{1,i} \ddd t + \frac{K_1}{N_1 + N_2} \sum_{k = 1}^{N_1} \sin( \theta_{1,k}(t) - \theta_{1,i}(t) ) \ddd t  \label{kura1} \\
 	&+ \frac{L_1}{N_1 + N_2} \sum_{l = 1}^{N_2} \sin(\theta_{2,l}(t) - \theta_{1,i}(t) ) \ddd t + \sqrt{D} \ddd W_{1,i}(t).\nonumber  
\end{align}

As initial condition we take $\theta_{1,i}(0)$ for $i = 1, \ldots, N_1,$ i.i.d. and drawn from a common probability distribution $\rho_1$ on $\S$. The natural frequencies $\omega_{1,i}$, $i = 1, \ldots, N_1$ are i.i.d. and are drawn from a common probability distribution $\mu_2$ on $\R$. 
\\

The phase angles of the oscillators in community 2 are denoted by $\theta_{2,j}$, $j = 1, \ldots, N_2$, and their evolution on $\S = \R/ 2\pi$ is governed by the SDE
\begin{align}
\ddd \theta_{2,j}(t) &= \omega_{2,j} \ddd t + \frac{K_2}{N_1 + N_2} \sum_{l = 1}^{N_2} \sin( \theta_{2,l}(t) - \theta_{2,j}(t) ) \ddd t \nonumber \\
 	&+ \frac{L_2}{N_1 + N_2} \sum_{k = 1}^{N_1} \sin(\theta_{1,k}(t) - \theta_{2,j}(t) ) \ddd t + \sqrt{D} \ddd W_{2,j}(t). \label{kura2}
\end{align}
As initial condition we take $\theta_{2,j}(0), j = 1, \ldots, N_2,$ are i.i.d. drawn from a common probability distribution $\rho_2$ on $\S$. The natural frequencies $\omega_{2,j}$, $j = 1, \ldots, N_1$ are i.i.d. and are drawn from a common probability distribution $\mu_2$ on $\R$. 
Furthermore $\left(W_{1,i}\right)_{t \geq 0}$, $i = 1,\ldots, N_1$ and $\left(W_{2,j}\right)_{t \geq 0}$, $j = 1,\ldots, N_2$ are two independent standard Brownian motions and we call $D > 0$ the noise strength. 
\end{definition}

\begin{figure*}[!ht]
\centering
\includegraphics[width=1.0\textwidth]{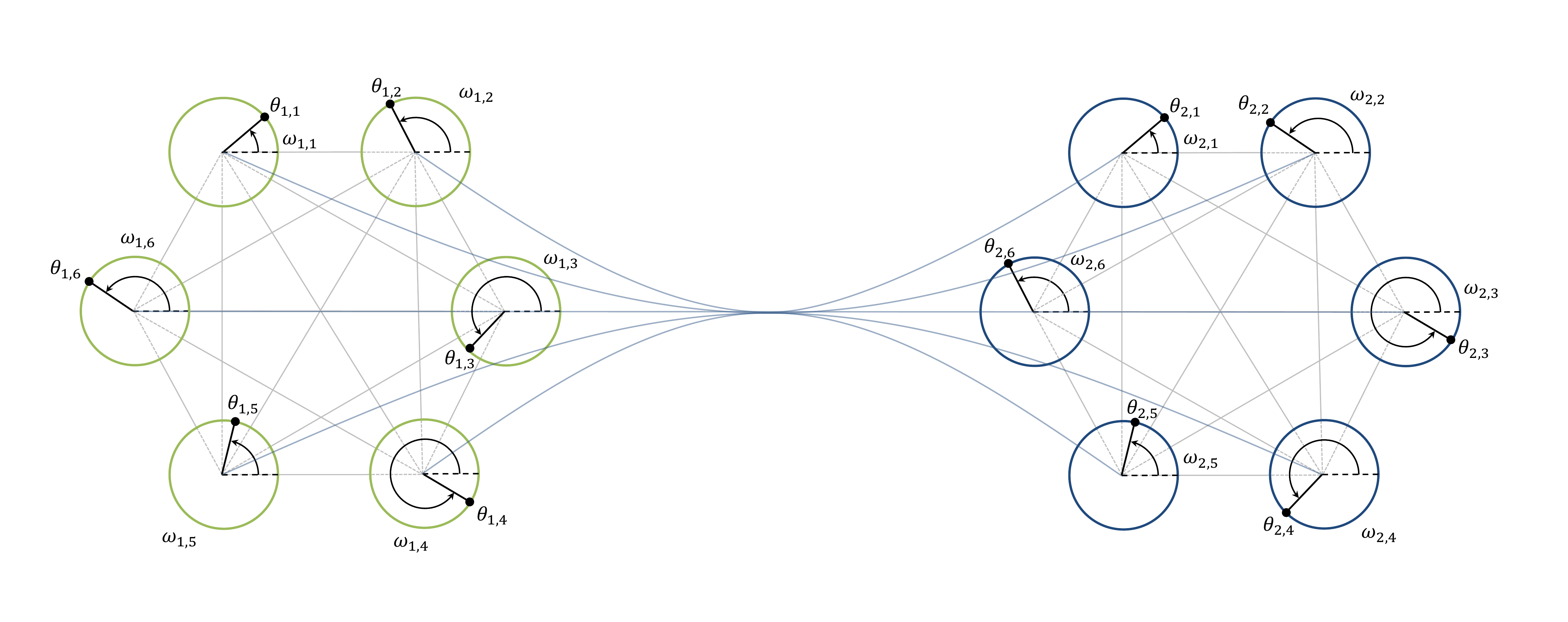}
\caption{Graphical representation of the two-community noisy Kuramoto model, with $N_1 = N_2 = 6$.}
\label{fig:comicgraph}
\end{figure*}

In order to monitor the dynamics in each community, let us define the \emph{order parameter of community 1 and community 2}, respectively:
\begin{align}
r_{1,N_1}(t) \eee^{\mi \psi_{1,N_1}(t)} &= \frac{1}{N_1} \sum_{k = 1}^{N_1} \eee^{\mi \theta_{1,k}(t)}, \label{eq:op1} \\ 
r_{2,N_2}(t) \eee^{\mi \psi_{2,N_2}(t)} &= \frac{1}{N_2} \sum_{l = 1}^{N_2} \eee^{\mi \theta_{2,l}(t)}. \label{eq:op2}
\end{align}
We call $r_{1,N_1}(t) \in [0,1]$ and $r_{2,N_2}(t) \in [0,1]$ the \emph{synchronization levels of community 1 and community 2}, respectively. Furthermore $\psi_{1,N_1}(t) \in \S$ and $\psi_{2,N_2}(t) \in \S$ represent the \emph{average phases of community 1 and 2}. Using these order parameters we can rewrite equations (\ref{kura1}) and (\ref{kura2}). Multiplying \eqref{eq:op1} and \eqref{eq:op2} with $\eee^{-\mi \theta_{1,i}(t)}$ and $\eee^{-\mi \theta_{2,i}(t)}$ respectively, taking the imaginary part of the resulting equations and plugging these into \eqref{kura1} and \eqref{kura2} gives
\begin{align}
&\ddd \theta_{1,i}(t)	= \omega_{1,i}\ddd t + \frac{K_1 N_1}{N_1 + N_2} r_{1,N_1}(t) \sin( \psi_{1,N_1}(t) - \theta_{1,i}(t) ) \ddd t \nonumber\\
	&+ \frac{L_1 N_2}{N_1 + N_2} r_{2,N_2}(t) \sin( \psi_{2,N_2}(t) - \theta_{1,i}(t) ) \ddd t + \sqrt{D} \ddd W_{1,i}(t), \label{km1}
\end{align}
and
\begin{align}
&\ddd \theta_{2,j}(t)	= \omega_{2,j}\ddd t + \frac{K_2 N_2}{N_1 + N_2} r_{2,N_2}(t) \sin( \psi_{2,N_2}(t) - \theta_{2,j}(t) ) \ddd t \nonumber\\
	&+ \frac{L_2 N_1}{N_1 + N_2} r_{1,N_1}(t) \sin( \psi_{1,N_1}(t) - \theta_{2,j}(t) ) + \sqrt{D} \ddd W_{2,j}(t). \label{km2}
\end{align}

Note that the model is \emph{rotationally invariant}, this means that if $\theta_{1,i}(t)$ is a solution of \eqref{km1}, then $\theta_{1,i}(t) + \theta_0$ is also a solution of \eqref{km1}, for any constant $\theta_0 \in \S$ and $i = 1,\ldots, N_1$. Similarly if $\theta_{2,j}(t)$ is a solution of \eqref{km2}, then $\theta_{2,j}(t) + \theta_0$ is also a solution of \eqref{km2}, for any constant $\theta_0 \in \S$ and $j = 1,\ldots, N_2$.
\\

Furthermore we can assume without loss of generality that $\mu_1$ and $\mu_2$ have mean zero since we can map the model for each community to a model which rotates with speed $\int_\R \omega \mu_1(\ddd \omega)$ and $\int_\R \omega \mu_2(\ddd \omega)$ respectively, as in \cite{G10}.

By defining empirical measures for each community and taking the limit as $N$ tends to infinity we can derive the McKean-Vlasov equation for the system. We set $N_1 = \a_1 N$ and $N_2 = \a_2 N$, with $\a_1 + \a_2 = 1$ and define the \emph{empirical measure} for each community:

\begin{equation}
\nu_{N_1,t}	= \frac{1}{N_1} \sum_{i = 1}^{N_1} \delta_{(\theta_{1,i}(t), \omega_{1,i})},\;\; \text{and }\;\;\nu_{N_2,t}	= \frac{1}{N_2} \sum_{j = 1}^{N_2} \delta_{(\theta_{2,j}(t), \omega_{2,j})}.\nonumber
\end{equation}
\begin{theorem}[McKean-Vlasov limit]
\label{thm:mckean}
In the limit $N \to \infty$, the empirical measure $\nu_{N_1,t}$ converges to $p_1$, and the empirical measure $\nu_{N_2,t}$ converges to $p_2$, where $p_k(t;\theta, \omega)$ is evolves according to

\begin{equation}
\label{eq:McKean1}
\frac{\p p_k(t;\theta, \omega)}{\p t} = \frac{D}{2} \frac{\p^2 p_k(t;\theta, \omega)}{\p \theta^2} - \frac{\p}{\p \theta} [ v_k(t;\theta, \omega) p_k(t;\theta, \omega)],
\end{equation}
with 
\begin{align}
v_k(t;\theta, \omega) = \omega &+ \a_k K_k r_k(t) \sin( \psi_k(t) - \theta )\\
 &+ \a_{k'} L_k r_{k'}(t) \sin(\psi_{k'}(t) - \theta),\nonumber
\end{align}
where $k\in\{1, 2\}$ and $k'$ is the complement of $k$.
Here $r_k(t)$ and $\psi_k(t)$ are defined by

\begin{equation}
r_k \eee^{\mi \psi_k} = \int_\R \int_\S \eee^{\mi \theta } p_k(\theta, \omega) \ddd \theta \ddd \mu_k(\omega), \label{nop1}
\end{equation}
for $k\in \{1, 2\}$.
\end{theorem}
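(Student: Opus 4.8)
The plan is to prove this as a propagation-of-chaos statement via Sznitman's synchronous-coupling argument, adapted to the two-community structure. The first step is to introduce, for each community $k\in\{1,2\}$ and the complement $k'$, the associated \emph{nonlinear process} $\bar\theta_{k,i}$ solving the McKean--Vlasov SDE
\begin{align}
\ddd \bar\theta_{k,i}(t) &= \omega_{k,i}\,\ddd t + \a_k K_k \bar r_k(t)\sin\bigl(\bar\psi_k(t)-\bar\theta_{k,i}(t)\bigr)\,\ddd t\nonumber\\
&\quad + \a_{k'} L_k \bar r_{k'}(t)\sin\bigl(\bar\psi_{k'}(t)-\bar\theta_{k,i}(t)\bigr)\,\ddd t + \sqrt{D}\,\ddd W_{k,i}(t),\nonumber
\end{align}
driven by the same Brownian motions, frequencies and initial data as the particle system \eqref{km1}--\eqref{km2}, where now $\bar r_k\eee^{\mi\bar\psi_k}=\EE[\eee^{\mi\bar\theta_{k,i}(t)}]$ is the self-consistent mean-field value (the expectation being over the law of the nonlinear process itself). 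Because the phase space $\S$ is compact and the drift $v_k$ is bounded and globally Lipschitz in its phase variable and in the order parameters $(r_1,\psi_1,r_2,\psi_2)$, existence and uniqueness of this coupled pair of nonlinear processes follows from a Banach fixed-point argument on the space of continuous order-parameter paths $t\mapsto(\bar r_1,\bar\psi_1,\bar r_2,\bar\psi_2)$ with the supremum norm on a short interval, then iterated to $[0,T]$. Applying It\^o's formula to test functions shows that the time-marginal law of $\bar\theta_{k,i}$ admits a density $p_k(t;\theta,\omega)$ solving the Fokker--Planck equation \eqref{eq:McKean1}, and that \eqref{nop1} holds by construction of $\bar r_k,\bar\psi_k$.

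Next I would couple the two systems. For fixed $N$, run the $N_1+N_2$ nonlinear copies $\bar\theta_{k,i}$ with exactly the same Brownian motions, frequencies and initial conditions as the true particles $\theta_{k,i}$, so that the two systems differ only through their interaction terms. Subtracting the SDEs and using the global bound $|\sin a-\sin b|\le|a-b|$, the difference of drifts for particle $(k,i)$ is controlled by $|\theta_{k,i}(t)-\bar\theta_{k,i}(t)|$ together with the discrepancy $|r_{k,N_k}(t)\eee^{\mi\psi_{k,N_k}(t)}-\bar r_k(t)\eee^{\mi\bar\psi_k(t)}|$ between the empirical and mean-field order parameters of each community. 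This last discrepancy splits into a term bounded by the average coupling error $\tfrac{1}{N_k}\sum_i|\theta_{k,i}-\bar\theta_{k,i}|$ and a fluctuation term $|\tfrac{1}{N_k}\sum_i \eee^{\mi\bar\theta_{k,i}}-\bar r_k\eee^{\mi\bar\psi_k}|$ comparing the empirical mean of the i.i.d.\ nonlinear copies with their common expectation; by the law of large numbers on $\S$ the latter is of order $N_k^{-1/2}$ in $L^2$.

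Setting $u_k(t)=\EE[\sup_{s\le t}|\theta_{k,i}(s)-\bar\theta_{k,i}(s)|]$, which is independent of $i$ by exchangeability within a community, the previous step yields a coupled pair of integral inequalities of the form $u_k(t)\le C\int_0^t(u_1(s)+u_2(s))\,\ddd s + C\,N^{-1/2}$, with $C$ depending only on $K_1,K_2,L_1,L_2,T$. Summing over $k$ and applying Gronwall's lemma gives $u_1(T)+u_2(T)=O(N^{-1/2})$, uniformly on $[0,T]$, so each particle stays $O(N^{-1/2})$-close to an independent copy of its nonlinear process. Combined with the empirical-measure law of large numbers for the i.i.d.\ copies (which converge to $p_k$), this upgrades to convergence of $\nu_{N_k,t}$ to $p_k$ in probability in the weak topology. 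I expect the main obstacle to be the coupled nature of the two error estimates: because the cross terms with $L_1,L_2$ mix the communities, both the fixed-point contraction for $(\bar r_k,\bar\psi_k)$ and the Gronwall step must be handled for the vector $(u_1,u_2)$ simultaneously rather than community by community. The boundedness and Lipschitz continuity of $\sin$ on the compact phase space $\S$ is precisely what lets both arguments close without the moment control that unbounded interactions would demand.
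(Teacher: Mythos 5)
Your proposal is correct, but it follows a genuinely different route from the paper, which in fact gives no self-contained proof of Theorem \ref{thm:mckean}: the paper only remarks that the argument is analogous to the one-community model and refers to \cite[Chapter 10]{H00}, where the McKean--Vlasov equation is derived via the empirical-measure techniques (martingale-problem and large-deviation machinery) developed there for the mean-field Kuramoto model. Your Sznitman-style synchronous coupling is self-contained and quantitative: by comparing each particle with an i.i.d.\ nonlinear copy driven by the same noise, frequency and initial datum, and by controlling the pair of coupling errors $(u_1,u_2)$ through a single summed Gronwall inequality, you obtain an explicit $O(N^{-1/2})$ propagation-of-chaos rate, something the soft compactness/LDP route does not provide; conversely, the approach cited by the paper yields finer information (a full large-deviation principle around the limit) that a coupling argument cannot reach. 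Your handling of the two-community structure --- the vector-valued fixed point for the self-consistent order parameters and the vector Gronwall step --- is exactly the right adaptation, and the exact cancellation of the frequencies $\omega_{k,i}$ in the coupled difference is indeed what makes the argument work with no assumptions on $\mu_k$ beyond those in the paper. Two minor points to tighten in a final write-up: the phase $\bar\psi_k(t)$ in \eqref{nop1} is undefined whenever $\bar r_k(t)=0$, so both the fixed-point argument and the drift should be phrased directly in terms of the complex order parameter $\EE[\eee^{\mi\bar\theta_k(t)}]$ (your estimate of the discrepancy $|r_{k,N_k}\eee^{\mi\psi_{k,N_k}}-\bar r_k\eee^{\mi\bar\psi_k}|$ already does this implicitly); and It\^o's formula gives only the weak form of \eqref{eq:McKean1}, so the existence of the density $p_k(t;\theta,\omega)$ requires the standard parabolic-regularity step, which is available because $D>0$ makes the equation uniformly parabolic in $\theta$.
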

The proof of Theorem \ref{thm:mckean} is analogous to the proof in the case of the one-community noisy Kuramoto model. We refer to \cite[Chapter 10]{H00} for details.

In the next proposition we derive the stationary densities for the dynamics governed by \eqref{eq:McKean1}.
\begin{proposition}[Stationary solutions]\label{stat}
Suppose that $r_1 = r_2 = 0$ or $r_1, r_2 > 0$. The stationary density, $p_k(\theta, \omega)$ of community $k$, solves the equation

\begin{equation}
\frac{D}{2} \frac{\p^2 p_k(\theta, \omega)}{\p \theta^2} - \frac{\p}{\p \theta}[ v_k(\theta, \omega) p_k(\theta, \omega) ] = 0, \label{se1}
\end{equation}
which has a solution 
\begin{equation}
p_k(\theta,\omega) = \frac{A_k(\theta,\omega)}{\int_\S A_k(\phi, \omega) d \phi}, \label{p1}
\end{equation}
where 
\begin{align}
A_k(\theta, \omega) = B_k(\theta, \omega) &\Big[ \eee^{\frac{4 \pi \omega}{D}} \int_\S \frac{\ddd \phi}{B_k(\phi, \omega)} \\
&+ (1 - \eee^{\frac{4 \pi \omega}{D}}) \int_0^\theta \frac{\ddd \phi}{B_k(\phi,\omega)} \Big],\nonumber
\end{align}
with 
\begin{align}
B_k(\theta, \omega) = \exp \Big[ \frac{2 \omega \theta}{D} &+ \frac{2 \a_k K_k r_k \cos(\psi_k - \theta)}{D} \\
&+ \frac{2 \a_{k'} L_k r_{k'} \cos(\psi_{k'} - \theta)}{D}  \Big],\nonumber
\end{align}
where $k\in\{1, 2\}$ and $k'$ is the complement of $k$.
In addition, the following self-consistency equations must be satisfied
\begin{align}
r_k &= \int_\R  \int_\S \cos(\psi_k - \theta) p_k(\theta, \omega) \ddd \theta \ddd \mu_k( \omega), \label{eq:gsc1} \\
0 &= \int_\R  \int_\S \sin(\psi_k - \theta) p_k(\theta, \omega) \ddd \theta \ddd \mu_k(\omega), \label{eq:gsc3}
\end{align}
for $k\in \{1, 2\}$.
\end{proposition}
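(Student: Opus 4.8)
The statement has three parts: that any stationary density solves \eqref{se1}, that \eqref{se1} admits the explicit solution \eqref{p1}, and that the self-consistency equations \eqref{eq:gsc1}--\eqref{eq:gsc3} hold. The first part is immediate: setting $\p_t p_k = 0$ in \eqref{eq:McKean1} reduces the McKean--Vlasov equation directly to \eqref{se1}. The real work is the second part, where for each fixed $\omega$ I would treat \eqref{se1} as a linear second-order ODE in $\theta$. The plan is to integrate once, obtaining $\frac{D}{2}\p_\theta p_k - v_k p_k = -J_k(\omega)$, where the constant of integration $J_k(\omega)$ (constant in $\theta$) is the stationary probability current. Solving this first-order linear equation and then fixing the remaining freedom by periodicity on $\S$ and by normalization will produce \eqref{p1}.

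For the first-order equation I would use $B_k$ as the integrating factor. The key identity is $\p_\theta \log B_k = \tfrac{2}{D} v_k$, checked directly by differentiating the exponent of $B_k$ and comparing with $v_k$; equivalently $B_k$ solves the homogeneous equation $\frac{D}{2}\p_\theta B_k = v_k B_k$. Variation of parameters then gives $p_k(\theta,\omega) = B_k(\theta,\omega)\bigl[c_0(\omega) - \tfrac{2}{D}J_k(\omega)\int_0^\theta \tfrac{\ddd\phi}{B_k(\phi,\omega)}\bigr]$ for two $\theta$-independent constants $c_0(\omega)$ and $J_k(\omega)$. The crucial structural observation is that $B_k$ is \emph{not} $2\pi$-periodic: the cosine terms in its exponent are periodic, but the $2\omega\theta/D$ term is not, so that $B_k(\theta+2\pi,\omega) = \eee^{4\pi\omega/D}B_k(\theta,\omega)$. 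Imposing periodicity of the density, $p_k(0,\omega)=p_k(2\pi,\omega)$, therefore yields a nontrivial relation between $c_0$ and $J_k$, and it is exactly this relation that generates the factors $\eee^{4\pi\omega/D}$ and $(1-\eee^{4\pi\omega/D})$ appearing in $A_k$. I would then verify directly that the stated $A_k$ satisfies $A_k(0,\omega)=A_k(2\pi,\omega)$ — a short computation using $B_k(2\pi,\omega)=\eee^{4\pi\omega/D}B_k(0,\omega)$ and $\int_0^{2\pi}=\int_{\S}$ — confirming it is the periodic solution, and finally divide by $\int_{\S}A_k(\phi,\omega)\,\ddd\phi$ to obtain the normalized \eqref{p1}.

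The self-consistency equations follow from the definition \eqref{nop1}. Multiplying both sides by $\eee^{-\mi\psi_k}$ gives $r_k = \int_\R\int_\S \eee^{\mi(\theta-\psi_k)} p_k(\theta,\omega)\,\ddd\theta\,\ddd\mu_k(\omega)$. Since $r_k$ is real and nonnegative, taking real and imaginary parts and using that cosine is even and sine is odd yields \eqref{eq:gsc1} and \eqref{eq:gsc3}, respectively.

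The main obstacle is the careful handling of the boundary conditions on the circle. Because the natural-frequency drift breaks the periodicity of the integrating factor $B_k$, one cannot simply impose periodicity on $B_k$ itself; periodicity must be imposed on the full density and the constant current $J_k(\omega)$ matched accordingly, which is the source of the somewhat unusual structure of $A_k$. A secondary point worth addressing is the role of the hypothesis $r_1=r_2=0$ or $r_1,r_2>0$: if exactly one synchronization level vanished, the corresponding average phase would be undefined, while the nonzero cross-coupling $L_k\ne 0$ would nonetheless drive that community, making the mixed case inconsistent. The two admissible cases are precisely those in which either both densities are uniform or both phases $\psi_k$ are meaningfully defined, so that \eqref{p1} and the self-consistency equations are well posed.
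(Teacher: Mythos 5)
Your proposal is correct and follows essentially the same route as the paper, which defers to the standard calculation in den Hollander's \emph{Large Deviations} (Solution to Exercise X.33): integrate the stationary Fokker--Planck equation once to introduce the constant probability current, solve by variation of parameters with $B_k$ as integrating factor (using $\p_\theta \log B_k = \tfrac{2}{D}v_k$), and fix the constants by imposing $2\pi$-periodicity of the density, which --- because $B_k(\theta+2\pi,\omega)=\eee^{4\pi\omega/D}B_k(\theta,\omega)$ --- produces exactly the factors $\eee^{4\pi\omega/D}$ and $(1-\eee^{4\pi\omega/D})$ in $A_k$. Your derivation of the self-consistency equations from \eqref{nop1} by multiplying by $\eee^{-\mi\psi_k}$ and taking real and imaginary parts is also the intended argument.
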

The proof of Proposition \ref{stat} is analogous to the calculation in \cite[Solution to Exercise X.33]{H00}.

\begin{remark}
Note that in the case of the one-community Kuramoto model one has $p(-\theta, - \omega) = p(\theta, \omega)$. This is in general not true for the two-community Kuramoto model.
\end{remark}

The self-consistency equations from the previous proposition cannot be solved explicitly. We can however simplify equation \eqref{eq:gsc1} in a simplified setting. To this end we assume that $\a_1 = \a_2$, $D = 1$ and $\mu_1 = \mu_2 = \delta_0$ and omit $\omega \in \S$ in the notation.
Under these assumptions the stationary densities, \eqref{p1}, simplify to

\begin{align}
p_k(\theta) &= \frac{B_k(\theta)}{\int_\S B_k(\phi) \ddd \phi}\\
 &=  \frac{\exp\left[ L_k r_{k'} \cos(\psi_{k'} - \theta) + K_k r_k \cos(\psi_k - \theta)\right]}{\int_\S \exp\left[ L_k r_{k'} \cos(\psi_{k'} - \phi) + K_k r_k \cos(\psi_k - \phi) \right] \ddd \phi}.\nonumber
\end{align}

Furthermore the self-consistent equations, \eqref{eq:gsc1}, simplify to
\begin{align}
r_k &= \frac{\int_\S \cos(\psi_k - \theta) \eee^{\left[ L_k r_{k'} \cos(\psi_{k'} - \theta) + K_k r_k \cos(\psi_k - \theta)\right]} \ddd \theta }{\int_\S \eee^{\left[ L_k r_{k'} \cos(\psi_{k'} - \phi) + K_k r_k \cos(\psi_k - \phi) \right]} \ddd \phi}. \label{eq:osc1}
\end{align}
In addition, the self-consistent equations, \eqref{eq:gsc3} simplify to
\begin{align}
0 &= \frac{\int_\S \sin(\psi_k - \theta) \eee^{\left[ L_k r_{k'} \cos(\psi_{k'} - \theta) + K_k r_k \cos(\psi_k - \theta)\right]} \ddd \theta }{\int_\S \eee^{\left[ L_k r_{k'} \cos(\psi_{k'} - \phi) + K_k r_k \cos(\psi_k - \phi) \right]} \ddd \phi}. \label{eq:osc3}
\end{align}

Let us consider $K_1, K_2 \in \R$ positive and $L_1, L_2 \in \R \setminus \{0\}$. We will first rewrite equations \eqref{eq:osc1} and \eqref{eq:osc3} to a more convenient form. In order to do this we define the following functions
\begin{definition}[Special functions]
\begin{equation}
V(x) := \frac{\int_{\S}\cos \theta\eee^{x \cos \theta}\ddd\theta}{\int_{\S}\eee^{x \cos \theta}\ddd\theta}\quad \text{and }\quad W(x) := \frac{2V(x)}{x}.
\end{equation}
\end{definition}
Properties of $V(x)$ and $W(x)$ are given in Lemma \ref{lV} and Lemma \ref{thm:W} respectively.

\begin{proposition}[Self-consistency equations]\ \\
\label{prop:selfcons}
The self-consistency equations \eqref{eq:osc1} and \eqref{eq:osc3} can be rewritten as
\begin{align}
r_k =& \frac{K_k r_k + L_k r_{k'} \cos \psi}{2} \nonumber\\
&\times W \left( \sqrt{ K_k^2 r_k^2 + L_k^2 r_{k'}^2 + 2 K_k L_k r_k r_{k'} \cos \psi} \right), \label{eq:sc1}
\end{align}
and 
\begin{equation}
0 = L_k r_{k'} \sin \psi W \left( \sqrt{ K_k^2 r_1^2 + L_k^2 r_{k'}^2 + 2 K_k L_k r_k r_{k'} \cos \psi} \right),\label{eq:sc3}
\end{equation}
with $x \in (0,\infty)$.
\end{proposition}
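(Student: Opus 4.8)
The plan is to collapse the two-phase field in the exponent of \eqref{eq:osc1}--\eqref{eq:osc3} into a single effective cosine and then recognise the resulting angular integrals as the function $V$. Write $\psi := \psi_k - \psi_{k'}$ for the phase difference of the two communities. The exponent $K_k r_k\cos(\psi_k-\theta)+L_k r_{k'}\cos(\psi_{k'}-\theta)$ is a sum of two sinusoids of the same frequency in $\theta$, so by the harmonic addition theorem I would rewrite it as $R\cos(\gamma-\theta)$, where
\[
R=\sqrt{K_k^2 r_k^2+L_k^2 r_{k'}^2+2K_kL_k r_k r_{k'}\cos\psi}
\]
is precisely the argument of $W$ appearing in the statement, and the effective phase $\gamma$ is determined by $R\cos\gamma=K_k r_k\cos\psi_k+L_k r_{k'}\cos\psi_{k'}$ and $R\sin\gamma=K_k r_k\sin\psi_k+L_k r_{k'}\sin\psi_{k'}$. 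Since $R^2$ is a sum of squares it is automatically nonnegative, and the hypothesis $r_k,r_{k'}>0$ of Proposition \ref{stat} makes $R>0$ except in the degenerate case where the two fields exactly cancel; this is the point at which the restriction $x\in(0,\infty)$ in the statement enters.

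Next I would apply the translation $\phi=\theta-\gamma$, which is measure preserving on $\S$ and turns the common denominator into $\int_\S\eee^{R\cos\phi}\ddd\phi$. Setting $\eta:=\psi_k-\gamma$ and expanding $\cos(\psi_k-\theta)=\cos(\eta-\phi)$ and $\sin(\psi_k-\theta)=\sin(\eta-\phi)$ by angle addition, the contributions proportional to $\sin\phi$ integrate to zero against the even weight $\eee^{R\cos\phi}$ over the full circle. What survives is
\[
r_k=\cos\eta\,\frac{\int_\S\cos\phi\,\eee^{R\cos\phi}\ddd\phi}{\int_\S\eee^{R\cos\phi}\ddd\phi}=\cos\eta\,V(R),\qquad 0=\sin\eta\,V(R),
\]
with $V$ as in the Definition of the special functions.

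It then remains to evaluate $\cos\eta$ and $\sin\eta$. Feeding the defining relations for $\gamma$ into $R\cos\eta=R\cos(\psi_k-\gamma)$ and $R\sin\eta=R\sin(\psi_k-\gamma)$ and simplifying, the $K_k r_k$ terms combine through $\cos^2+\sin^2=1$ while the $L_k r_{k'}$ terms merge into a single phase difference, giving $R\cos\eta=K_k r_k+L_k r_{k'}\cos\psi$ and $R\sin\eta=L_k r_{k'}\sin\psi$. Substituting these and using the definition $W(x)=2V(x)/x$, i.e. $V(R)=\tfrac12 R\,W(R)$, turns the two identities above into
\[
r_k=\frac{K_k r_k+L_k r_{k'}\cos\psi}{2}\,W(R),\qquad 0=\tfrac12 L_k r_{k'}\sin\psi\,W(R),
\]
which are \eqref{eq:sc1} and \eqref{eq:sc3} (the harmless factor $\tfrac12$ in the second identity may be dropped since the right-hand side vanishes).

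I expect the computation to be essentially routine trigonometric bookkeeping; the only delicate step is the identification of the effective phase $\gamma$ and the verification that $R\cos\eta$ and $R\sin\eta$ collapse to the stated closed forms, since this is where the intra- and inter-community fields must fuse into a single $\cos\psi$/$\sin\psi$ dependence. A minor additional point is to record that $R=0$ must be excluded so that $W(R)$ is evaluated on its domain $(0,\infty)$, as noted in the statement.
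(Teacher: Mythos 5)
Your proposal is correct: the harmonic-addition identity, the projections $R\cos\eta = K_k r_k + L_k r_{k'}\cos\psi$ and $R\sin\eta = L_k r_{k'}\sin\psi$, the parity argument, and the final substitution $V(R)=\tfrac12 R\,W(R)$ all check out, and your sign convention $\psi=\psi_k-\psi_{k'}$ is harmless since only $\cos\psi$ (even) and a vanishing product with $\sin\psi$ appear. The paper proves the same result by a technically different route built on the same underlying vector-addition idea: it keeps the exponent in Cartesian form $a_k\cos\theta+b_k\sin\theta$, evaluates the angular integrals via modified Bessel functions using the identity $\int_\S \eee^{a\cos\theta+b\sin\theta}\ddd\theta = 2\pi \I_0(\sqrt{a^2+b^2})$ (cited from Gradshteyn--Ryzhik) together with the derivative relation giving $\I_1$ (Abramowitz--Stegun), and then identifies $V=\I_1/\I_0$; the algebra computing $a_k\cos\psi_k+b_k\sin\psi_k$ and $a_k^2+b_k^2$ is exactly your computation of $R\cos\eta$ and $R^2$. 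Your polar form $R\cos(\gamma-\theta)$ plus the measure-preserving shift $\phi=\theta-\gamma$ and the vanishing of the odd $\sin\phi$ integrals lets you land directly on the integral definition of $V$, so your argument is more self-contained (no special-function tables needed); the paper's Bessel route has the side benefit of setting up machinery (e.g.\ $V=\I_1/\I_0$ and Bessel recurrences) that it reuses later when computing derivatives of $V$ in Lemma \ref{derV1}. Your handling of the degenerate case $R=0$ via the restriction $x\in(0,\infty)$ matches the statement and is, if anything, more explicit than the paper's.
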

The proof of Proposition \ref{prop:selfcons} is given in Appendix \ref{app:selfcons}. The number of possible solutions to the system of equations given in Proposition \ref{prop:selfcons} can be reduced significantly using the following theorem.

\begin{theorem}[Reduction of possible solutions]\ \\
\label{0r}
\vspace{-0.5cm}
\begin{enumerate}
\item Solutions of the form $r_1 = 0$ and $r_2 > 0$, or vice-versa, do not exist.
\item If $r_1, r_2 > 0$, then $\psi \in \{0, \pi \}$.
\end{enumerate}
\end{theorem}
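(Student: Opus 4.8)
The plan rests on a single structural observation about the quantity inside $W$ in both self-consistency equations. Writing
$$x_k := \sqrt{K_k^2 r_k^2 + L_k^2 r_{k'}^2 + 2 K_k L_k r_k r_{k'}\cos\psi},$$
I would first expand $|K_k r_k + L_k r_{k'}\eee^{\mi\psi}|^2$ and check that it equals $x_k^2$ exactly. This exhibits $K_k r_k + L_k r_{k'}\cos\psi$ as a real part and $L_k r_{k'}\sin\psi$ as an imaginary part, shows $x_k \ge 0$ always, and shows that $x_k = 0$ occurs only when both of these parts vanish. The second ingredient is the sign of $W$: from Lemma \ref{thm:W} I take $W(x) > 0$ for $x \in (0,\infty)$. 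Both claims then follow by feeding these facts into \eqref{eq:sc1} and \eqref{eq:sc3}.

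For part 1 I would argue by contradiction, taking $r_1 = 0$ and $r_2 > 0$ (the reversed case is identical under $1 \leftrightarrow 2$). Substituting $r_1 = 0$ collapses the argument to $x_1 = |L_1| r_2$, strictly positive since $L_1 \ne 0$ and $r_2 > 0$, so $W(x_1) > 0$. Equation \eqref{eq:sc1} at $k = 1$ then reduces to $0 = \tfrac12 L_1 r_2 \cos\psi\, W(x_1)$, forcing $\cos\psi = 0$, while \eqref{eq:sc3} at $k = 1$ reduces to $0 = L_1 r_2 \sin\psi\, W(x_1)$, forcing $\sin\psi = 0$. Since no angle has both $\sin$ and $\cos$ equal to zero, this is the required contradiction.

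For part 2 I would take $r_1, r_2 > 0$ and use \eqref{eq:sc3} at the single index $k = 1$. If $x_1 > 0$ then $W(x_1) > 0$, and \eqref{eq:sc3} reads $0 = L_1 r_2 \sin\psi\, W(x_1)$; since $L_1 \ne 0$ and $r_2 > 0$ this gives $\sin\psi = 0$. The only remaining possibility is $x_1 = 0$, but by the structural identity this already forces its imaginary part $L_1 r_2 \sin\psi$ to vanish, so $\sin\psi = 0$ again (equivalently, one may note that $W$ extends continuously to $W(0) = 1 > 0$ and read the same conclusion directly off the equation). In all cases $\sin\psi = 0$, i.e.\ $\psi \in \{0,\pi\}$.

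I do not expect a serious obstacle at the level of these deductions; the real work has been front-loaded into the two imported facts, namely the positivity of $W$ on $(0,\infty)$ (Lemma \ref{thm:W}, which is where the modified-Bessel-function analysis lives) and the modulus identity $x_k^2 = |K_k r_k + L_k r_{k'}\eee^{\mi\psi}|^2$. The only point demanding a moment's care is the degenerate configuration $x_k = 0$ in part 2, where the equation alone is vacuous and the conclusion must instead be extracted from the vanishing imaginary part.
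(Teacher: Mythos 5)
Your proposal is correct and takes essentially the same route as the paper's own proof: substitute into \eqref{eq:sc1}--\eqref{eq:sc3}, then use $L_k \neq 0$, $r_{k'} > 0$ and the positivity of $W$ to force $\cos\psi = 0$ and $\sin\psi = 0$ simultaneously in part 1 (impossible) and $\sin\psi = 0$ in part 2. Your explicit treatment of the degenerate case $x_k = 0$ via the modulus identity $x_k^2 = |K_k r_k + L_k r_{k'}\eee^{\mi\psi}|^2$ is a minor refinement that the paper's terser argument glosses over, but it does not constitute a different approach.
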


\begin{proof} (Claim 1) Suppose that $r_1 = 0$ and $r_2 > 0 $, then equations (\ref{eq:sc1})-(\ref{eq:sc3}) reduce to
\begin{align}
0	&= \frac{L_1 r_2 \cos \psi}{2} W(L_1 r_2),\label{eq:nr1}\\ 
r_2	&= \frac{K_2 r_2}{2} W(K_2 r_2),\\
0	&= L_1 r_2 \sin \psi W(L_1 r_2). \label{eq:nr2}
\end{align}
Note that equation \eqref{eq:sc3} with $k=2$ is always satisfied. Also, by assumption, $L_1 \neq 0$. In addition, by Lemma \ref{thm:W}, we have $W > 0$. In order to satisfy equations \eqref{eq:nr1} and \eqref{eq:nr2} simultaneously, it must hold that there exists some $\psi \in [0, 2\pi)$ such that $\sin \psi = 0$ and $\cos \psi = 0$, which is not possible. 

(Claim 2) We have that equation \eqref{eq:sc3} reduces to $\sin \psi = 0$, and therefore $\psi \in \{0, \pi\}$, since $L_1, L_2 \neq 0$ and $W > 0$.
\end{proof}

\begin{remark}[Reduced self-consistency equations]
If $\psi \in \{0, \pi\}$, then the self-consistency equations of Proposition \ref{prop:selfcons} reduce to 
\begin{align}
r_k	&= \frac{K_k r_k + L_k r_{k'} \cos \psi }{2} W( K_k r_k + L_k r_{k'} \cos \psi) \nonumber\\
&= V(K_k r_k + L_k r_{k'} \cos \psi), \label{eq:scp1}
\end{align}
The other two self-consistency equations , \eqref{eq:sc3}, are always satisfied when $\psi \in \{0, \pi\}$. Note that $\cos \psi \in \{-1,1\}$ for $\psi \in \{0, \pi\}$.
\end{remark}

It might seem surprising that the phase difference between the average phases of the two communities is restricted to take the values $0$ and $\pi$. Intuitively the phase difference changes the effective interaction strength between the two communities. From energetic considerations one might argue that the system would evolve to either maximize or minimize this interaction strength. This maximization or minimization is achieved precisely when the phase difference is $0$ or $\pi$.  

\section{Overview of the parameter space}
\label{sec:overview}
From this point onward we set $\psi=0$ and note that the analysis is repeatable for the case $\psi=\pi$ with the modification $L_{1}\rightarrow -L_{1}$ and $L_{2}\rightarrow -L_{2}$. The equations in Proposition \ref{prop:selfcons} may have multiple solutions. The number of solutions varies with the model parameters, so that it is possible to delineate the regions in the parameter space with the same number of solutions. With this goal in mind, we split the parameter space into regions based on a geometric interpretation of the self-consistency equations.

\subsection{Level curves}
To visualize the self-consistency equations we define the following functions 
\begin{definition}[Self-consistency surfaces]
\label{def:hsurfaces}
Let $h_1^{K_1, L_1}$ and $h_2^{K_2, L_2}:[0,1]^2 \to \R$ be given by
\begin{align}
h_1^{K_1, L_1}(r_1, r_2) &:= V(K_1 r_1 + L_1  r_2) - r_1,\\
h_2^{K_2, L_2}(r_1, r_2) &:= V(K_2 r_2 + L_2  r_1) - r_2. 
\end{align}
Additionally, denote by $h_1^{K_1, L_1, +}$ the non-negative part of $h^{K_1,L_1}_1$ and by $h_2^{K_2, L_2, +}$ the non-negative part of $h^{K_2,L_2}_2$.
\end{definition}

When the context is clear, we omit the $K_1, L_1$ and $K_2, L_2$ in the notation and write $h_1 = h_1^{K_1, L_1}$ and $h_2 = h_2^{K_2, L_2}$.

In order for a pair of synchronization levels $(r_{1}, r_{2})$ to satisfy the self-consistency equations \eqref{eq:scp1} (for a set of parameter values $K_{1}$, $K_{2}$, $L_{1}$ and $L_{2}$) it must be a point such that these surfaces intersect one another as well as the zero plane. We can thus restrict ourselves to the curves defined by the intersection of the self-consistency surfaces with the zero plane defined by
\begin{definition}[Self-consistency intersection curves]
\label{def:sccurves}
\begin{align}
\Gamma_1^{K_1, L_1} &:= \left\{ (r_1, r_2) \in [0,1]^2 : h_1^{K_1, L_1}(r_1, r_2) = 0 \right\},\\
\Gamma_2^{K_2, L_2} &:= \left\{ (r_1, r_2) \in [0,1]^2 : h_2^{K_2, L_2}(r_1, r_2) = 0 \right\}.
\end{align}
\end{definition}
Again, when the context is clear, we write $\Gamma_1 = \Gamma_1^{K_1, L_1}$, $\Gamma_2 = \Gamma_2^{K_2, L_2}$. Note that $\Gamma_1^{K_1, L_1} \cap \Gamma_2^{K_2, L_2}$ is the set of solutions of the self-consistency equations \eqref{eq:scp1}. Hence the solutions to the self-consistency equation are precisely the intersections of the self-consistency curves $\Gamma_1$, $\Gamma_2$.
Clearly $(0,0) \in \Gamma_1^{K_1, L_1}, \Gamma_2^{K_2, L_2}$ for all $K_1,K_2 \in \R$ and $L_1, L_2 \in \R\setminus \{0\}$,  since $(0, 0)$ always solves the self-consistency equations. A visualization of the curves defined in Definition \ref{def:sccurves} in various regions is shown in Figure \ref{numregov}.

\begin{figure*}
 \centering
 	% row 1
 	\begin{subfigure}{0.3\textwidth}
        \centering
        \includegraphics[scale = 0.43]{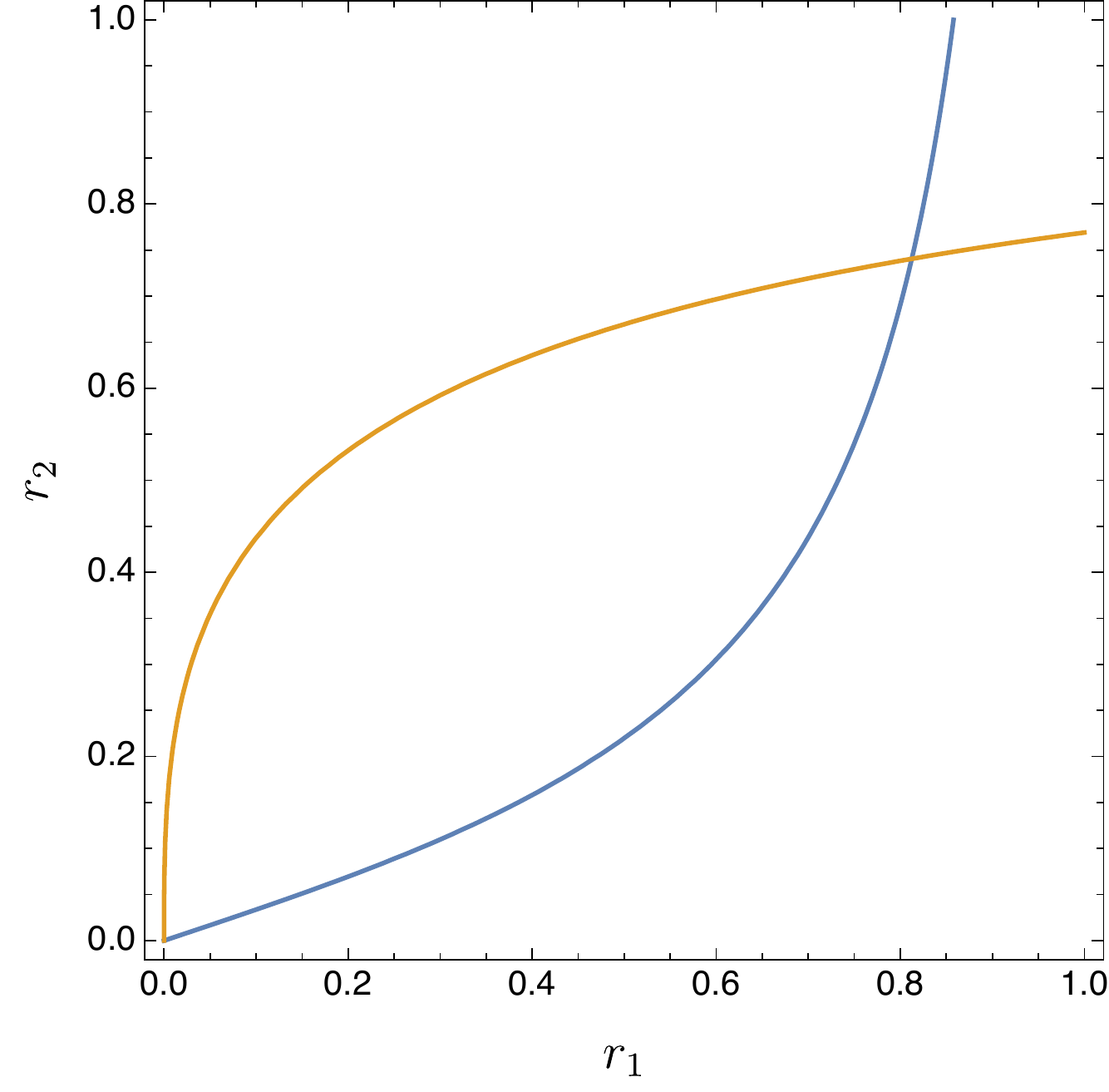}
        \caption{{Region 2: $K_1 = 1, K_2 = 2$ and} ${L_1 = 3}, L_2 = 1.$}
    \end{subfigure}\hfill
    \begin{subfigure}{0.3\textwidth}
        \centering
        \includegraphics[scale = 0.43]{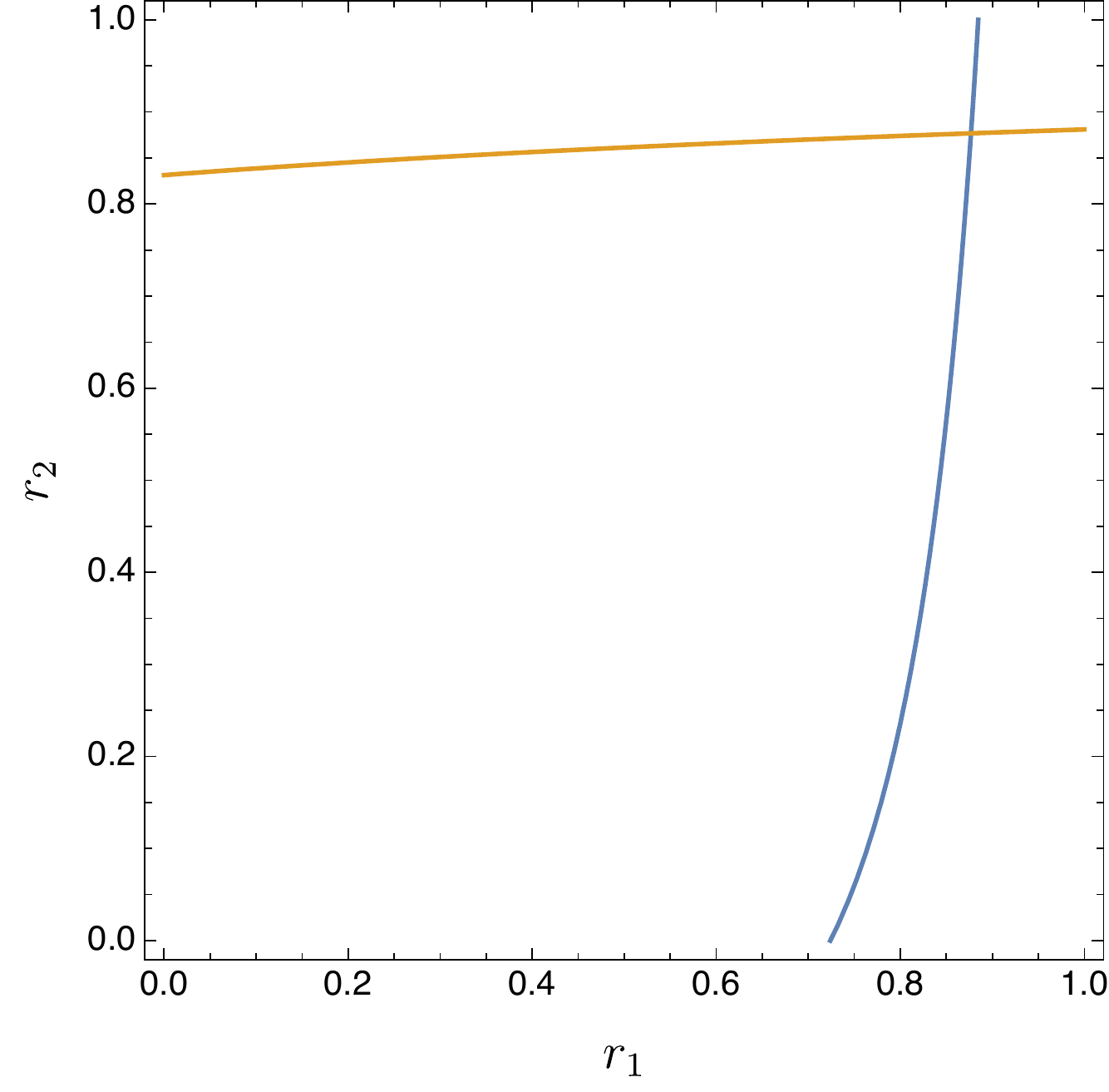}
        \caption{Region 3: $K_1 = 3, K_2 = 4$ and ${L_1 = 2}, L_2 = 1.$}
    \end{subfigure}\hfill
    \begin{subfigure}{0.3\textwidth}
        \centering
        \includegraphics[scale = 0.43]{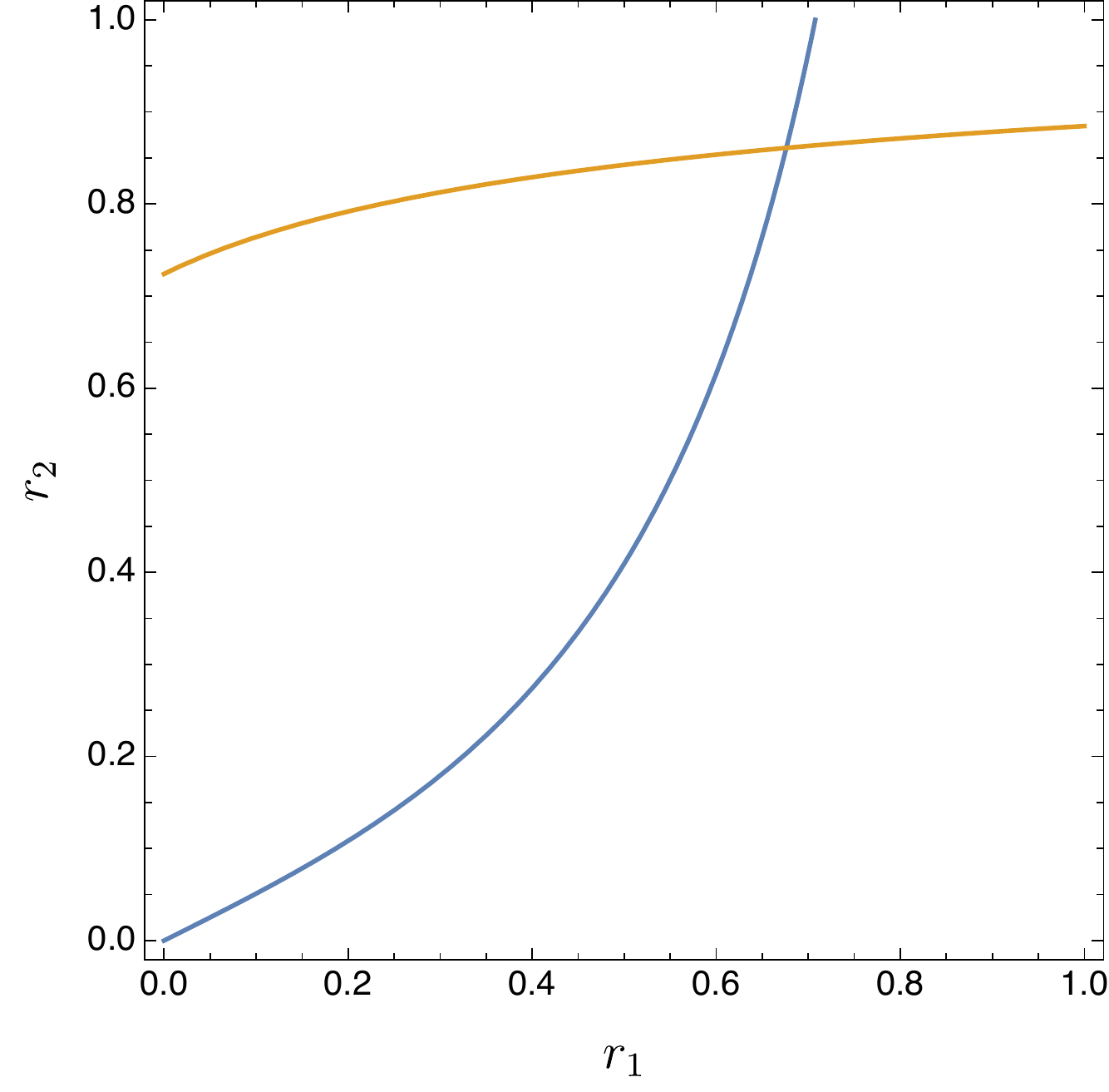}
        \caption{Region 4: $K_1 = 1.5, K_2 = 3$ and ${L_1 = 1}, {L_2 = 2.}$}
    \end{subfigure}\vspace{2 mm}
    
    % row 2
    \begin{subfigure}{0.3\textwidth}
        \centering
        \includegraphics[scale = 0.43]{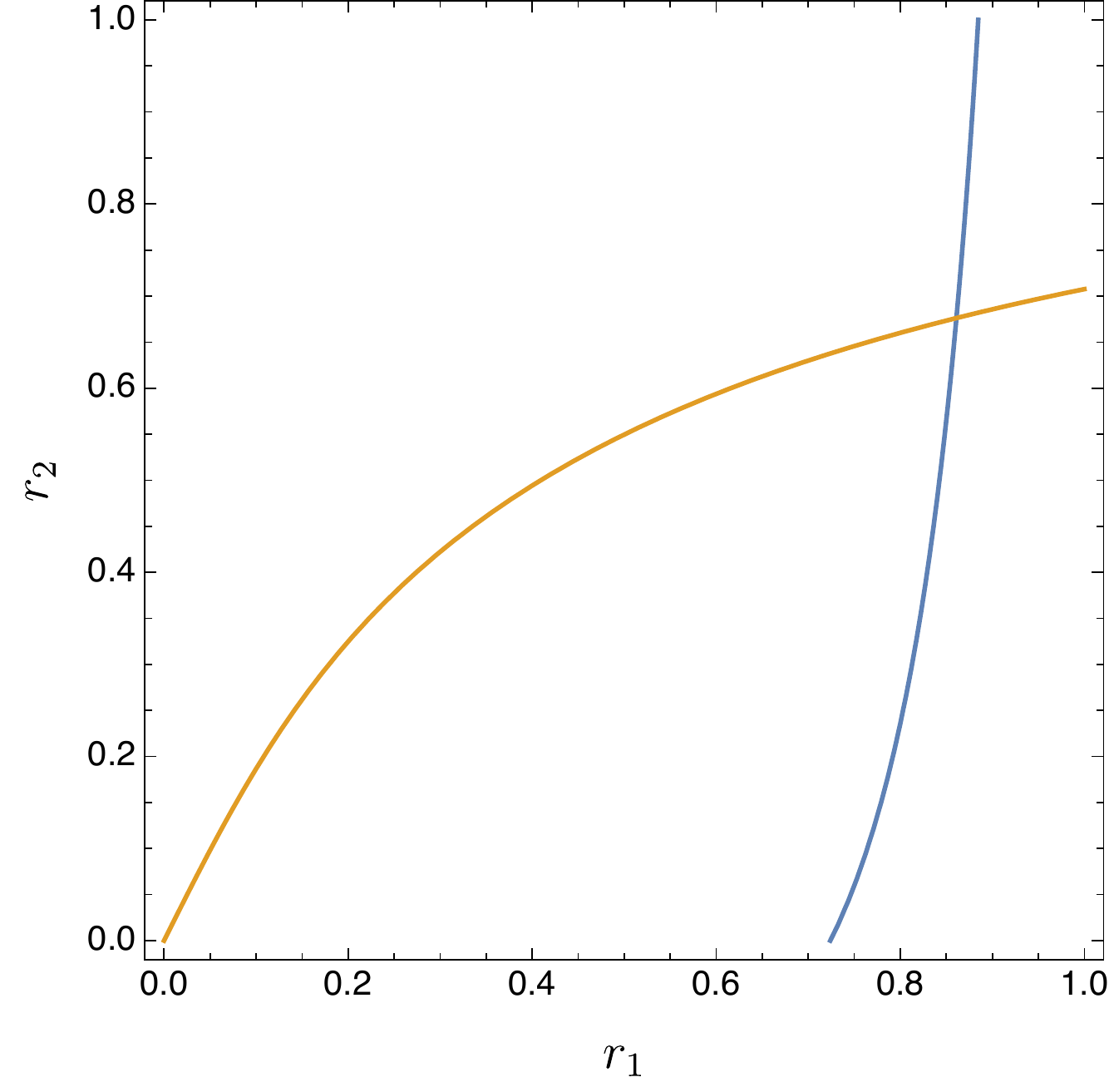}
        \caption{Region 5: $K_1 = 3, K_2 = 1.5$ and ${L_1 = 2, L_2 = 1.}$}
    \end{subfigure}\hfill
    \begin{subfigure}{0.3\textwidth}
        \centering
        \includegraphics[scale = 0.43]{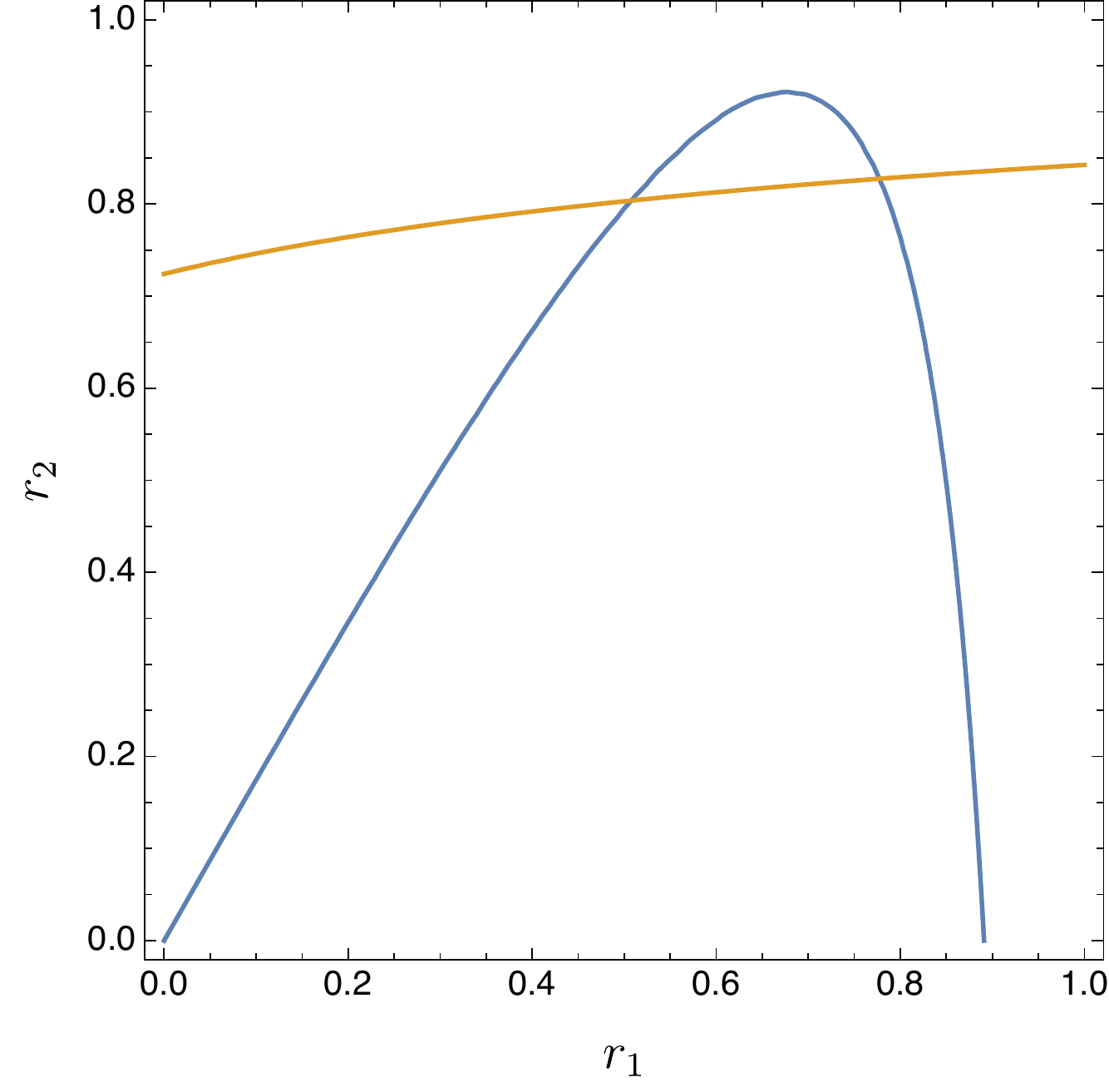}
        \caption{Region 6: $K_1 = 5.5, K_2 = 3$ and ${L_1 = -2, L_2 = 1.}$}
    \end{subfigure}\hfill
    \begin{subfigure}{0.3\textwidth}
        \centering
        \includegraphics[scale = 0.43]{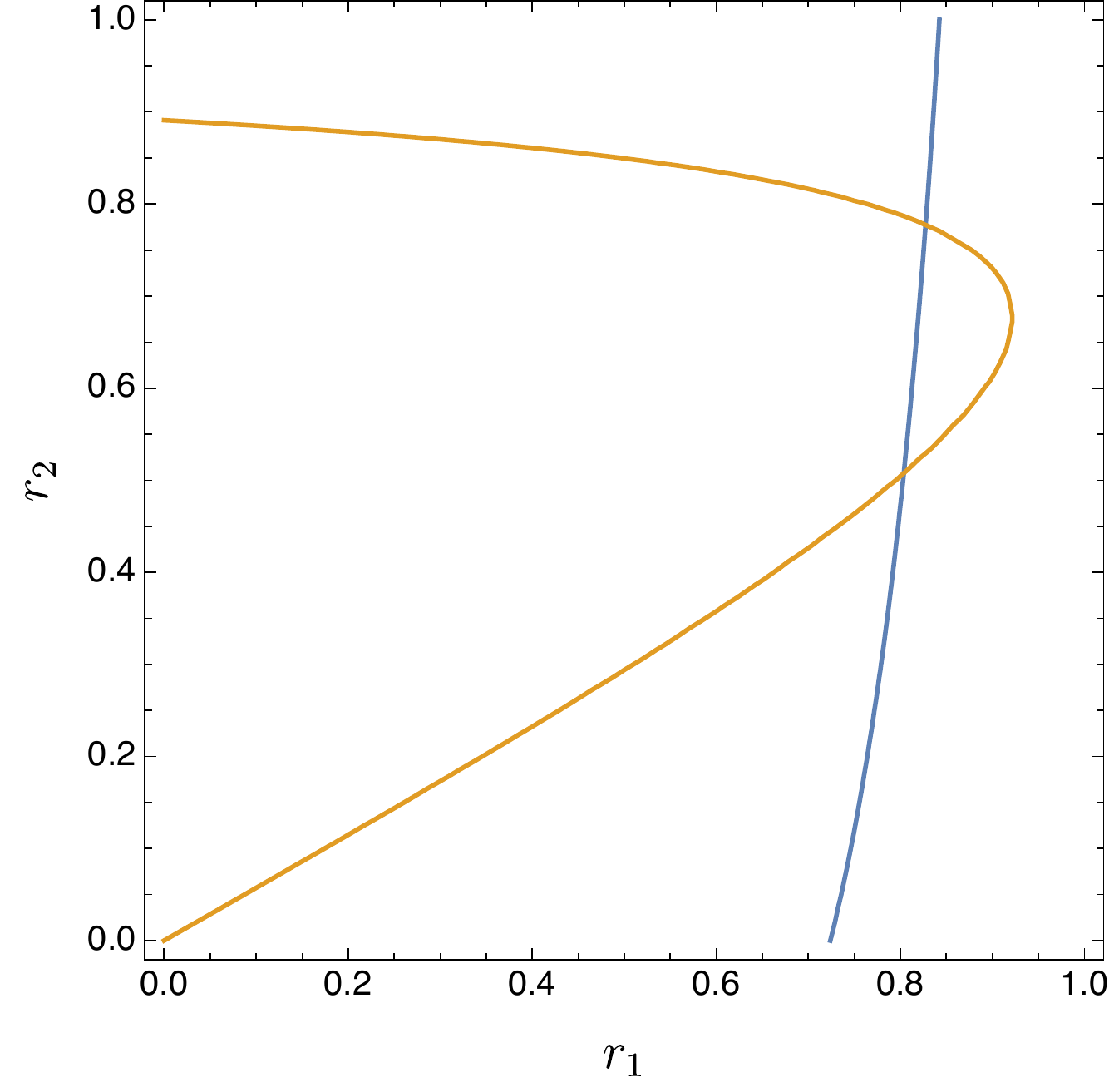}
        \caption{Region 7: $K_1 = 3, K_2 = 5.5$ and ${L_1 = 1, L_2 = -2.}$}
    \end{subfigure}\vspace{2 mm}
    
    % row 3
    \begin{subfigure}{0.3\textwidth}
        \centering
        \includegraphics[scale = 0.43]{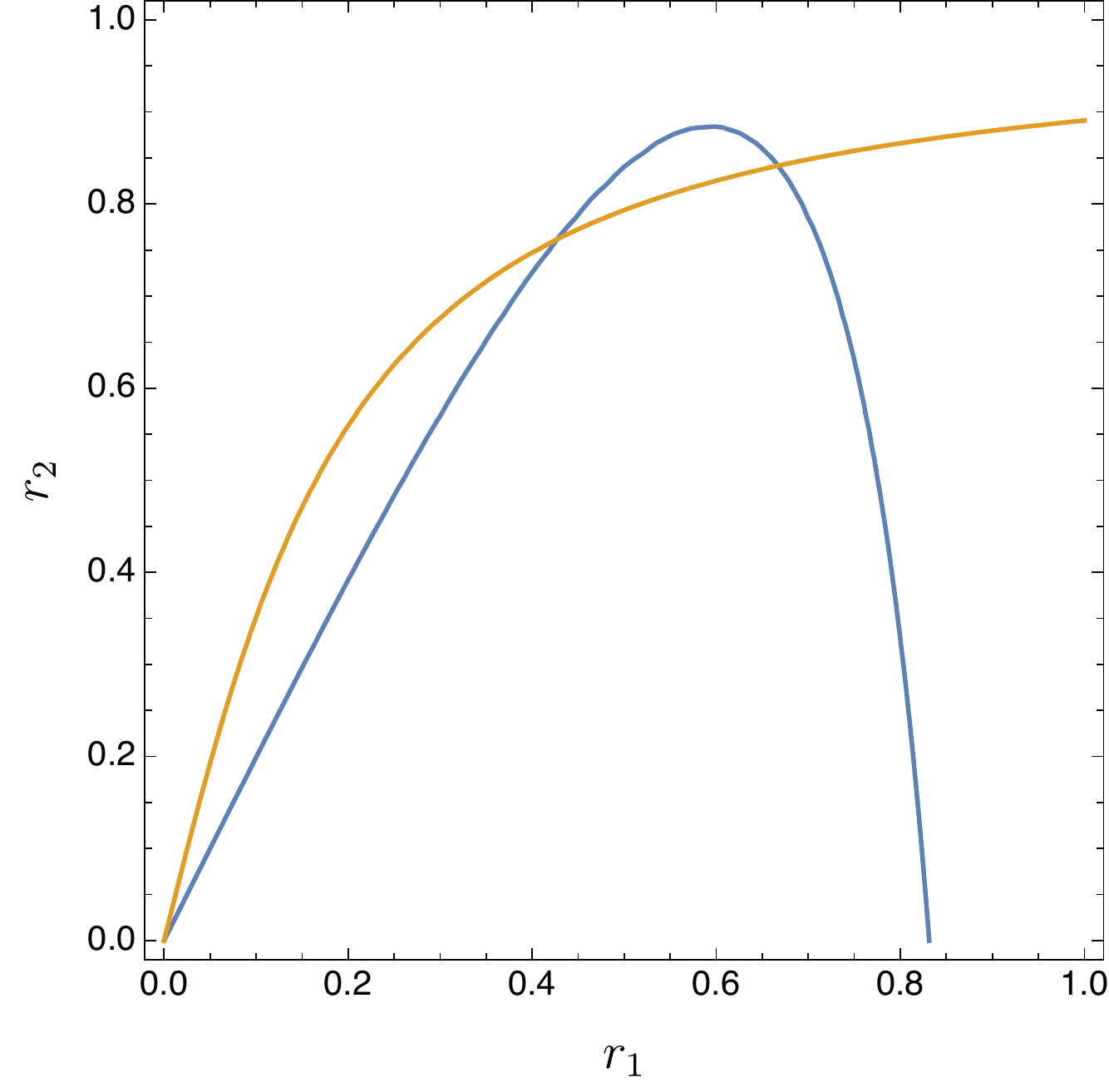}
        \caption{Region 8: $K_1 = 4, K_2 = 1$ and ${L_1 = -1, L_2 = 4}.$}
    \end{subfigure}\hfill
    \begin{subfigure}{0.3\textwidth}
        \centering
        \includegraphics[scale = 0.43]{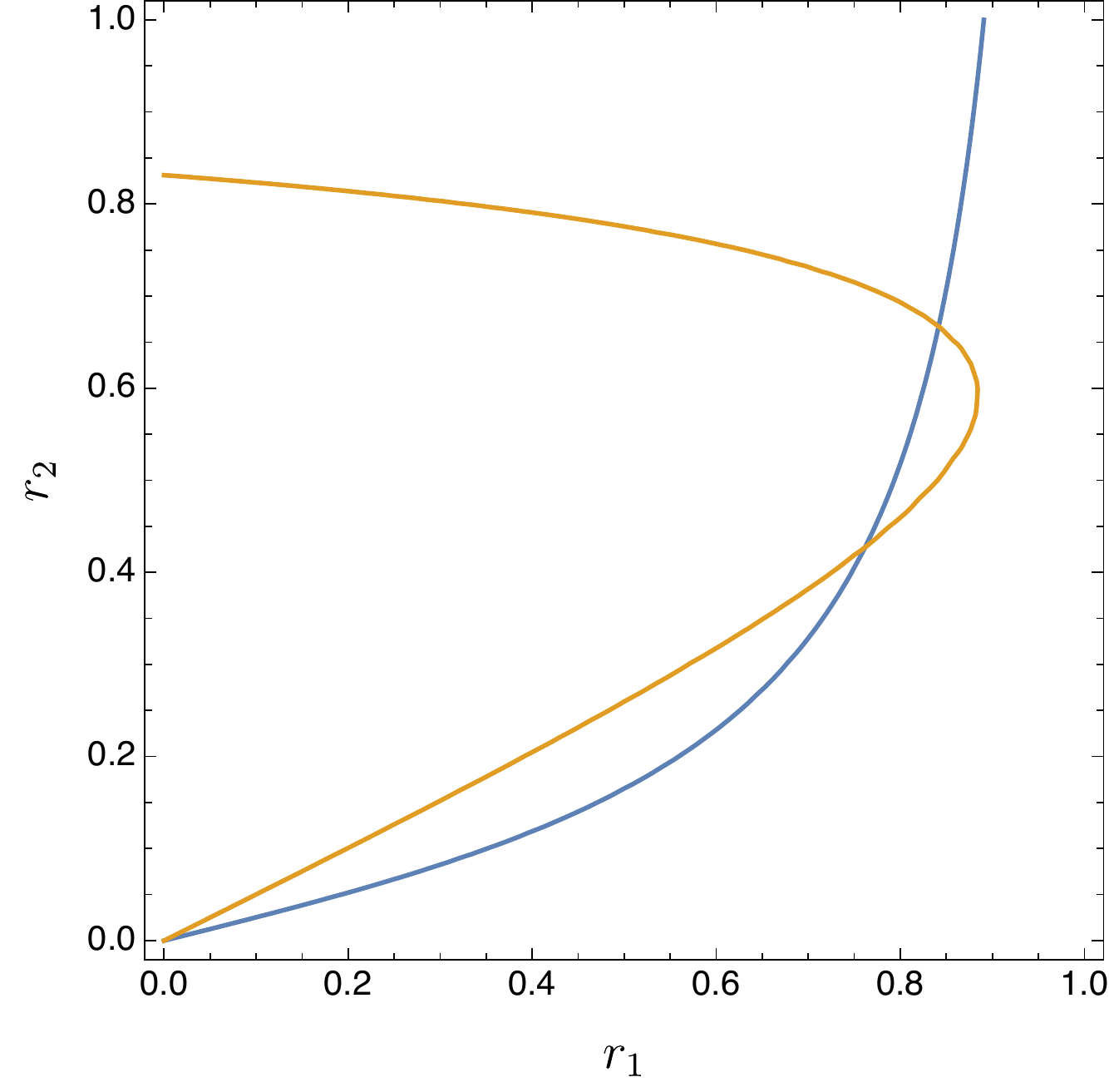}
        \caption{Region 9: $K_1 = 1, K_2 = 4$ and ${L_1 = 4, L_2 = -1}.$}
    \end{subfigure}\hfill
    \begin{subfigure}{0.3\textwidth}
        \centering
        \includegraphics[scale = 0.43]{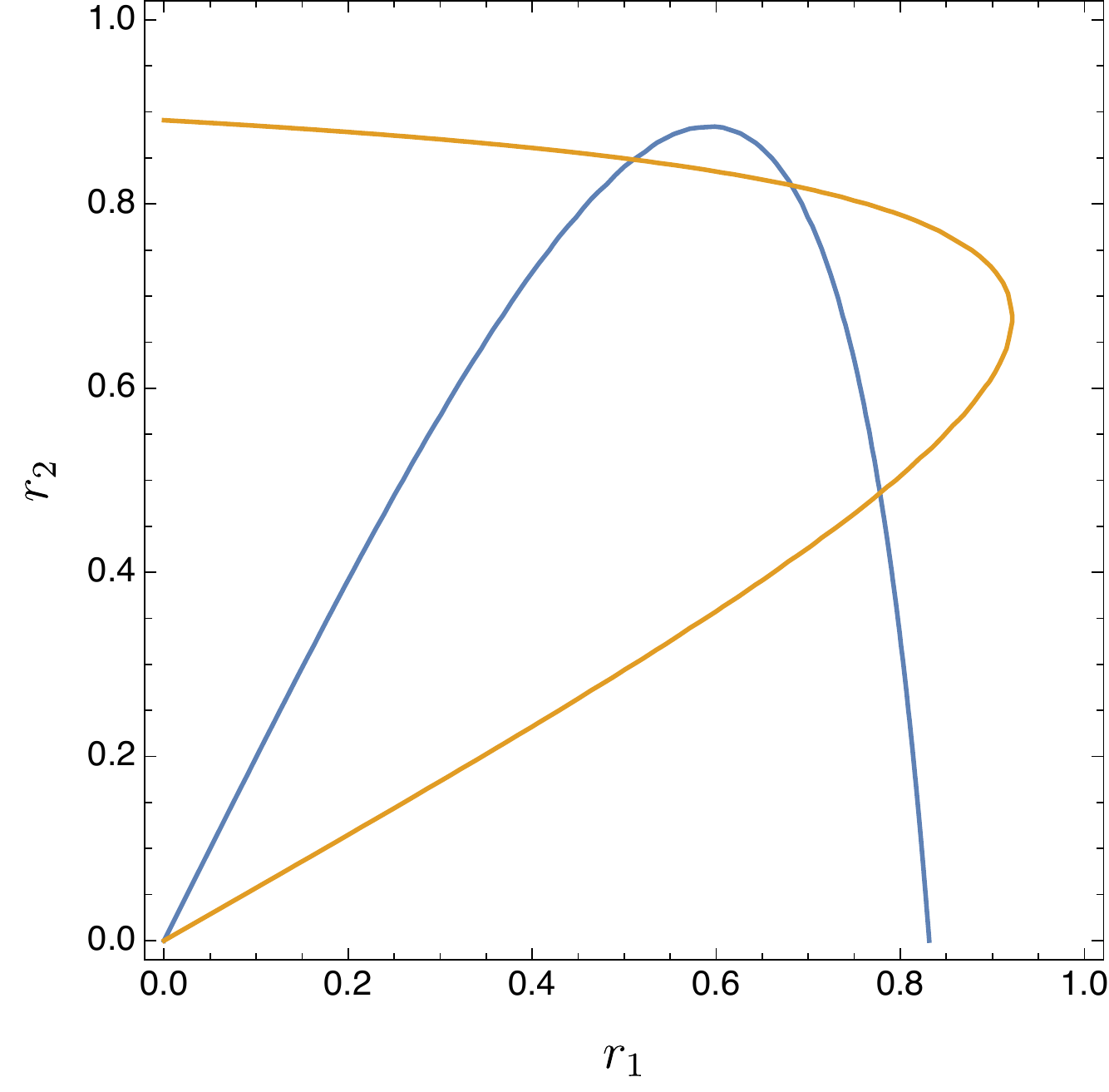}
        \caption{Region 10: $K_1 = 4, K_2 = 5.5$ and ${L_1 = -1, L_2 = -2.}$}
    \end{subfigure}
    
    \caption{For each of the regions 2-10 of Table \ref{fig:overview} a numerical example where the maximum number of intersections between $\Gamma_1$ and $\Gamma_2$ is shown. Note that region 1 is not displayed in this figure because in this case either $\Gamma_1$ or $\Gamma_2$ is trivial.}\label{numregov}
\end{figure*}

The self-consistency surfaces have different properties in different domains of the parameter space. This allows us to partition the parameter space into regions that are easier to analyze. The following theorem identifies a region of the parameter space in which only the unsynchronized solution is possible.

\begin{theorem}
\label{thm:hneg}
$h_1^{K_1, L_1}(r_1, r_2) < 0$ for all $(r_1,r_2) \in [0,1]^2\setminus \{(0,0)\}$ if and only if $K_1 \leq 2$ and $L_1 < 0$.
\end{theorem}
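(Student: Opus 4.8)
The plan is to prove the two implications separately, relying throughout on the qualitative behaviour of $V$ recorded in Lemma \ref{lV}: that $V$ is odd (so $V(0)=0$, $V(x)>0$ for $x>0$, and $V(x)<0$ for $x<0$), that $V$ is strictly increasing on all of $\R$, that it is continuously differentiable with $V'(0)=\tfrac12$, and that $V(x)<\tfrac{x}{2}$ for every $x>0$ (a consequence of $V(0)=0$ together with the strict decrease of $V'$ on $(0,\infty)$). Every estimate reduces to comparing the argument $x:=K_1 r_1 + L_1 r_2$ of $V$ against $2r_1$.

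For \emph{sufficiency} I would assume $K_1\le 2$ and $L_1<0$ and split according to whether $r_1=0$. If $r_1=0$ and $r_2>0$, then $x=L_1 r_2<0$, so $h_1(0,r_2)=V(L_1 r_2)<0=r_1$. If instead $r_1>0$, the sign condition $L_1<0$ gives $L_1 r_2\le 0$, hence $x\le K_1 r_1\le 2r_1$; monotonicity of $V$ then yields $V(x)\le V(2r_1)$, and the strict bound $V(2r_1)<\tfrac{2r_1}{2}=r_1$ (valid since $2r_1>0$) delivers $h_1(r_1,r_2)=V(x)-r_1<0$. The borderline value $K_1=2$ causes no difficulty, since the strictness comes entirely from $V(y)<y/2$ and not from the comparison $x\le 2r_1$.

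For \emph{necessity} I would argue by contraposition: if $(K_1,L_1)$ violates $K_1\le 2$ and $L_1<0$, I exhibit a point of $[0,1]^2\setminus\{(0,0)\}$ at which $h_1\ge 0$. Because $L_1\ne 0$, the negation is ``$K_1>2$ or $L_1>0$''. If $L_1>0$, take $r_1=0$ and any $r_2\in(0,1]$; then $x=L_1 r_2>0$ forces $h_1(0,r_2)=V(L_1 r_2)>0$. If $K_1>2$, set $r_2=0$ and study $g(s):=h_1(s,0)=V(K_1 s)-s$, for which $g(0)=0$ and $g'(0)=K_1 V'(0)-1=\tfrac{K_1}{2}-1>0$; by continuity $g(s)>0$ for all sufficiently small $s>0$, giving a point $(s,0)$ with $h_1>0$.

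There is no substantial obstacle here; once the behaviour of $V$ is granted, the statement is essentially bookkeeping. The only points requiring care are (i) verifying that the chain $V(x)\le V(2r_1)<r_1$ stays strict at the threshold $K_1=2$, and (ii) noting that the counterexamples in the necessity direction sit on the boundary $r_1=0$ or $r_2=0$ of the square, so it is essential that the theorem quantifies over the full set $[0,1]^2\setminus\{(0,0)\}$ and not merely its interior. I would also remark that these same estimates, combined with oddness of $V$, make transparent why a positive cross-coupling $L_1>0$ can never yield $h_1<0$ throughout the square.
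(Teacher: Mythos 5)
Your proof is correct and follows essentially the same route as the paper's: necessity is obtained by testing $h_1$ on the two boundary edges $(r,0)$ and $(0,r)$ of the square, and sufficiency rests on the bound $K_1 r_1 + L_1 r_2 \le 2r_1$ combined with Property 6 of Lemma \ref{lV}, namely $V(x) < \tfrac{x}{2}$ for $x>0$. The only differences are organizational: you argue sufficiency directly via monotonicity of $V$ (splitting off the case $r_1 = 0$), where the paper routes the same two ingredients through a small contradiction lemma, and you make explicit the derivative computation $g'(0) = \tfrac{K_1}{2}-1$ that the paper leaves implicit when concluding $K_1 \le 2$.
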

The proof of Theorem \ref{thm:hneg} is given in Appendix \ref{app:proofhneg}. 

Our next result presents properties of the derivatives of the self-consistency surfaces. In order to simplify the notation, we abbreviate the derivatives of $V(\cdot)$ evaluated at various points as 
\begin{align}
\label{eq:Cderivatives1}
C_{1,1} &= V'(K_1 r_1 + L_1 r_2),\quad C_{1,2} = V''(K_1 r_1 + L_1 r_2),\\
C_{2,1} &= V'(K_2 r_2 + L_2 r_1),\quad C_{2,2} = V''(K_2 r_2 + L_2 r_1).
\label{eq:Cderivatives2}
\end{align}

\begin{lemma}[Derivatives of $h_1$]
\label{lem:derivativesofh1}
Take $r_1, r_2 \neq 0$, then
\begin{enumerate}
\item $\frac{\p^2 h_1}{\p r_1^2} < 0$ and $\frac{\p^2 h_1}{\p r_2^2} < 0$,
\item $\frac{\p h_1}{\p r_2} > 0$ if and only if $L_1  > 0$,
\item $\frac{\p h_1}{\p r_1}(0,0) > 0$ if and only if $K_1 > 2$.
\end{enumerate}
\end{lemma}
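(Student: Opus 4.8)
The plan is to reduce all three claims to the chain-rule expressions for the partial derivatives of $h_1$, together with the analytic properties of $V$ collected in Lemma \ref{lV}. Writing $u := K_1 r_1 + L_1 r_2$, so that $h_1(r_1,r_2) = V(u) - r_1$, the chain rule gives at once
\begin{align}
\frac{\p h_1}{\p r_1} &= K_1 V'(u) - 1, & \frac{\p h_1}{\p r_2} &= L_1 V'(u), \nonumber\\
\frac{\p^2 h_1}{\p r_1^2} &= K_1^2 V''(u), & \frac{\p^2 h_1}{\p r_2^2} &= L_1^2 V''(u). \nonumber
\end{align}
Each of the three statements then becomes a statement about the sign behaviour of $V'$ and $V''$, since $K_1 > 0$ and $L_1 \neq 0$ force $K_1^2, L_1^2 > 0$.

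For claim (2) I would argue that $V'(u) > 0$ for every argument (strict monotonicity of $V$), so the prefactor $L_1$ alone controls the sign, giving $\p h_1/\p r_2 > 0$ iff $L_1 > 0$. For claim (3) I would simply evaluate at the origin: $u(0,0)=0$, and using the small-argument slope $V'(0) = \tfrac12$ yields $\p h_1/\p r_1(0,0) = \tfrac{K_1}{2} - 1$, which is positive iff $K_1 > 2$. Both are one-line consequences once the relevant property of $V$ is in hand, so these are the routine parts.

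Claim (1) carries the real content, as it is equivalent to $V''(u) < 0$. The cleanest route I know is the exponential-family (von Mises) interpretation: with $Z(x) = \int_\S \eee^{x \cos\theta}\,\ddd\theta$ one has $V = (\log Z)'$, whence $V' = (\log Z)''$ and $V'' = (\log Z)'''$ are the second and third cumulants of $\cos\Theta$ under the density proportional to $\eee^{x\cos\theta}$. Thus $V'(x) = \mathrm{Var}_x(\cos\Theta) > 0$ (which also supplies the monotonicity needed above), while $V''(x)$ is the third central moment of $\cos\Theta$; for $x > 0$ this distribution concentrates near the maximiser $\theta = 0$ of $\cos$ and is negatively skewed, so $V''(x) < 0$. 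This is exactly the estimate Lemma \ref{lV} is designed to provide, equivalently phrased as concavity of the Bessel-function ratio $I_1/I_0$.

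The step I expect to be the main obstacle is precisely the sign of $V''$, together with an associated subtlety in the range of the argument: $V$ is odd, so $V''(x) > 0$ for $x < 0$, and $u = K_1 r_1 + L_1 r_2$ can be negative when $L_1 < 0$. Making claim (1) fully rigorous therefore requires either restricting to the regime $u \geq 0$ in which the concavity of $V$ is available, or invoking the precise form of Lemma \ref{lV} that pins down where $V'' < 0$. Quantifying the third central moment (equivalently, controlling $I_1/I_0$) is thus the technical heart of the argument, while everything else is bookkeeping around the four displayed derivatives.
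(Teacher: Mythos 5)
Your four chain-rule identities and the way you dispatch claims (2) and (3) are exactly the paper's proof: it computes the same partials, cites $V'>0$ (Property 4 of Lemma \ref{lV}) for claim (2), $V''<0$ (Property 5) for claim (1), and $V'(0)=\tfrac{1}{2}$ for claim (3).

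The point you raise about negative arguments, however, is a genuine catch and worth spelling out, because the paper's proof silently glosses over it. Since $V$ is odd (Property 7 of Lemma \ref{lV}), $V''$ is odd as well, so Property 5 as stated (``$V''<0$ everywhere except at zero'') cannot hold literally; the correct statement is $V''<0$ on $(0,\infty)$ and, by oddness, $V''>0$ on $(-\infty,0)$. Consequently claim (1) fails as stated wherever $u=K_1r_1+L_1r_2<0$, which does occur when $L_1<0$: for instance $K_1=1$, $L_1=-10$, $r_1=r_2=\tfrac{1}{2}$ gives $u=-4.5$ and $\p^2 h_1/\p r_1^2=K_1^2V''(u)>0$. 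Your proposed repair --- restrict to the regime $u>0$ --- is the right one, and it costs nothing downstream: the concavity of $h_1$ is only ever exploited on the curve $\Gamma_1$ (curvature of $\Gamma_1$, Theorem \ref{thm:propfundcurves}), and on $\Gamma_1\setminus\{(0,0)\}$ one has $r_1=V(u)>0$, which forces $u>0$. One caution on your alternative justification of the key estimate: identifying $V'$ and $V''$ with the second and third cumulants of $\cos\Theta$ under the tilted (von Mises) density is correct and gives $V'>0$ rigorously, but ``the distribution is negatively skewed for $x>0$'' is an assertion, not a proof; making it rigorous is no easier than the Bessel-identity route behind Lemma \ref{derV1}, so that paragraph should be treated as heuristic support while the actual inequality is taken from Lemma \ref{lV}, correctly restricted to positive arguments.
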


\begin{proof}
Taking partial derivatives of $h_{1}$ gives:
\begin{align}
\frac{\p h_1}{\p r_1}(r_1, r_2)&= K_1 C_{1,1} - 1,\quad &\frac{\p^2 h_1}{\p r_1^2}(r_1, r_2) = K_1^2 C_{1, 2},\\
\frac{\p h_1}{\p r_2}(r_1, r_2) &= L_1  C_{1, 1},\quad &\frac{\p^2 h_1}{\p r_2^2}(r_1, r_2) = L_1^2 C_{1,2}.
\end{align}
Since $V'' < 0$ (Property 5 of Lemma \ref{lV}) we clearly have $\frac{\p^2 h_1}{\p r_1^2} < 0$ and $\frac{\p^2 h_1}{\p r_2^2} < 0$. In addition, $\frac{\p h_1}{\p r_2} > 0$ if and only if $L_1 > 0$ because $V' > 0$ (Property 4 of Lemma \ref{lV}). Furthermore
\begin{equation}
\frac{\p h_1}{\p r_1}(0,0) = \frac{K_1}{2} - 1,
\end{equation}
which is positive if and only if $K_1 > 2$.
\end{proof}

\begin{remark}
Note that $\frac{\p^2 h_1}{\p r_1^2}(r_1, r_2)  < 0$ and $\frac{\p^2 h_1}{\p r_2^2}(r_1, r_2) < 0$ for all $K_1 \in \R$, $L_1 \in \R\setminus \{0 \}$ and $(r_1, r_2) \neq (0,0)$, so that $h_1$ is strictly concave.
\end{remark}

Theorem \ref{thm:hneg} and Lemma \ref{lem:derivativesofh1} show that there are three domains in which the behavior of $h_1^+$ differs. By Theorem \ref{thm:hneg}, we have that $h_1^+ = 0$ if $K_1 < 2$ and $L_1 < 0$, and that $h_1^+ \neq 0$ in the complement of this region. Using Lemma \ref{lem:derivativesofh1} we partition the complement into three domains.

\begin{definition}[Curve domains]\ \\
\label{def:curvedomains}
\vspace{-0.5cm}
\begin{enumerate}
\item $K_1 \leq 2$ and $L_1  > 0$,
\item $K_1 > 2$ and $L_1 > 0$,
\item $K_1 > 2$ and $L_1 < 0$.
\end{enumerate}
\end{definition}
Each domain corresponds to a curve for $\Gamma_1^{K_1, L_1}$ with different characteristics. We refer to the three possible curves as the \emph{fundamental curves}. The same splitting of the parameter space can be done when considering $h_2^{K_2, L_2}(r_1, r_2)$. Analyzing the number of solutions for $(r_{1},r_{2})$ that are possible in a region of the parameter space resulting from the combination of two domains (one for $K_{1}$ and $L_{1}$ and one for $K_{2}$ and $L_{2}$) is equivalent to analyzing the number of intersections possible between the two fundamental curves corresponding to the domains.

Based on Theorem \ref{thm:hneg} and Lemma \ref{lem:derivativesofh1} we can derive the defining properties of the fundamental curves. By implicit differentiation we determine the derivatives along the curve $\Gamma_1^{K_1, L_1}$. Denote by $\frac{\p \Gamma_1}{\p r_1}$ the derivative $\dot{r_2}(r_1)$ along the curve $\Gamma_1$.

\begin{lemma}[Derivatives of $\Gamma_1$ and $\Gamma_2$]\label{lem:pderivatives} The derivatives on $\Gamma_1$ and $\Gamma_2$ with respect to $r_1$ are
\begin{enumerate}
\item $\frac{\p \Gamma_1}{\p r_1} = \frac{1 - K_1 C_{1,1}}{L_1 C_{1,1}},$
\item $\frac{\p^2 \Gamma_1}{\p r_1^2} = - \frac{C_{1,2}}{L_1 C_{1,1}^3},$
\item $\frac{\p \Gamma_2}{\p r_1} = \frac{L_2 C_{2,1}}{1 - K_2 C_{2,1}},$
\item $\frac{\p^2 \Gamma_2}{\p r_1^2} = \frac{L_2^2 C_{2,2}}{(1 - K_2 C_{2,1})^3},$
\end{enumerate}
where $C_{1,1}, C_{1,2}, C_{2,1}$ and $C_{2,2}$ are given in \eqref{eq:Cderivatives1} and \eqref{eq:Cderivatives2}.
\end{lemma}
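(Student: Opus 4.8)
The plan is to prove all four identities by implicit differentiation, exploiting the fact that each curve is a level set of a composite of $V$ with an affine function of $(r_1,r_2)$. On $\Gamma_1$ the defining relation is $V(K_1 r_1 + L_1 r_2) = r_1$, while on $\Gamma_2$ it is $V(K_2 r_2 + L_2 r_1) = r_2$; the two are handled by the same two-step recipe, but with a structural asymmetry that I would flag at the outset: on $\Gamma_1$ the variable $r_1$ appears both inside and outside $V$, whereas on $\Gamma_2$ that dual role is played by $r_2$, and this is exactly what produces the different placement of the $1 - K_2 C_{2,1}$ factor in claims 3 and 4.

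First I would justify that each curve is locally a graph $r_2 = r_2(r_1)$, so that $\frac{\p\Gamma_1}{\p r_1} = \dot r_2$ and $\frac{\p^2\Gamma_1}{\p r_1^2} = \ddot r_2$ are well defined. For $\Gamma_1$ the implicit function theorem applies because $\frac{\p h_1}{\p r_2} = L_1 C_{1,1} \neq 0$, using $L_1 \neq 0$ together with $C_{1,1} = V'(K_1 r_1 + L_1 r_2) > 0$ (Property 4 of Lemma \ref{lV}, valid since $r_1, r_2 \neq 0$). For $\Gamma_2$ the analogous quantity is $\frac{\p h_2}{\p r_2} = K_2 C_{2,1} - 1$, which is precisely the denominator appearing in claims 3 and 4; hence those formulas are valid exactly on the set where $1 - K_2 C_{2,1} \neq 0$ and the parametrization exists.

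For claim 1 I would differentiate $V(K_1 r_1 + L_1 r_2) = r_1$ with respect to $r_1$, writing $u = K_1 r_1 + L_1 r_2$ so that $\dot u = K_1 + L_1 \dot r_2$, obtaining $C_{1,1}(K_1 + L_1 \dot r_2) = 1$, which rearranges to the claimed expression. The key byproduct to record for reuse is $\dot u = K_1 + L_1 \dot r_2 = 1/C_{1,1}$. For claim 2 I differentiate once more, getting $C_{1,2}\,\dot u^2 + C_{1,1} L_1 \ddot r_2 = 0$; substituting $\dot u = 1/C_{1,1}$ collapses this immediately to $\ddot r_2 = -C_{1,2}/(L_1 C_{1,1}^3)$. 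The same procedure on $\Gamma_2$, with $w = K_2 r_2 + L_2 r_1$ and $\dot w = K_2 \dot r_2 + L_2$, gives $C_{2,1}(K_2 \dot r_2 + L_2) = \dot r_2$ (claim 3) together with the convenient closed form $\dot w = L_2/(1 - K_2 C_{2,1})$; a second differentiation yields $C_{2,2}\,\dot w^2 = (1 - K_2 C_{2,1})\,\ddot r_2$, and inserting $\dot w$ produces claim 4.

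I do not expect a genuine obstacle here, as the computation is a routine application of the chain rule. The only point requiring judgment is the choice to differentiate the compact relations $V(u) = r_1$ and $V(w) = r_2$ directly, rather than invoking the generic implicit second-derivative formula expressed through $h_1, h_2$ and their first and second partials; the latter route is algebraically much heavier. It is precisely the substitutions $\dot u = 1/C_{1,1}$ and the closed form for $\dot w$ that make the second derivatives simplify so cleanly into the stated single-fraction expressions.
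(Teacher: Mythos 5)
Your proposal is correct and follows essentially the same route as the paper: implicit differentiation of $V(K_1 r_1 + L_1 r_2(r_1)) = r_1$ and $V(K_2 r_2(r_1) + L_2 r_1) = r_2(r_1)$, with the same key substitutions $K_1 + L_1 \dot r_2 = 1/C_{1,1}$ and $K_2 \dot r_2 + L_2 = L_2/(1 - K_2 C_{2,1})$ collapsing the second-derivative equations. Your added remarks on the implicit function theorem and on the nonvanishing of $1 - K_2 C_{2,1}$ are sensible refinements of the paper's argument, not a different method.
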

The proof of Lemma \ref{lem:pderivatives} is given in Appendix \ref{app:pderivatives}.

\begin{remark}
For $K_1 = 2$ and $L_1 > 0$ the denominators of $\frac{\p \Gamma_2}{\p r_1}(0,0)$ and $\frac{\p^2 \Gamma_2}{\p r_1^2}(0,0)$ are zero. In this case we set both equal to $\infty$.
\end{remark}

\begin{lemma}[Continuity of $\Gamma_1$]\label{lem:ccurve}
Assume that $K_1$ and $L_1$ lie outside the region $K_1 < 2$ and $L_1 < 0$. Then $\Gamma_1$ is continuous for every $(r_1, r_2) \in \Gamma_1 \setminus \{(0,0)\}$.
\end{lemma}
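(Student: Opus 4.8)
The plan is to obtain continuity as an immediate consequence of the implicit function theorem, the point being that the partial derivative of $h_1$ transverse to the $r_1$-axis never vanishes. First I would recall from the proof of Lemma \ref{lem:derivativesofh1} that
\[
\frac{\p h_1}{\p r_2}(r_1, r_2) = L_1 C_{1,1} = L_1\, V'(K_1 r_1 + L_1 r_2).
\]
By hypothesis $L_1 \in \R \setminus \{0\}$, and by Property 4 of Lemma \ref{lV} one has $V' > 0$ throughout $\R$; hence $\p h_1 / \p r_2 \neq 0$ at every point of $[0,1]^2$, with sign equal to that of $L_1$. I would also note that $h_1$ is $C^1$ (in fact real-analytic), since $V$ is a ratio of entire functions whose denominator $\int_\S \eee^{x\cos\theta}\,\ddd\theta$ is strictly positive and therefore never vanishes. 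Crucially, none of this uses the particular domain among the three of Definition \ref{def:curvedomains}, so the argument is uniform.

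With this in hand the core of the proof is a single application of the implicit function theorem. Fixing an arbitrary $(r_1^{\ast}, r_2^{\ast}) \in \Gamma_1 \setminus \{(0,0)\}$, since $h_1(r_1^{\ast}, r_2^{\ast}) = 0$ and $\p h_1/\p r_2 \neq 0$ there, a neighborhood exists on which $\Gamma_1$ coincides with the graph $r_2 = g(r_1)$ of a $C^1$ function $g$, whose slope is exactly the quantity $\p \Gamma_1/\p r_1$ computed in Lemma \ref{lem:pderivatives} and well defined precisely because $L_1 C_{1,1} \neq 0$. This yields local continuity at the chosen point, and since the point was arbitrary, continuity of $\Gamma_1$ at every point of $\Gamma_1 \setminus \{(0,0)\}$. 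I would moreover observe that for each fixed $r_1$ the map $r_2 \mapsto h_1(r_1, r_2)$ has the constant-sign derivative $L_1 V'$ and is therefore strictly monotone, so it has at most one zero; hence $\Gamma_1$ is in fact the graph of a single-valued function of $r_1$, upgrading the local statement to a global one.

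Two matters deserve attention. At boundary points of $[0,1]^2$ I would apply the implicit function theorem to $h_1$ viewed as a smooth function on an open neighborhood of the square (legitimate, as $V$ is smooth on all of $\R$) and then intersect the resulting graph with $[0,1]^2$, which preserves (one-sided) continuity. I would also explain the role of the standing hypothesis that $(K_1, L_1)$ lies outside the region $K_1 < 2,\ L_1 < 0$: by Theorem \ref{thm:hneg}, inside that region $h_1 < 0$ on $[0,1]^2 \setminus \{(0,0)\}$, so $\Gamma_1 = \{(0,0)\}$ and $\Gamma_1 \setminus \{(0,0)\}$ is empty; the hypothesis is exactly what guarantees that the nontrivial curve exists, giving the statement content. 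I do not anticipate a genuine obstacle: everything reduces to the observation $\p h_1/\p r_2 = L_1 V' \neq 0$ together with the implicit function theorem, and the only mild care needed is the bookkeeping at the boundary of the square and the identification of the excluded parameter region as the degenerate case in which $\Gamma_1$ collapses to the origin.
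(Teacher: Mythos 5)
Your proposal is correct and follows essentially the same route as the paper: the paper's one-line proof (``$\p \Gamma_1/\p r_1$ is defined for all $(r_1,r_2) \in [0,1]^2 \setminus \{(0,0)\}$'', i.e.\ the denominator $L_1 C_{1,1}$ never vanishes since $L_1 \neq 0$ and $V' > 0$) is exactly the implicit-function-theorem observation you make explicit. Your write-up simply supplies the details the paper leaves implicit -- smoothness of $h_1$, the local graph structure, the boundary bookkeeping, and the remark that the excluded region $K_1 < 2$, $L_1 < 0$ is precisely the degenerate case where $\Gamma_1 \setminus \{(0,0)\}$ is empty.
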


\begin{proof}
The claim follows from the fact that $\p \Gamma_1/\p r_1$ is defined for all $(r_1, r_2) \in [0,1]^2\setminus\{ (0,0)\}$. Hence the latter is also true for all $(r_1, r_2) \in \Gamma_1 \setminus \{ (0,0)\}$.
\end{proof}

There is a slight abuse of notation in Lemma \ref{lem:ccurve}, because we treat $\Gamma_1$ as a set and as a curve. The definition of $\Gamma_1$ as a set is given in Definition \ref{def:sccurves}. The curve $\Gamma_1$ is the implicit function $r_2(r_1)$ which solves $V(K_1 r_1 + L_1 r_2(r_1) ) - r_1 = 0$.  We remove the point $(0,0)$ from the set $\Gamma_1$ to make $r_2(r_1)$ a well-defined function.

\begin{remark}[Convexity/Concavity of $\Gamma_{1}$]
Assume that $K_1$ and $L_1$ lie outside the region $K_1 < 2$ and $L_1 < 0$. Then  $C_{1,1} > 0$ and $C_{2,1} < 0$ for all $(r_1, r_2) \in \Gamma_1 \setminus \{0,0\}$. Hence, by Lemma \ref{lem:pderivatives} it follows that 
\begin{equation}
\frac{\p^2 \Gamma_1}{\p r_1^2} < 0 \iff L_1 < 0.
\end{equation}
Thus, $\Gamma_1$ is strictly concave for $L_1 < 0$ and strictly convex for $L_1 > 0$.
\end{remark}

The following theorem identifies the defining characteristics of the fundamental curves. The first characteristic is whether the curve has a turning point or not, the second is whether the curve is connected with zero or not and the last is whether the curve is connected with a synchronization value of one. 

\begin{theorem}[Properties of $\Gamma_{1}$]\ \\
\label{thm:propfundcurves}
\vspace{-0.5cm}
\begin{enumerate}
\item $\frac{\p \Gamma_1}{\p r_1}$ changes sign if and only if $K_1 > 2$ and $L_1 < 0$.
\item In the domains $K_1 > 2,L_1 < 0$ and $K_1 \leq 2,L_1 > 0$
the curve $\Gamma_1$ is connected with the point $(r_1, r_2) = (0, 0)$.
\item In the domains $K_1 > 2,L_1 > 0$ and $K_1 \leq 2,L_1 > 0$
there exists precisely one $r \in (0,1)$ such that the curve $\Gamma_1$ is connected with the point $(r_1, r_2) = (r,1)$.
\end{enumerate}
\end{theorem}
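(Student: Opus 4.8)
The plan is to give all three statements a single, global description of $\Gamma_1$ by parametrizing it explicitly rather than reasoning pointwise from the implicit-function formulas. Every point of $\Gamma_1$ satisfies $V(K_1 r_1 + L_1 r_2) = r_1$, so I would set $x := K_1 r_1 + L_1 r_2 \in [0,\infty)$ and read off $r_1 = V(x)$ and $r_2 = (x - K_1 V(x))/L_1$. Writing $\Phi(x) := x - K_1 V(x)$, the part of $\Gamma_1$ lying in $[0,1]^2$ is exactly the image of $x \mapsto (V(x), \Phi(x)/L_1)$ over those $x$ with $\Phi(x)/L_1 \in [0,1]$, the requirement $r_1 = V(x) \in [0,1)$ being automatic by Lemma \ref{lV}. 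Since $V' > 0$, the coordinate $r_1 = V(x)$ is strictly increasing in $x$, so this recovers the implicit function $r_2(r_1)$ of Lemma \ref{lem:ccurve}, and differentiating gives $\frac{\p \Gamma_1}{\p r_1} = \Phi'(x)/(L_1 V'(x))$, in agreement with Lemma \ref{lem:pderivatives}.

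The engine for all three parts is the single function $g(x) := \Phi'(x) = 1 - K_1 V'(x)$. Using $V'(0) = \tfrac12$, $V'' < 0$ and $V'(x) \to 0$ (all from Lemma \ref{lV}), $g$ is continuous, strictly increasing, with $g(0) = 1 - K_1/2$ and $g(x) \to 1$; hence $g$ has a unique zero $x^* \in (0,\infty)$ precisely when $K_1 > 2$, and $g > 0$ on $(0,\infty)$ when $K_1 \leq 2$. From this I would locate the sign of $\Phi$ itself: $\Phi(0) = 0$, and for $K_1 > 2$ the function $\Phi$ decreases on $(0,x^*)$ and increases afterwards with $\Phi(x) \to \infty$, so $\Phi < 0$ on $(0,x^{**})$ and $\Phi > 0$ on $(x^{**},\infty)$ for a unique $x^{**} > x^*$; for $K_1 \leq 2$ one has $\Phi > 0$ on $(0,\infty)$.

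For Claim 1, note that $C_{1,1} = V'(x) > 0$ on $\Gamma_1 \setminus\{(0,0)\}$ (the concavity/convexity remark), so $\frac{\p \Gamma_1}{\p r_1} = g(x)/(L_1 V'(x))$ has the sign of $g(x)/L_1$. The remaining task is to decide, in each curve domain, whether the admissible $x$-range (those $x$ with $r_2 = \Phi(x)/L_1 \in [0,1]$) straddles the zero $x^*$ of $g$. By the previous paragraph, for $L_1 > 0$ admissibility forces $\Phi(x) \geq 0$, i.e.\ $x = 0$ or $x \geq x^{**} > x^*$, so on the nontrivial arc $g > 0$ throughout and the derivative keeps one sign; for $L_1 < 0$ admissibility forces $\Phi(x) \leq 0$, i.e.\ $x \in [0,x^{**}]$, whose interior contains $x^*$, so $g$—and hence the derivative—takes both signs. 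Combined with the existence of $x^*$ only for $K_1 > 2$, this yields that $\frac{\p \Gamma_1}{\p r_1}$ fails to keep a constant sign exactly when $K_1 > 2$ and $L_1 < 0$. Claim 2 then follows from the same sign-of-$\Phi$ analysis near $x=0$: the curve reaches the origin iff $\Phi(x)/L_1 > 0$ for small $x>0$, which holds in the domains $K_1 \leq 2,\,L_1 > 0$ (where $\Phi > 0$) and $K_1 > 2,\,L_1 < 0$ (where $\Phi < 0$), but fails for $K_1 > 2,\,L_1 > 0$, where the origin is isolated.

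For Claim 3 I would count solutions of $r_2(x) = 1$: when $L_1 > 0$, $r_2 = \Phi/L_1$ is eventually strictly increasing from $0$ to $+\infty$ along the admissible ray, so it meets the level $1$ exactly once, at a point with $r_1 = V(x) \in (0,1)$. The hard part—and the reason Claim 3 excludes the concave domain $K_1 > 2,\,L_1 < 0$—is the behaviour of $r_2 = \Phi/L_1$ there: it rises from $0$ to the maximum $\Phi(x^*)/L_1$ at $x^*$ and then decreases to $-\infty$, so depending on $|L_1|$ this maximum may lie below $1$ (no contact with $\{r_2=1\}$) or above $1$ (two contacts, with $\Gamma_1$ splitting into two arcs inside the square). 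The delicate point is to handle this dichotomy cleanly and to confirm that the Claim 1 conclusion is insensitive to whether the turning point sits inside $[0,1]^2$; this is settled by observing that on each arc the sign of $g$ is still controlled by position relative to $x^*$, so the derivative provably attains both signs whenever $K_1 > 2$ and $L_1 < 0$, while a single contact with $\{r_2 = 1\}$ is possible only for $L_1 > 0$.
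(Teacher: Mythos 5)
Your proposal is correct, but it proves Theorem \ref{thm:propfundcurves} by a genuinely different route than the paper. The paper treats the three properties with three separate arguments: for Property 1 it characterizes turning points by $C_{1,1}=1/K_1$, solves this condition (via Lemma \ref{derV1}) explicitly for $r_2$ as a rational function of $r_1$, intersects with the axis to get the candidate abscissa $\sqrt{1-2/K_1}$, and then excludes turning points for $L_1>0$ by comparing this value with the nontrivial solution of $r_1 = V\bigl(K_1 r_1/(K_1(1-r_1^2)-1)\bigr)$ --- a comparison justified only ``by plotting''; for Property 2 it Taylor-expands $h_1$ at the origin and solves $(K_1/2-1)\ee + (L_1/2)\dd = 0$; for Property 3 it applies the intermediate value theorem to $r \mapsto V(K_1 r + L_1) - r$ together with strict convexity. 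Your single global parametrization $x \mapsto (V(x), \Phi(x)/L_1)$ with $\Phi(x) = x - K_1V(x)$ --- legitimate because $V$ is strictly increasing, so $r_1 = V(x)$ inverts, and $x<0$ is impossible for points in $[0,1]^2$ --- reduces all three properties to the sign pattern of the single monotone function $g = \Phi' = 1 - K_1 V'$ and hence of $\Phi$. This buys two things the paper's proof lacks: it eliminates the non-rigorous plotting step (for $L_1>0$ admissibility forces $x \ge x^{**} > x^*$, where $g>0$, so there is nothing left to compare), and it covers the case where the turning point value $\Phi(x^*)/L_1$ exceeds $1$, so the ``parabola'' exits the unit square and $\Gamma_1$ splits into two arcs --- a configuration the paper's uniqueness claim silently ignores (its unique turning point need not lie in $[0,1]^2$), and which your observation that the derivative still attains both signs, one on each arc, repairs. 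What the paper's computation buys in exchange is explicit quantitative information about where turning points sit, which your qualitative argument does not produce. Two small points you should make explicit in a final write-up: $g$ is strictly increasing only because of the paper's standing assumption $K_1>0$ (so that $g' = -K_1V''>0$), and the limits $g(x)\to 1$ and $\Phi(x)\to\infty$ need the facts $V'(x) = 1 - V(x)^2 - V(x)/x \to 0$ (Lemma \ref{derV1} with Property 3 of Lemma \ref{lV}) and $\Phi(x) \ge x - K_1$.
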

The proof of Theorem \ref{thm:propfundcurves} is given in Appendix \ref{app:propfundcurves}.

Based on the preceding theorem we present the three fundamental curves.

\subsubsection{``Convex curve connected with zero"}
We restrict $\Gamma_1^{K_1, L_1}$ to the first domain of Definition \ref{def:curvedomains}, i.e., $K_1 \leq 2$, $L_1 > 0$. In this case the curve has the following properties.

\begin{enumerate}
\item By Property 1 of Theorem \ref{thm:propfundcurves} the curve has no turning point.
\item By Property 2 of Theorem \ref{thm:propfundcurves} the curve $\Gamma_1$ is connected with $(r_1, r_2) = (0,0)$,
\item By Property 3 of Theorem \ref{thm:propfundcurves} $\Gamma_1$ is connected with $(r_1, r_2) = (r,1)$, for some $r \in (0,1)$,
\end{enumerate}

Examples of $\Gamma_1$ in this domain are given in Figures \ref{fig:s13} and \ref{fig:s14} showing how the curve changes as the parameters $K_{1}$ and $L_{1}$ are varied. The intersection point of $\Gamma_1$ with the top of the figures is given by $(r, 1)$, where $r \in (0,1)$ solves $r = V(K_1 r + L_1)$. Note that $V(K_1 r + L_1) \to 1$ as $L_1 \to \infty$. It follows that $r \to 1$ as $L_1 \to \infty$.

\subsubsection{``Convex curve disconnected from zero"}

We restrict $\Gamma_1^{K_1, L_1}$ to the second domain of Definition \ref{def:curvedomains}, i.e., $K_1 > 2$, $L_1 > 0$. In this case the curve has the following properties.

\begin{enumerate}
\item By Property 1 of Theorem \ref{thm:propfundcurves} the curve has no turning point.
\item By Property 2 of Theorem \ref{thm:propfundcurves} the curve $\Gamma_1$ is not connected with $(r_1, r_2) = (0,0)$,
\item By Property 3 of Theorem \ref{thm:propfundcurves} $\Gamma_1$ is connected with $(r_1, r_2) = (r,1)$, for some $r \in (0,1)$,
\item There exists an extra (non-trivial) connection with the line $\{ (s, 0) : s \in (0,1) \}$.
\end{enumerate}

To see why the last statement is true, note that the second connection point $(r,0)$ is a solution of $r = V(K_1 r)$. Since $K_1 > 2$, such a point $r \in (0,1)$ exists.

Examples of $\Gamma_1$ in this domain are given in Figures \ref{fig:s23} and \ref{fig:s24} showing how the curve changes as the parameters $K_{1}$ and $L_{1}$ are varied. Similarly to the first domain, the intersection point of $\Gamma_1$ with the top of the figures converges to $1$ as $L_1 \to \infty$. In addition, the same is true as $K_1 \to \infty$.

\subsubsection{``The parabola"}
We restrict $\Gamma_1^{K_1, L_1}$ to the second domain of Definition \ref{def:curvedomains}, i.e., $K_1 > 2$, $L_1 < 0$. In this case the curve has the following properties.

\begin{enumerate}
\item By Property 1 of Theorem \ref{thm:propfundcurves} the curve has a turning point.
\item By Property 2 of Theorem \ref{thm:propfundcurves} the curve $\Gamma_1$ is connected with $(r_1, r_2) = (0,0)$,
\item There exists an extra (non-trivial) connection with the line $\{ (s, 0) : s \in (0,1) \}$.
\end{enumerate}
For the last property, note that the second connection point $(r,0)$ is a solution of $r = V(K_1 r)$ and since $K_1 > 2$, such a point $r \in (0,1)$ exists.
Examples of $\Gamma_1$ in this region are given in Figures  \ref{fig:s43} and \ref{fig:s44} showing how the curve changes as the parameters $K_{1}$ and $L_{1}$ are varied.
    
\begin{figure*}
 \centering
 \begin{subfigure}{0.35\linewidth}
        \centering
        \includegraphics[width=1.0\linewidth]{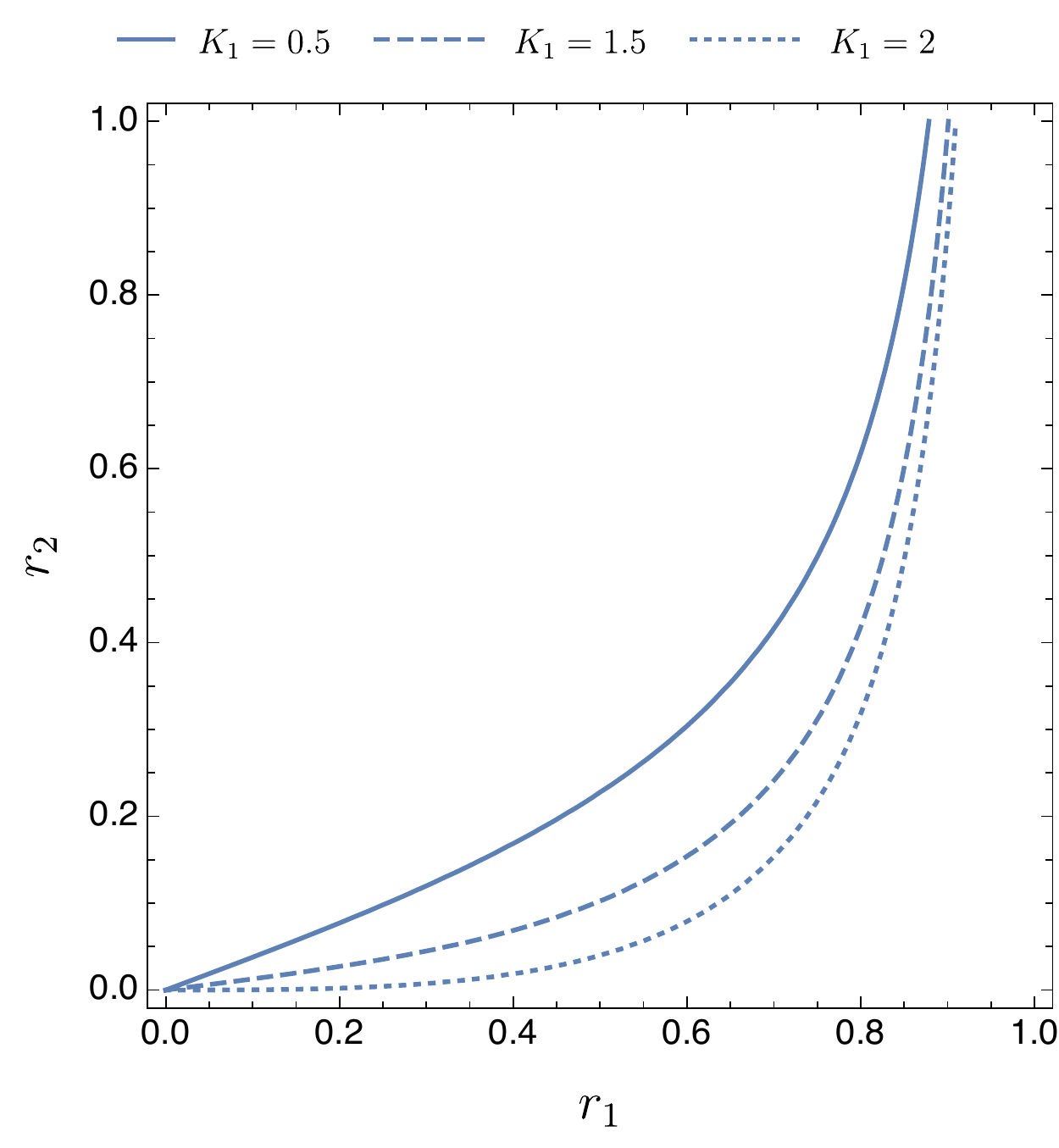} 
        \caption{Plot of $\Gamma_1$ in the first domain with $L_1 = 4$ and $K_1 = 0.5,1.5,2 $.}\label{fig:s13}
    \end{subfigure}\hspace{2.5cm}
    \begin{subfigure}{0.35\linewidth}
        \centering
        \includegraphics[width=1.0\linewidth]{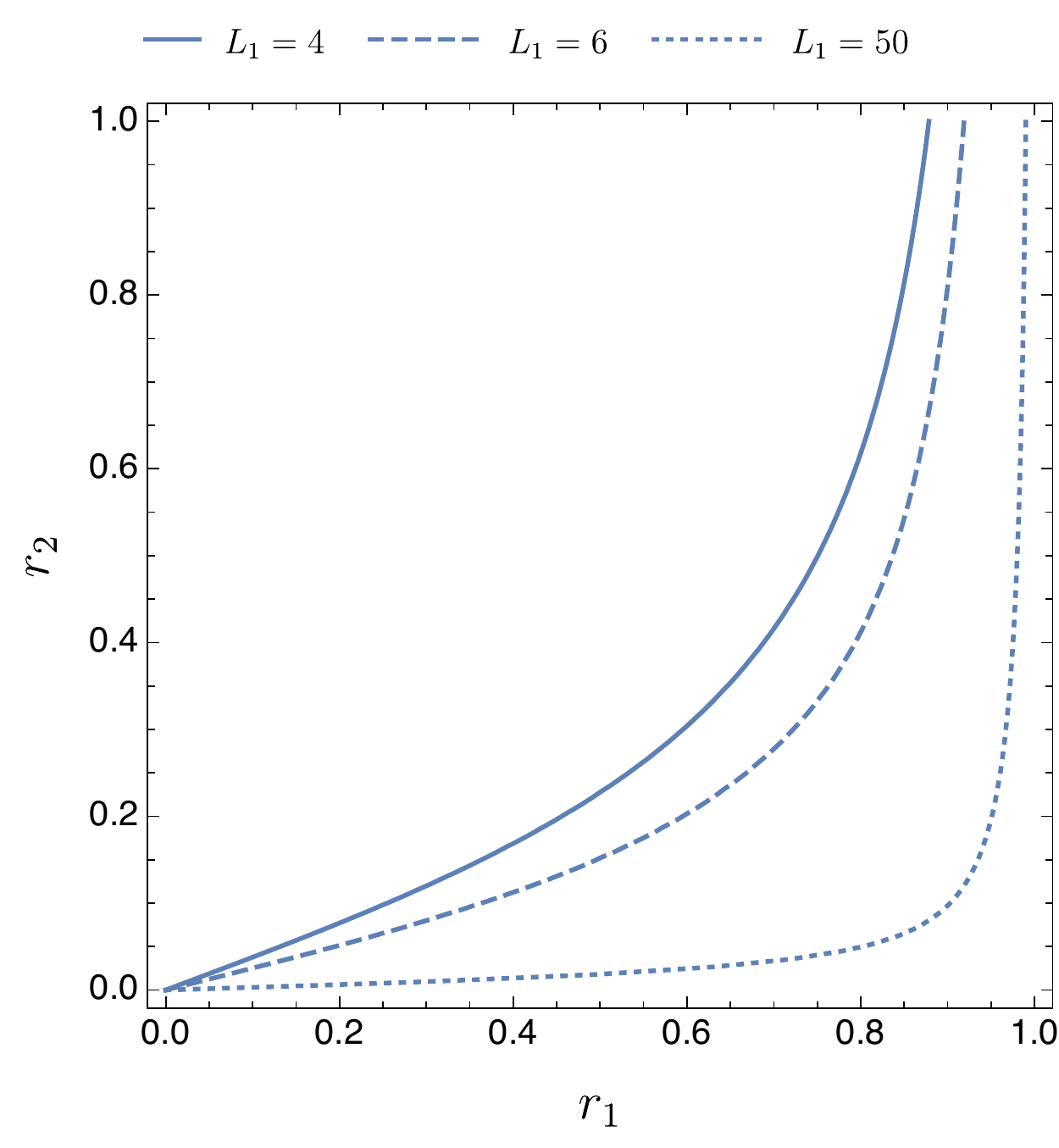} 
        \caption{Plot of $\Gamma_1$ in the first domain with $K_1 = 0.5$ and $L_1 = 4,6,50.$ }\label{fig:s14}
    \end{subfigure}
  \begin{subfigure}{0.35\linewidth}
        \centering
        \includegraphics[width=1.0\linewidth]{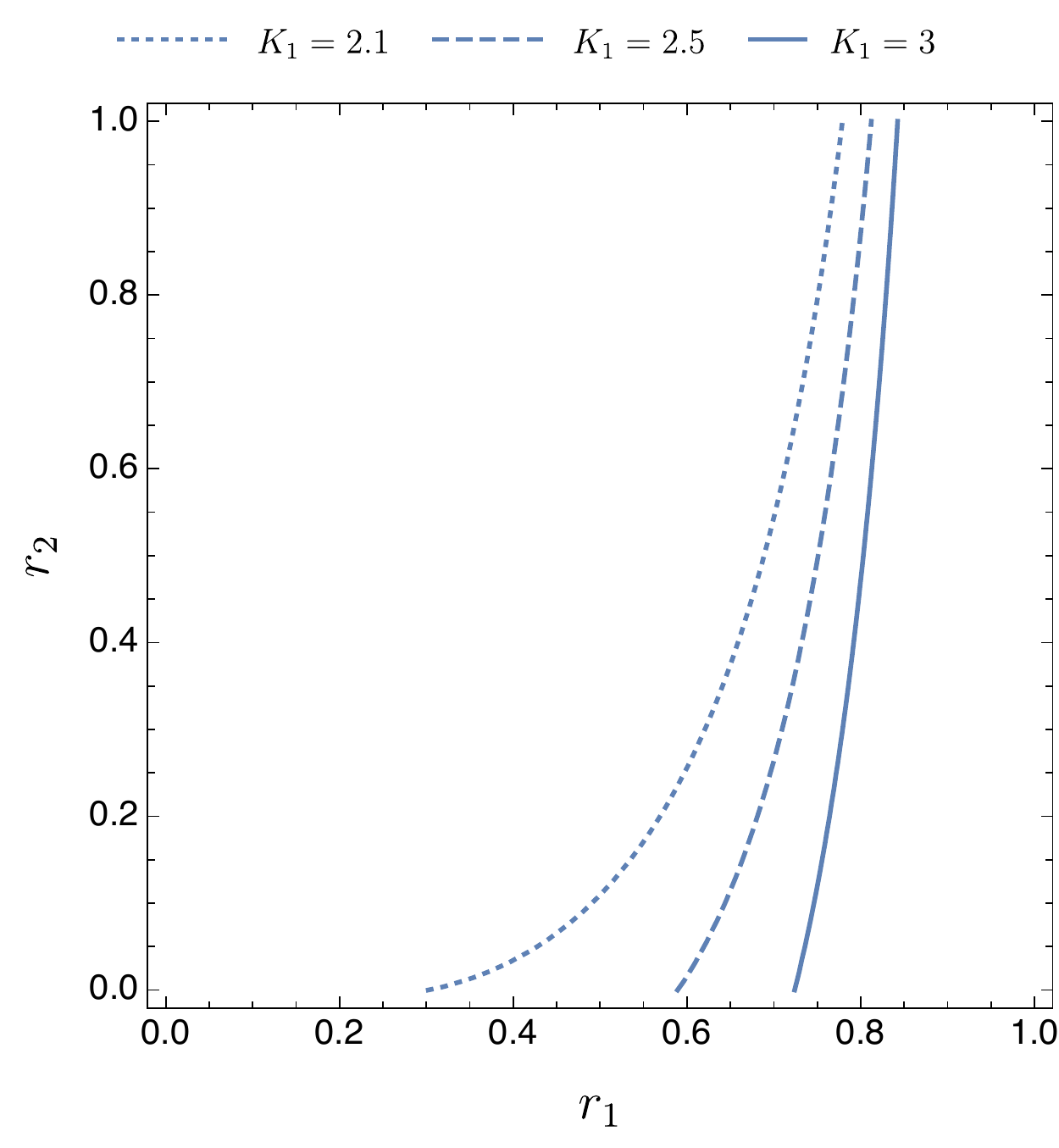} 
        \caption{Plot of $\Gamma_1$ in the second domain with $L_1 = 1$ and $K_1 = 2.1,2.5,3 $.}\label{fig:s23}
    \end{subfigure}\hspace{2.5cm}
    \begin{subfigure}{0.35\linewidth}
        \centering
        \includegraphics[width=1.0\linewidth]{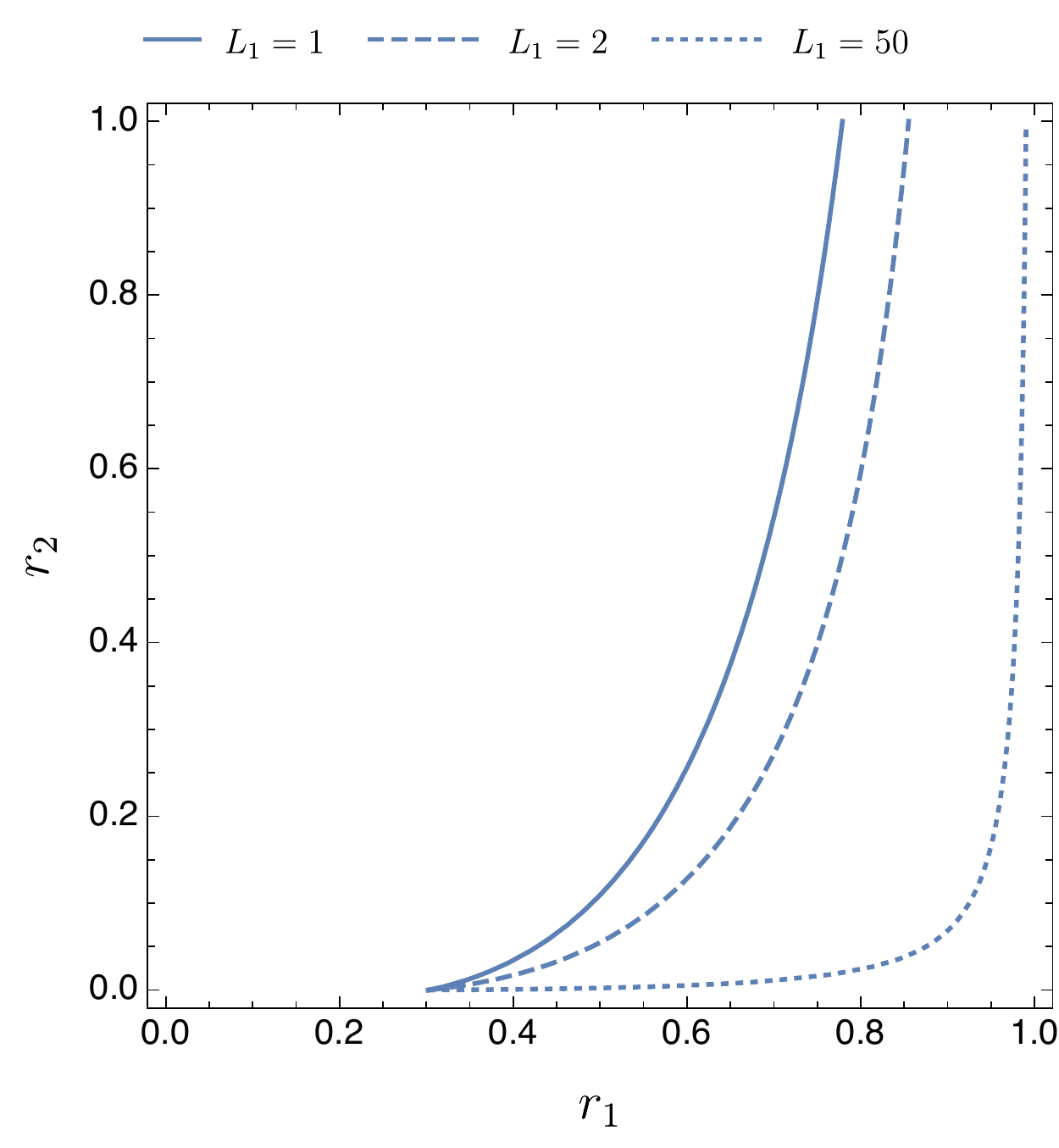} 
        \caption{Plot of $\Gamma_1$ in the second domain with $K_1 = 2.1$ and $L_1 = 1,2,50.$ }\label{fig:s24}
    \end{subfigure}
    \begin{subfigure}{0.35\textwidth}
        \centering
        \includegraphics[width=1.0\textwidth]{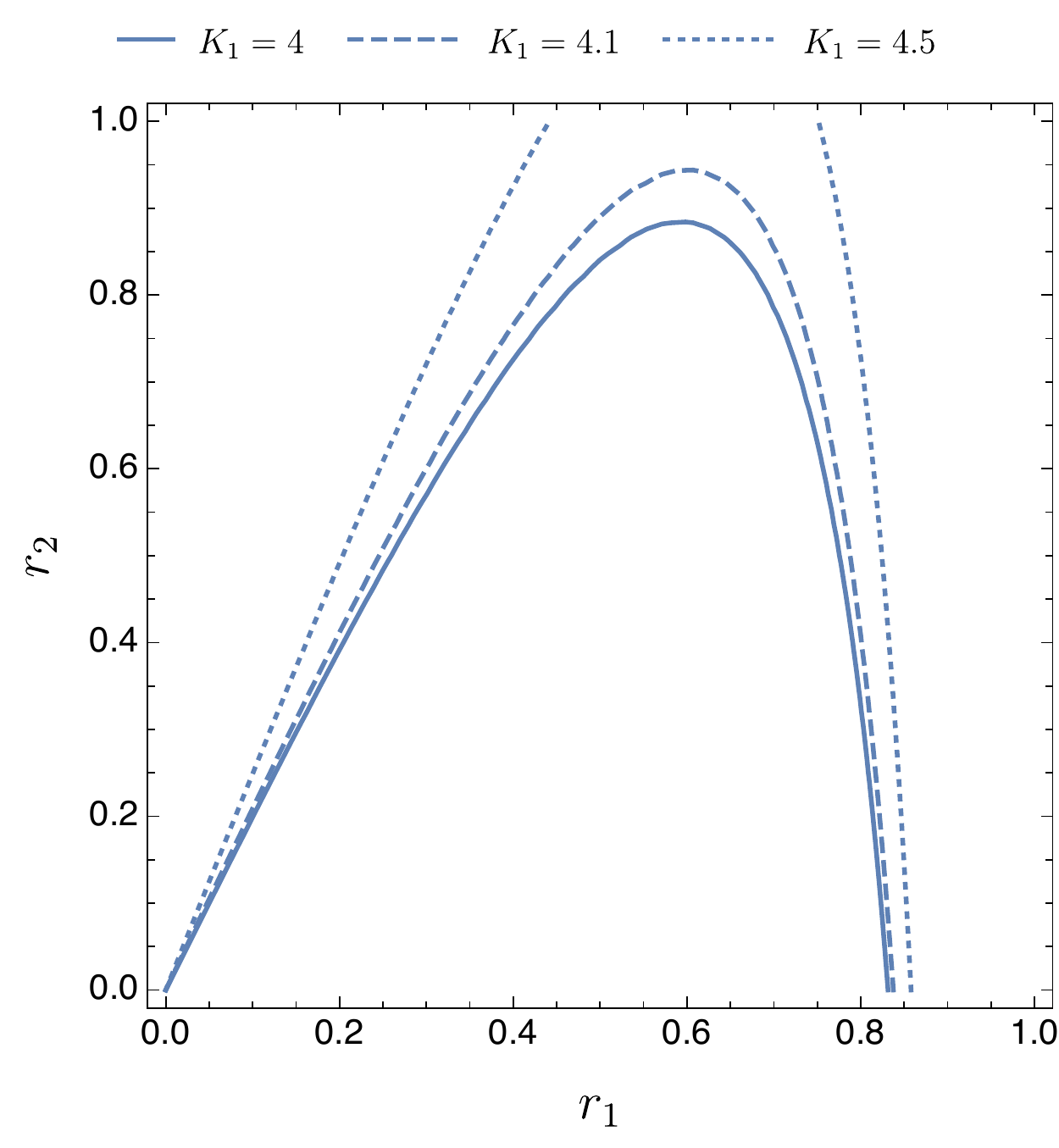} 
        \caption{Plot of $\Gamma_1$ in the third domain with $L_1 = -1$ and $K_1 = 4,4.1,4.5 $.}\label{fig:s43}
    \end{subfigure}\hspace{2.5cm}
    \begin{subfigure}{0.35\textwidth}
        \centering
        \includegraphics[width=1.0\textwidth]{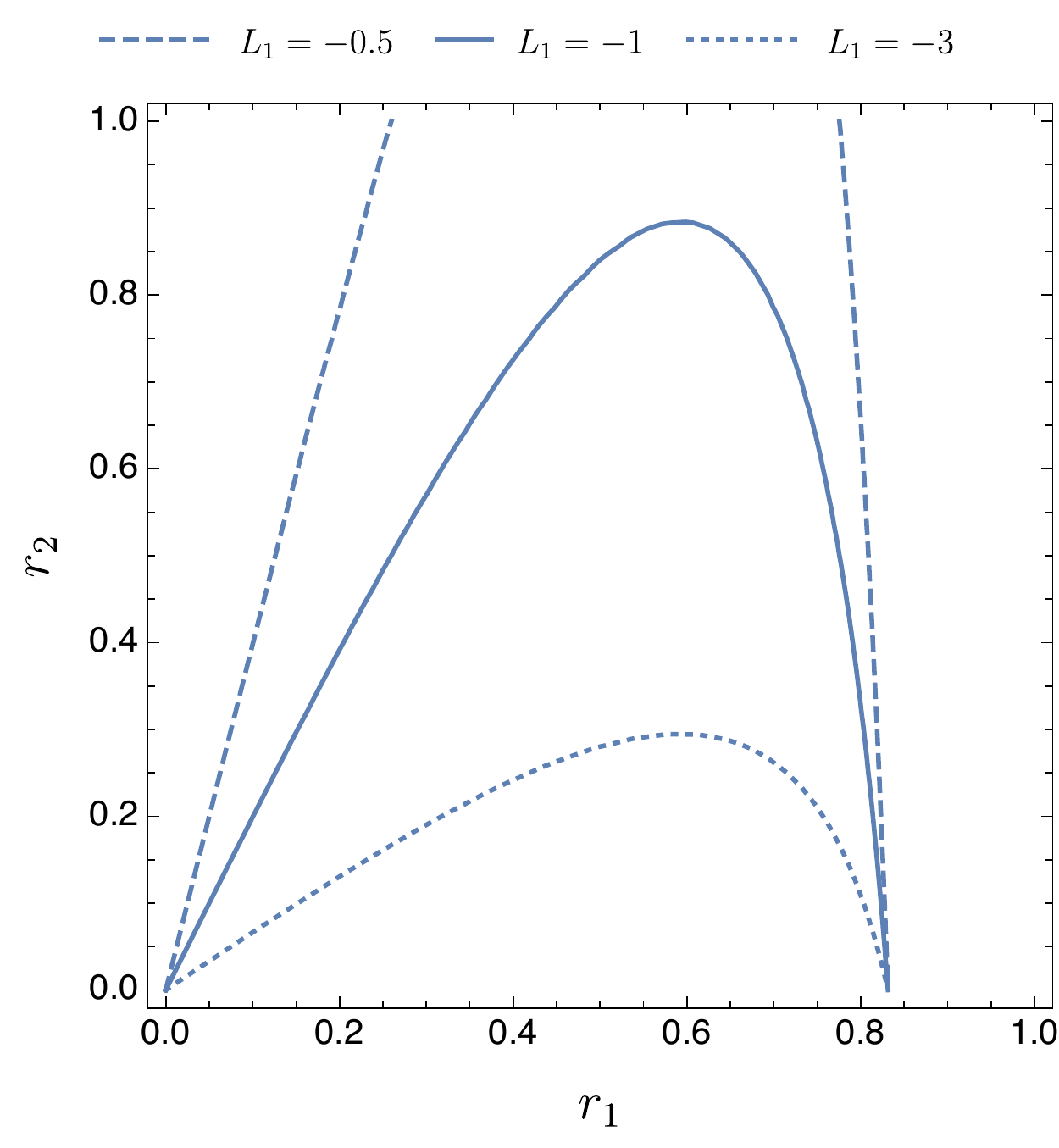} 
        \caption{Plot of $\Gamma_1$ in the third domain with $K_1 = 4$ and $L_1 = -0.5,-1,-3.$ }\label{fig:s44}
    \end{subfigure}
    \caption{Plot of the fundamental curves domains of Definition \ref{def:curvedomains}. In the figures on the left-hand side, $L_1$ is fixed and $K_1$ is varied. In the figures on the right-hand side, $K_1$ is fixed and $L_1$ is varied.}
\end{figure*}

\subsubsection{Preliminary classification regions}
Using the characteristics of the possible fundamental curves we can calculate the cardinality of $\Gamma_1 \cap \Gamma_2$ in the different regions. For $\Gamma_1$ there exist precisely three fundamental curves, which are determined by whether $K_1$ and $K_2$ are either $> 2$ or $\leq 2$ and whether $L_1, L_2$ are either $> 0$ or $< 0$. In addition, there exist precisely three fundamental curves for $\Gamma_2$, hence there are $3^2 = 9$ possible non-trivial regions we need to consider corresponding to all possible combinations of fundamental curves. Using the geometry of the fundamental curves, we can identify the maximum number of solutions possible in each region by the number of intersections possible between the two fundamental curves. These are summarized in Table \ref{fig:overview}. In the following sections we prove results that allow us to refine Table \ref{fig:overview} further.

\begin{table}[!ht]
\centering
\vspace{0.1cm}

\begin{tabular}{l l c} \toprule
		& \textbf{Region}  &\textbf{ Max $\#$ solutions}  \\ \midrule

$\RR_1$	& $K_1 < 2, L_1 < 0 \text{ or } K_2 < 2, L_2  < 0$	&  $1$\\ \hline
$\RR_2$	& $K_1 \leq 2, K_2 \leq 2, L_1 > 0, L_2  > 0$	&  $2$\\ 

$\RR_3$	& $K_1 >  2, K_2 > 2, L_1 > 0, L_2  > 0$		&  $2$\\ 

$\RR_4$	& $K_1 \leq 2, K_2 > 2, L_1 > 0, L_2 > 0$		&  $2$\\ 
$\RR_5$	& $K_1 > 2, K_2 \leq 2, L_1  > 0, L_2  > 0$		&  $2$\\ \hline	
	
$\RR_6$	& $K_1>2, K_2>2, L_1 < 0, L_2> 0$		&  $3$\\ 		
$\RR_7$	& $K_1>2, K_2>2, L_1 > 0, L_2 < 0$		&  $3$\\ 
			
$\RR_8$ & $K_1>2, K_2 \leq 2, L_1  < 0, L_2  > 0$	&  $3$\\ 
$\RR_9$	& $K_1 \leq 2, K_2>2, L_1  > 0, L_2  < 0$	&  $3$\\ \hline

$\RR_{10}$	& $K_1>2, K_2 > 2, L_1 < 0, L_2 < 0$		&  $4$\\ \bottomrule
\end{tabular}
\caption{An overview of all regions in which synchronized solutions can occur and the maximum number of solutions possible (unsycnhronized solutions included). See Figure \ref{numregov} for numerical examples.}\label{fig:overview}
\end{table}

\section{Unsynchronized and symmetrically synchronized solutions}
\label{sec:unsynch}
In this section we first provide sufficient condition for the unsynchronized solution to be the only solution and then analyze the symmetrically synchronized solutions, i.e., solutions where $r_{1}=r_{2}$, which corresponds to the solution in the one-community noisy Kuramoto model.

\subsection{Unsynchronized solutions}
\begin{theorem}[Sufficient condition for the unsynchronized solution]\label{thm:zero}\ \\
Define:
\begin{equation}
\beta^{\zero}(K_1,K_2,L_1,L_2) := (K_1 - 2)(K_2 - 2) - L_1 L_2.
\end{equation}
The unsynchronized solution $(r_1,r_2) = (0,0)$ is the only solu\- tion if the corresponding parameter set of interaction strengths $(K_1, K_2, L_1, L_2)$ is contained in one of the following regions:
\begin{enumerate}
\item $K_1 < 2$, $L_2 > 0$ and $\beta^{\zero} \geq 0$,
\item $K_2 < 2$, $L_1 > 0$ and $\beta^{\zero} \geq 0$,
\item $K_1 > 2$, $L_2 < 0$ and $\beta^{\zero} \leq 0$,
\item $K_2 > 2$, $L_1 < 0$ and $\beta^{\zero} \leq 0$,
\item $K_1 > 2$, $K_2 < 2$, $L_1 < 0$, $L_2 > 0$ and $\beta^{\zero} \geq 0$,
\item $K_1 < 2$, $K_2 > 2$, $L_1 > 0$, $L_2  < 0$ and $\beta^{\zero} \geq 0$,
\item $K_1 \leq 2$ and $L_1 < 0$,
\item $K_2 \leq 2$ and $L_2 < 0$,

\end{enumerate}
\end{theorem}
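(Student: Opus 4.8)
The plan is to show that no \emph{nontrivial} solution (one with $r_1,r_2>0$) can lie in $\Gamma_1\cap\Gamma_2$ in any of the eight regions, the point $(0,0)$ being always present. Cases~7 and~8 are immediate consequences of Theorem~\ref{thm:hneg}: if $K_1\le 2$ and $L_1<0$ then $h_1<0$ on $[0,1]^2\setminus\{(0,0)\}$, so $\Gamma_1=\{(0,0)\}$ and the only intersection with $\Gamma_2$ is the origin; case~8 is the same with the two communities exchanged. For the remaining cases I would extract a linear necessary condition from the strict concavity of $V$. Since $V(0)=0$, $V'(0)=\tfrac12$ (as used in Lemma~\ref{lem:derivativesofh1}) and $V''<0$ (Lemma~\ref{lV}), one has the sharp bound $V(x)<x/2$ for all $x>0$. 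At a nontrivial solution the argument $K_kr_k+L_kr_{k'}$ is necessarily positive (because $V(y)>0\iff y>0$), so evaluating the reduced equation \eqref{eq:scp1} and using the bound gives $(K_k-2)r_k+L_kr_{k'}>0$ for $k=1,2$.

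Collecting these two inequalities as a linear system with coefficient matrix $\left(\begin{smallmatrix} K_1-2 & L_1\\ L_2 & K_2-2\end{smallmatrix}\right)$ and strictly positive right-hand side $P:=(K_1-2)r_1+L_1r_2>0$ and $Q:=L_2r_1+(K_2-2)r_2>0$, I observe that its determinant is exactly $\beta^{\zero}$. Multiplying through by the adjugate yields the two identities $\beta^{\zero}r_1=(K_2-2)P-L_1Q$ and $\beta^{\zero}r_2=(K_1-2)Q-L_2P$, which remain valid when $\beta^{\zero}=0$. This is the engine of the proof: in each region the prescribed signs of the interaction strengths are chosen so that one of the two right-hand sides becomes sign-definite.

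For cases~1--4 this finishes the argument with no further work. In case~1, $K_1<2$ and $L_2>0$ make both terms of $(K_1-2)Q-L_2P$ negative, so $\beta^{\zero}r_2<0$; since $r_2>0$ this contradicts $\beta^{\zero}\ge 0$. Case~3 ($K_1>2$, $L_2<0$) makes the same quantity strictly positive, giving $\beta^{\zero}r_2>0$ and contradicting $\beta^{\zero}\le 0$. Cases~2 and~4 are identical using the $\beta^{\zero}r_1$ identity together with the signs of $K_2$ and $L_1$ (with hypothesis $\beta^{\zero}\ge 0$ for case~2 and $\beta^{\zero}\le 0$ for case~4). Only the two signs listed in each case are needed.

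Cases~5 and~6 are the crux and I expect them to be the main obstacle. Here the cross-couplings have opposite signs ($L_1<0<L_2$ in case~5), so \emph{both} right-hand sides above are sign-indefinite and the linear argument cannot conclude. I would instead invoke the fundamental-curve geometry of Theorem~\ref{thm:propfundcurves}: in case~5, $\Gamma_1$ is the strictly concave ``parabola'' that leaves the origin, peaks, and returns to the axis at $(r_a,0)$ with $r_a=V(K_1r_a)$, parametrised explicitly by $r_2=\gamma(r_1)=\bigl(V^{-1}(r_1)-K_1r_1\bigr)/L_1$, while $\Gamma_2$ is the monotone ``convex-from-zero'' curve running from the origin to the right edge. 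The sign of $\beta^{\zero}$ is precisely the comparison of the tangent slopes at the origin, $m_1=\tfrac{2-K_1}{L_1}$ and $m_2=\tfrac{L_2}{2-K_2}$ from Lemma~\ref{lem:pderivatives}, and $\beta^{\zero}\ge 0$ is equivalent to $m_2\ge m_1$, so $\Gamma_2$ departs the origin above the arch. The plan is then to promote this local ordering to a global one, showing $\Gamma_2$ stays strictly above $\Gamma_1$ on all of $(0,r_a]$ by combining the slope comparison with the convexity of $G:=V^{-1}$ and the concavity/convexity of the two curves. The delicate point is the interior of the arch near its peak: a difference of two concave graphs need not be sign-definite, so this step genuinely requires the second-order (convexity) information of Theorem~\ref{thm:propfundcurves} rather than the origin data alone, and I would verify carefully that the interior estimate is strong enough to exclude the crossing.
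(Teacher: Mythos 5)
Your handling of cases 7--8 (via Theorem \ref{thm:hneg}) and of cases 1--4 is correct and is essentially the paper's own argument in tidier packaging: the paper derives the necessary conditions \eqref{eq:r11}--\eqref{eq:r22} from $V(x)<x/2$ exactly as you do, then substitutes one into the other using the two listed signs, which is precisely your adjugate identities $\beta^{\zero}r_1=(K_2-2)P-L_1Q$ and $\beta^{\zero}r_2=(K_1-2)Q-L_2P$. The genuine gap is cases 5--6, which you leave as a plan rather than a proof. You are right that the linear information cannot conclude there; in fact the paper's own proof of ``Region 5 \& 6'' is broken at exactly this point, since it asserts $L_1/(2-K_1)<0$, whereas under the case-5 signs $K_1>2$, $L_1<0$ this ratio is \emph{positive}, and with the correct sign both necessary conditions read $r_2<c\,r_1$ with $c>0$, which are mutually compatible for every parameter choice, so no constraint on $\beta^{\zero}$ follows. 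But your proposed geometric repair cannot be completed either, for a structural reason: with $K_2<2$, $L_2>0$ the curve $\Gamma_2$ is \emph{concave}, not convex, as a graph $r_1\mapsto r_2(r_1)$ (by Lemma \ref{lem:pderivatives}, $\p^2\Gamma_2/\p r_1^2=L_2^2C_{2,2}/(1-K_2C_{2,1})^3<0$ there; its convexity is as a graph over $r_2$). Hence \emph{both} curves lie below their tangents at the origin, and the slope ordering $m_2\ge m_1$ gives no global separation --- your worry about ``a difference of two concave graphs'' is exactly the situation, not an avoidable technicality.

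Worse, no argument can close these cases, because the statement itself is false there. Take $K_1=3$, $K_2=1.995$, $L_1=-0.1$, $L_2=0.1$: all case-5 sign conditions hold and $\beta^{\zero}=(1)(-0.005)-(-0.01)=0.005>0$, and this parameter vector lies in none of the other seven regions. Since $V(1.2)>\tfrac12$, we have $\max_u\,[\,3u-V^{-1}(u)\,]\ \ge\ 1.5-V^{-1}(0.5)\ \ge\ 0.3\ >\ |L_1|$, so the arch $\Gamma_1$, given by $r_2=[\,3r_1-V^{-1}(r_1)\,]/|L_1|$, rises above the top edge $r_2=1$; its descending branch inside $[0,1]^2$ therefore joins a point $(u_*,1)$ on the top edge to $(r_a,0)$, where $r_a=V(3r_a)\approx 0.72$. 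Along this branch, $h_2(u_*,1)=V(K_2+L_2u_*)-1<0$ while $h_2(r_a,0)=V(L_2r_a)>0$, so by continuity of $\Gamma_1$ (Lemma \ref{lem:ccurve}) and the intermediate value theorem $h_2$ vanishes at an interior point of the branch, i.e.\ $\Gamma_1\cap\Gamma_2$ contains a strictly positive point (numerically $(r_1,r_2)\approx(0.72,0.39)$): a synchronized solution inside region 5. This is exactly the crossing of the descending limb that you flagged as the delicate step --- it genuinely occurs. So cases 5--6 require a hypothesis stronger than $\beta^{\zero}\ge0$; the defect lies in the theorem and in the paper's proof, not in your inability to finish the interior estimate, and your proposal should say this rather than present the separation as achievable.
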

The proof of Theorem \ref{thm:zero} is given in Appendix \ref{app:zero}.

\begin{remark}
Note that the reverse is not true. If we find a region for which there is no synchronized solution, i.e., $(r_1,r_2) = (0,0)$, then it is possible that \eqref{eq:r11} and \eqref{eq:r22} hold simultaneously. This means that we do not have a full classification of the regions where the only solution is the unsynchronized solution.
\end{remark}

\subsection{Symmetrically synchronized solutions}
We consider a special class of synchronized solutions. We call a synchronized solution $(r_1, r_2)$ a \emph{symmetric synchronized solution} or \emph{symmetric solution} if $r_1 = r_2$. The following two theorems give necessary and sufficient conditions for the existence of symmetric solutions and a characterization of the equivalence class of parameter combinations corresponding to symmetrically synchronized solutions.

\begin{theorem}[Symmetric solution]\label{reg1}
A symmetric solution $(r_1, r_2) = (r,r)$ (for some $r \in (0,1)$) exists if and only if
\begin{equation}
\begin{cases}
K_1 + L_1 &> \quad 2,\\
K_1 + L_1 &= \quad K_2 + L_2. \label{eq:symsol}
\end{cases}
\end{equation}

\end{theorem}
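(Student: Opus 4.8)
The plan is to collapse the two-dimensional self-consistency system to a single scalar fixed-point equation on the diagonal and then exploit the concavity of $V$. With $\psi = 0$ (so $\cos\psi = 1$) in force, substituting $r_1 = r_2 = r$ into the reduced self-consistency equations \eqref{eq:scp1} yields the pair of scalar equations $r = V((K_1 + L_1) r)$ and $r = V((K_2 + L_2) r)$. Writing $a := K_1 + L_1$ and $b := K_2 + L_2$, a symmetric solution with $r \in (0,1)$ is precisely a common root $r \in (0,1)$ of $g_a$ and $g_b$, where $g_c(r) := V(cr) - r$. So the whole statement reduces to understanding the roots of $g_c$.

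First I would establish the central single-variable fact: for $c \in \R$, the equation $r = V(cr)$ has a root $r \in (0,1)$ if and only if $c > 2$, and that root is then unique. The ingredients are $g_c(0) = V(0) - 0 = 0$, the derivative $g_c'(r) = c\,V'(cr) - 1$ so that $g_c'(0) = c/2 - 1$ (using $V'(0) = 1/2$), and $g_c''(r) = c^2 V''(cr) < 0$ for $c \neq 0$ by Property 5 of Lemma \ref{lV}, so that $g_c$ is strictly concave. If $c > 2$ then $g_c'(0) > 0$ while $g_c(1) = V(c) - 1 < 0$ (since $V$ is strictly increasing to its limit $1$), and strict concavity then yields exactly one root in $(0,1)$. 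Conversely, if $g_c$ has a root $r \in (0,1)$ besides the root at $0$, strict concavity forces $g_c > 0$ on $(0,r)$ (the graph lies above the zero chord), whence $g_c'(0) > 0$, i.e. $c > 2$; note $c = 0$ is excluded automatically since $g_0(r) = -r < 0$ has no positive root.

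Granting this fact, the equivalence is immediate. For the sufficiency direction, assume $a > 2$ and $a = b$; then the unique root $r \in (0,1)$ of $r = V(ar)$ also solves $r = V(br)$, giving the symmetric solution $(r,r)$. For necessity, suppose $(r,r)$ with $r \in (0,1)$ is a symmetric solution, so $V(ar) = r = V(br)$. Applying the single-variable fact to each equation separately forces $a > 2$ (and likewise $b > 2$), which gives the first line of \eqref{eq:symsol}. Moreover $V(ar) = V(br)$ together with the strict monotonicity of $V$ (Property 4 of Lemma \ref{lV}) and $r \neq 0$ gives $ar = br$, hence $a = b$, which is the second line of \eqref{eq:symsol}.

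The main obstacle is the necessity half of the single-variable fact, namely that a nontrivial root in $(0,1)$ can occur only when $c > 2$. This is exactly where strict concavity of $g_c$ and the normalization $V'(0) = 1/2$ do the real work: together they convert the mere existence of a diagonal fixed point into the sharp threshold $K_1 + L_1 > 2$. Everything else—the reduction to the diagonal, the monotonicity step yielding $a = b$, and the sufficiency direction—is routine once this threshold is in hand.
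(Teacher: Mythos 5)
Your proof is correct and takes essentially the same route as the paper's: reduce to the diagonal equations $r = V((K_1+L_1)r)$ and $r = V((K_2+L_2)r)$, invoke the threshold fact that $V(cr)=r$ has a root $r\in(0,1)$ if and only if $c>2$, and use strict monotonicity of $V$ to force $K_1+L_1=K_2+L_2$; the only difference is that you prove the threshold fact in full (via $V(0)=0$, $V'(0)=\tfrac12$, concavity), whereas the paper simply cites it as known. One small caveat: strict concavity of $g_c(r)=V(cr)-r$ holds only for $c>0$ (by the oddness of $V$, one has $V''>0$ on the negative half-line despite the unrestricted wording of Property 5 of Lemma \ref{lV}), but this does not damage your necessity argument, since for $c\le 0$ one has $V(cr)\le 0<r$ and hence no positive root exists, making that case vacuous.
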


\begin{proof}
Suppose $r_1 = r_2 = r$, for some $r \in (0,1)$, then equations \eqref{eq:scp1} reduce to:
\begin{equation}
r	= V([K_1 + L_1] r),\quad \text{and}\quad r = V([K_2 + L_2] r).
\end{equation}
We know that $V(K r) = r$ has a solution $r > 0$ if and only if $K > 2$. Hence it follows that $K_1 + L_1 > 2$ and $K_2 + L_2 > 2$. In order to have $V([K_1 + L_1] r) = V([K_2 + L_2] r)$ it is necessary and sufficient for  $K_1 + L_1 = K_2 + L_2$ due to the fact that $V$ is increasing and concave.
\end{proof}

\begin{remark}
Suppose we have two model parameter vectors $(K_1,K_2, L_1, L_2)$, $(K_1',K_2',L_1',L_2') \in \R^4$ satisfying \eqref{eq:symsol}. The relation
\begin{equation}
(K_1, K_2, L_1, L_2) \sim_S (K'_1, K'_2, L'_1, L'_2) \iff K_1 + L_1 = K'_1 + L'_1  
\end{equation}
is an equivalence relation. If the vector $(K_1, K_2, L_1, L_2) \sim_S (K'_1, K'_2, L'_1, L'_2)$ and $K_1 + L_1 = C$ then both vectors are in the same \emph{symmetry class} of \emph{level} $C$, denoted by $[C]$. The set of all symmetry classes is denoted by $\R^4/ \sim_S$.
\end{remark}

The following result states that a symmetry class uniquely determines the synchronization levels of the corresponding symmetric solutions.

\begin{theorem}[Uniqueness of symmetric solution up to symmetry class]\ \\
Suppose we have two parameter vectors $(K_1, K_2,L_1, L_2)$, $(K'_1, K'_2,L'_1, L'_2) \in \R^4$ with symmetric solution $(r,r)$ and $(r',r')$, then the following are equivalent:
\begin{enumerate}
\item $r = r'$,
\item $(K_1, K_2,L_1, L_2),(K'_1, K'_2,L'_1, L'_2) \in [C]$, with $C = K_1 + L_1$.
\end{enumerate}
It follows that each symmetry class $[C]$ is uniquely determined by the corresponding synchronization level $r(C)$, which is the solution of $r = V(C r)$. In other words there is a one-to-one correspondence between the symmetry class $[C]$ and the synchronization level $r(C)$.
\end{theorem}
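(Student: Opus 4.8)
The plan is to reduce both conditions to the single scalar equation $r = V(Cr)$ and then exploit the concavity and strict monotonicity of $V$ recorded in Lemma \ref{lV}. First I would invoke Theorem \ref{reg1}: since $(K_1,K_2,L_1,L_2)$ admits a symmetric solution $(r,r)$, the existence criterion forces $K_1 + L_1 = K_2 + L_2 =: C$ with $C > 2$, and $r$ is the positive solution of $r = V(Cr)$. Likewise $r'$ is the positive solution of $r' = V(C'r')$ with $C' := K_1' + L_1' = K_2' + L_2' > 2$. Thus the whole statement becomes an assertion about the scalar map $C \mapsto r(C)$ defined implicitly by $r = V(Cr)$.

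The key preliminary step is to show that for each fixed $C > 2$ the equation $r = V(Cr)$ has a \emph{unique} solution in $(0,1)$. Setting $g(r) := V(Cr) - r$, I would note that $g(0) = 0$ (since $V(0) = 0$), that $g'(0) = \frac{C}{2} - 1 > 0$ (using $V'(0) = \frac12$, the same value underlying the computation $\frac{\p h_1}{\p r_1}(0,0) = \frac{K_1}{2}-1$), and that $g$ is strictly concave on $(0,\infty)$ because $g''(r) = C^2 V''(Cr) < 0$ by Property 5 of Lemma \ref{lV}. A strictly concave function that vanishes at the origin has at most two zeros, one being $r = 0$; combined with $g'(0) > 0$ and $g(1) = V(C) - 1 < 0$, this produces exactly one further zero $r(C) \in (0,1)$. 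This makes the synchronization level attached to a symmetry class unambiguous.

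With this in hand the equivalence is immediate. For $(2) \Rightarrow (1)$, common membership in $[C]$ gives $K_1 + L_1 = K_1' + L_1' = C$, so $r$ and $r'$ solve the same equation $s = V(Cs)$, whence $r = r'$ by uniqueness. For $(1) \Rightarrow (2)$, from $r = r'$ I obtain $V(Cr) = r = r' = V(C'r') = V(C'r)$; since $r > 0$ and $V$ is strictly increasing (Property 4 of Lemma \ref{lV}), injectivity forces $Cr = C'r$, hence $C = C'$, so both vectors lie in $[C]$. The final one-to-one correspondence then follows by reading off the two directions: the uniqueness step makes $C \mapsto r(C)$ a well-defined map from $(2,\infty)$ into $(0,1)$, and the argument of $(1)\Rightarrow(2)$ shows it is injective.

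I do not anticipate any serious obstacle; the single point deserving care is the uniqueness claim, namely that a strictly concave function with a zero and positive derivative at the origin has precisely one additional zero. This is exactly the ingredient already used implicitly in the proof of Theorem \ref{reg1} when concluding $K_1 + L_1 = K_2 + L_2$, so it can be stated cleanly and invoked without further difficulty.
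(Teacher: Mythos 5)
Your proof is correct and follows essentially the same route as the paper's: both directions reduce to the scalar equation $r = V(Cr)$ and hinge on the strict monotonicity and concavity of $V$ from Lemma \ref{lV}, exactly as in the paper's argument. The one place you go beyond the paper is worthwhile: you prove explicitly that $r = V(Cr)$ has a unique solution in $(0,1)$ for each $C>2$ (via strict concavity of $g(r) = V(Cr)-r$ together with $g(0)=0$, $g'(0)>0$, $g(1)<0$), a fact the paper's $(2 \Rightarrow 1)$ step and its final one-to-one correspondence use only implicitly.
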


\begin{proof}
We will prove both implications.\\
 $(1 \Rightarrow 2)$: Assume that $r = r'$, then 
\begin{equation}
r =  V([K_1 + L_1  ] r) = V([K'_1 + L'_1]r ),
\end{equation}
which implies $K_1 + L_1 > 0$ and $K'_1 + L'_1 > 0$, since $r \in (0,1)$. Now since $V$ is strictly increasing on $(0, \infty)$ we have $K_1 + L_1 = K'_1 + L'_1 $, hence 
\begin{equation}
(K_1, K_2, L_1, L_2) \sim_S (K'_1, K'_2, L'_1, L'_2).
\end{equation}
Take $C = K_1 + L_1$, then $C \in \R_{> 2}$ because we assumed that a symmetric solution exists. The result now follows. \newline \\
$(2 \Rightarrow 1):$ We assume that $(K_1, K_2, L_1, L_2) \sim_S (K'_1, K'_2, L'_1, L'_2)$ and $K_1 + L_1 = C$, with $C \in \R_{> 2}$. Furthermore we have 
\begin{align}
r  &= V( [K_1 + L_1 ] r ),\\
r' &= V( [K'_1 + L'_1] r' ).
\end{align}
Since both parameter vectors are in the same symmetry class we have $K_1  + L_1 = K_2 + L_2$ and therefore $r = r'$.
\newline \\
To conclude, note that if $(K_1, K_2, L_1, L_2) \in [C]$, then 
\begin{equation}
K_1 + L_1 = K_2 + L_2 = C,
\end{equation}
and therefore the corresponding synchronization level $r(C)$ is the solution of 
\begin{equation}
r = V([K_1 + L_1] r ) = V(C r).
\end{equation}
This synchronization level is uniquely determined because $C \mapsto r(C)$ is strictly increasing.
\end{proof}

Properties and asymptotic expressions for $C \mapsto r(C)$ are given in Proposition \ref{prop:r(C)} and Proposition \ref{prop:asympr(C)} in the appendix.

\section{Conclusion}
\label{sec:conclusion}
We have introduced a geometric interpretation of the self-consistency equations for the two-community noisy Kuramoto model that allow us to analyze when and how many solutions to the self-consistency equations exist. We have also analyzed two types of solutions more explicitly, namely, the unsynchronized and symmetrically synchronized solutions and have shown that the phase difference between the average phases of the two communities is always zero or $\pi$ in steady-state.

In a second paper of this series we will make use of the geometric interpretation to further refine the ten regions identified here. This refinement will rely on the identification of all possible bifurcation points arising in this model. Each type of bifurcation point gives rise to a solution boundary in the phase diagram separating regions with a different number of solutions. Furthermore, the geometric interpretation allows us to easily calculate the asymptotes of these solution boundaries.

In terms of applications, it is interesting to note that the synchronization levels in the two communities of the SCN are typically thought to be symmetrically synchronized \cite{Buijink2016}. Our analysis of the symmetrically synchronized solution could then be applied to estimate the interaction strength and noise strength parameters when modeling the SCN by a two-community Kuramoto model. Since our analysis also identifies critical points in the phase diagram, such an estimation could shed more light on the critical brain hypothesis \cite{Chialvo2010}.

\begin{acknowledgments}
The authors are grateful to F.\ den Hollander for guiding discussions and detailed comments.
\end{acknowledgments}

\section*{Data Availability}
Data sharing is not applicable to this article as no new data were created or analyzed in this study.

\appendix

\section{Proofs omitted from the main text}
\subsection{Proof of Proposition \ref{prop:selfcons}}
\label{app:selfcons}
\begin{proof}
First we note that 
\begin{equation}
\int_\S \eee^{a \cos \theta + b \sin \theta} \ddd \theta = 2 \pi \I_0 ( \sqrt{a^2 + b^2} ), \label{eq:B1 }
\end{equation}
due to the identity in \cite[pg. 339 equation 3.338 4]{Jeffrey2007} with $a=0$, $p=0$, $q=0$. Differentiating the left and right-hand side of (\ref{eq:B1 }) with respect to $a$ and $b$ gives 
\begin{align}
\int_\S \cos \theta \eee^{a \cos \theta + b \sin \theta} \ddd \theta  =  2 \pi \frac{\p }{\p a}  \I_0 ( \sqrt{a^2 + b^2} ),\\
\int_\S \sin \theta \eee^{a \cos \theta + b \sin \theta} \ddd \theta  =  2 \pi \frac{\p }{\p b}   \I_0 ( \sqrt{a^2 + b^2} )\nonumber.
\end{align}
By using the identity in \cite[9.6.27]{Abramowitz65} we obtain 
\begin{align}
\label{eq:besselidentityderivative}
\int_\S \cos \theta \eee^{a \cos \theta + b \sin \theta} \ddd \theta = \frac{2 \pi a \I_1(\sqrt{a^2 + b^2})}{\sqrt{a^2 + b^2}},\\
\int_\S \sin \theta \eee^{a \cos \theta + b \sin \theta} \ddd \theta = \frac{2 \pi b \I_1(\sqrt{a^2 + b^2})}{\sqrt{a^2 + b^2}},\nonumber
\end{align}
with $a, b, \in \R$ and $\I_m(x) := \frac{1}{2 \pi} \int_\S (\cos \theta)^m \exp(x \cos \theta) \ddd \theta$ the \emph{modified Bessel function of the first kind}, $m \in \{0,1\}$.
The trigonometric identity $\cos(a - b) = \cos a\cos b + \sin a \sin b$, for $a, b \in \R$, allows us to write
\begin{align}
 L_1 r_2 \cos(\psi_2 - \theta) + K_1 r_1 \cos(\psi_1 - \theta)	 &= a_1 \cos \theta + b_1 \sin \theta,
\end{align}
with 
\begin{align}
a_1 &= L_1 r_2 \cos \psi_2 + K_1 r_1 \cos \psi_1,\\
b_1 &= L_1 r_2 \sin \psi_2 + K_1 r_1 \sin \psi_1,
\end{align}
which allows us, together with \eqref{eq:besselidentityderivative}, to rewrite \eqref{eq:osc1} as
\begin{equation}
r_k = \frac{( a_k \cos \psi_k + b_k \sin \psi_k ) \I_1\left( \sqrt{a_k^2 + b_k^2} \right)}{ \sqrt{a_k^2 + b_k^2} \I_0 \left( \sqrt{a_k^2 + b_k^2}\right)}, 
\end{equation}
for $k\in \{1, 2\}$. For the equation with $k=2$ we have
\begin{align}
a_2 &= L_2 r_1 \cos \psi_1 + K_2 r_2 \cos \psi_2,\\
b_2 &= L_2 r_1 \sin \psi_1 + K_2 r_2 \sin \psi_2.
\end{align}
We define $\psi := \psi_2 - \psi_1$ and note that
\begin{align}
 a_1 \cos \psi_1 + b_1 \sin \psi_1 &= K_1 r_1 + L_1 r_2 \cos \psi,\\
 a_2 \cos \psi_2 + b_2 \sin \psi_2	&= K_2 r_2 + L_2 r_1 \cos \psi.
\end{align}
In addition, we have 
\begin{align}
a_1^2 + b_1^2	&= K_1^2 r_1^2 + L_1^2 r_2^2 + 2 K_1 L_1 r_1 r_2 \cos \psi,\\
a_2^2 + b_2^2 	&= K_2^2 r_2^2 + L_2^2 r_1^2 + 2 K_2 L_2 r_1 r_2 \cos \psi.
\end{align}

Since $V(x) = \frac{\I_1(x)}{\I_0(x)}$ and $W(x) = \frac{2V(x)}{x}$, we have by substitution

\begin{align}
r_k = &\frac{K_k r_k + L_k r_{k'} \cos \psi}{2} \\
&\times W \left( \sqrt{ K_k^2 r_k^2 + L_k^2 r_{k'}^2 + 2 K_k L_k r_k r_{k'} \cos \psi} \right)\nonumber
\end{align}
For the second part note that $\sin(a-b) = \sin a \cos b - \cos a \sin b$, for $a, b \in \R$. We can now rewrite \eqref{eq:osc3}
\begin{equation}
0 = \frac{(a_k \sin \psi_k - b_k \cos \psi_k)\I_1\left( \sqrt{a_k^2 + b_k^2} \right)}{ \sqrt{a_k^2 + b_k^2} \I_0 \left( \sqrt{a_k^2 + b_k^2}\right)}.
\end{equation}
Note that 
\begin{align}
a_1 \sin \psi_1 - b_1 \cos \psi_1	&= - L_1 r_2 \sin( \psi),\\
a_2 \sin \psi_2 - b_2 \cos \psi_2	&= - L_2 r_1 \sin(\psi),
\end{align}
with $\psi = \psi_2 - \psi_1$. Substitution and multiplying both sides with $-2$ gives equations \eqref{eq:sc3}.
\end{proof}

\subsection{Proof for Theorem \ref{thm:hneg}}
\label{app:proofhneg}
\begin{proof}
We will prove both implications.\\
($``\Rightarrow"$) Note that $h_1^{K_1, L_1}(r_1, r_2) < 0$ implies that
\begin{equation}
V(K_1 r_1 + L_1 r_2 ) < r_1. \label{eq:ine2}
\end{equation}
Since this inequality holds true for all $(r_1, r_2) \in [0,1]^2\setminus \{(0,0)\}$ we consider points on the edge of the unit square. Let $(r_1, r_2) = (r , 0)$ for some $r \in (0,1)$, then (\ref{eq:ine2}) reduces to 
\begin{equation}
V(K_1 r  ) < r. \label{eq:ine3} 
\end{equation}
Note (\ref{eq:ine3}) is true for all $r \in (0,1)$ and therefore $K_1 < 2$. Now take $(r_1, r_2) = (0 , r)$ for some $r \in (0,1)$, then (\ref{eq:ine2}) reduces to 
\begin{equation}
V( L_1 r  ) < 0,
\end{equation}
which implies that $L_1 < 0$.\\
($``\Leftarrow"$)
Let $(r_1,r_2) \in [0,1]^2\setminus \{(0,0)\}$, then
\begin{equation}
\frac{K_1 r_1}{2} + \frac{L_1 r_2}{2} \leq r_1 + \frac{L_1 }{2} r_2 \leq r_1.
\end{equation}
Here we used that $\frac{K_1 r_1}{2} \leq r_1$, $\frac{L_1}{2} < 0$ and $r_2 \geq 0$. In order to conclude the proof we make use of the following lemma.

\begin{lemma}\label{thm:neg}
Let $(r_1, r_2) \in [0,1]^2\setminus \{(0,0)\}$, $K_1 \in \R$ and $L_1 \in \R \setminus \{0\}$. If
\begin{equation}
r_1 \geq \frac{K_1 r_1 + L_1 r_2}{2},
\end{equation}
then $h_1^{K_1, L_1}(r_1, r_2) < 0$.
\end{lemma}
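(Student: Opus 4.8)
The plan is to unfold the definition $h_1^{K_1,L_1}(r_1,r_2) = V(K_1 r_1 + L_1 r_2) - r_1$ and abbreviate $x := K_1 r_1 + L_1 r_2$, so that the desired conclusion $h_1^{K_1,L_1}(r_1,r_2) < 0$ becomes simply $V(x) < r_1$, while the hypothesis reads $x \le 2 r_1$. The facts about $V$ that I would rely on all come from Lemma \ref{lV}: $V$ is strictly increasing, $V(0) = 0$, and $V$ is strictly concave with $V'(0) = \tfrac12$. From the last three, a standard tangent-line argument for a strictly concave function vanishing at the origin yields the sub-linear bound $V(y) < V'(0)\,y = \tfrac{y}{2}$ for every $y > 0$ (equivalently $W(y) < 1$ for $y>0$, since $W(0^+) = 2V'(0) = 1$). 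This bound is the real engine of the proof.

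First I would treat the generic case $r_1 > 0$. By monotonicity of $V$, the hypothesis $x \le 2 r_1$ gives $V(x) \le V(2 r_1)$, and applying the sub-linear bound at $y = 2 r_1 > 0$ yields $V(2 r_1) < r_1$. Chaining the two inequalities produces $V(x) < r_1$, which is exactly the assertion $h_1^{K_1,L_1}(r_1,r_2) < 0$. No further case distinction on the sign of $x$ is needed here, since monotonicity absorbs it automatically.

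The only delicate point, and the step I expect to be the main obstacle, is the boundary case $r_1 = 0$, where the sub-linear bound becomes vacuous and the naive chain collapses. Here I would argue as follows: with $r_1 = 0$ the hypothesis forces $x \le 0$, and moreover $x = L_1 r_2$. Since $(r_1,r_2) \neq (0,0)$ forces $r_2 > 0$, and $L_1 \neq 0$ by assumption, we get $x = L_1 r_2 \neq 0$, hence in fact $x < 0$. Then monotonicity of $V$ together with $V(0) = 0$ gives $V(x) < V(0) = 0 = r_1$, completing the argument. It is precisely in this case that the standing hypotheses $L_1 \neq 0$ and $(r_1,r_2) \neq (0,0)$ are indispensable: without them one could have $x = 0$ and $r_1 = 0$ simultaneously, whence $V(x) = r_1 = 0$ and the \emph{strict} inequality would fail. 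I would therefore be careful to isolate this edge case explicitly rather than let it hide inside the generic estimate.
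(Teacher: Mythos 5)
Your proof is correct and rests on the same engine as the paper's: Property 6 of Lemma \ref{lV} (that $V(y) < y/2$ for $y>0$), combined with the standing hypotheses $L_1 \neq 0$ and $(r_1,r_2) \neq (0,0)$ to dispose of the degenerate case where the argument of $V$ vanishes. The paper organizes this as a proof by contradiction (assuming $h_1^{K_1,L_1}(r_1,r_2) \geq 0$, deducing $K_1 r_1 + L_1 r_2 > 0$, and then $r_1 < (K_1 r_1 + L_1 r_2)/2$, contradicting the hypothesis), whereas you argue directly, pushing the argument up to $2r_1$ by monotonicity and splitting off the case $r_1 = 0$; this is the same argument in contrapositive form.
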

\begin{proof}
We will prove the Lemma by contradiction. Suppose that $h_1^{K_1, L_1}(r_1, r_2) \geq 0$. Then
\begin{equation}
r_1 \leq V(K_1 r_1 + L_1 r_2 ).\label{eq:hp1}
\end{equation}
Since $r_1 \geq 0$, we have $V(K_1 r_1 + L_1 r_2  ) > 0$. Note that $V(K_1 r_1 + L_1 r_2  ) = 0$ is not possible, because this is true if and only if $K_1 r_1 + L_1 r_2 = 0$, while we assumed that $(r_1, r_2) \neq (0,0)$ and $L_1 \neq 0$. It follows that $K_1 r_1 + L_1(\cos \psi) r_2  > 0$. We use property $6$ in \ref{lV} to get
\begin{equation}
V(K_1 r_1 + L_1 r_2  ) < \frac{K_1 r_1 + L_1 r_2 }{2}. \label{eq:hp2}
\end{equation}
Combining (\ref{eq:hp1}) and (\ref{eq:hp2}), we get
\begin{equation}
r_1 < \frac{K_1 r_1 + L_1 r_2 }{2}, 
\end{equation}
from which the claim follows.
\end{proof}
This concludes the proof.
\end{proof}

\subsection{Proof of Lemma \ref{lem:pderivatives}}
\label{app:pderivatives}

\begin{proof}
We prove the four claims separately.
\begin{enumerate}

\item The curve $\Gamma_1$ is described by the equation $V(K_1 r_1 + L_1 r_2(r_1)) - r_1 = 0$, where $r_2$ depends on $r_1$. Differentiating with respect to $r_1$, we obtain

\begin{equation}
\left(K_1 + L_1 \frac{\p \Gamma_1}{\p r_1} \right) C_{1,1} - 1 = 0, \label{eq:dife1}
\end{equation}
which gives
\begin{equation}
\frac{\p \Gamma_1}{\p r_1} = \frac{1 - K_1 C_{1,1}}{L_1 C_{1,1}}. \label{eq:dife2}
\end{equation}

\item Differentiating \eqref{eq:dife1} with respect to $r_1$, we obtain
\begin{align}
K_1 &\left( K_1 + L_1 \frac{\p \Gamma_1}{\p r_1} \right) C_{1,2} + L_1 C_{1,1} \frac{\p^2 \Gamma_1}{\p r_1^2} \nonumber\\
&+ L_1 \frac{\p \Gamma_1}{\p r_1} \left( K_1 + L_1 \frac{\p \Gamma_1}{\p r_1} \right) C_{1,2} = 0. \label{eq:dife22}
\end{align} 

Using \eqref{eq:dife2}, we find 
\begin{equation}
K_1 + L_1 \frac{\p \Gamma_1}{\p r_1} = \frac{1}{C_{1,1}}. \label{eq:dife3}
\end{equation}
Substituting \eqref{eq:dife2} and \eqref{eq:dife3} into \eqref{eq:dife22} and multiplying by $C_{1,1}$, we find
\begin{equation}
L_1 C_{1,1}^2 \frac{\p^2 \Gamma_1}{\p r_1^2} + \frac{C_{1,2}}{C_{1,1}} = 0.
\end{equation}
Rewriting gives the desired result.
\item The curve $\Gamma_2$ is described by the equation $V(K_2 r_2(r_1) + L_2 r_1) - r_2(r_1) = 0$. Differentiating with respect to $r_1$, we obtain
\begin{equation}
\left( K_2 \frac{\p \Gamma_2}{\p r_1} + L_2 \right) C_2 - \frac{\p \Gamma_2}{\p r_1} = 0, \label{eq:dife4}
\end{equation}
which implies
\begin{equation}
\frac{\p \Gamma_2}{\p r_1} = \frac{L_2 C_2}{1 - K_2 C_2}.
\end{equation}

\item Differentiating \eqref{eq:dife4} with respect to $r_1$, we obtain

\begin{align}
K_2 &C_{2,1}  \frac{\p^2 \Gamma_2}{\p r_1^2} + K_2 \frac{\p \Gamma_2}{\p r_1} \left( K_2 \frac{\p \Gamma_2}{\p r_1} + L_2 \right) C_{2,2} \nonumber\\
&+ L_2 \left( K_2 \frac{\p \Gamma_2}{\p r_1} + L_2 \right) C_{2,2} - \frac{\p^2 \Gamma_2}{\p r_1^2} = 0,
\end{align}
which can be written as

\begin{align}
(1 - K_2 C_{2,1} ) \frac{\p^2 \Gamma_2}{\p r_1^2} = &\Big[ K_2 \frac{\p \Gamma_2}{\p r_1} \left( K_2 \frac{\p \Gamma_2}{\p r_1} + L_2 \right) \nonumber\\
&+ L_2 \left( K_2 \frac{\p \Gamma_2}{\p r_1} + L_2 \right) \Big] C_{2,2}. \label{eq:dife5}
\end{align}

Using \eqref{eq:dife4}, we find 
\begin{equation}
K_2 \frac{\p \Gamma_2}{\p r_1} + L_2 = \frac{L_2}{1 - K_2 C_{2,1}}. \label{eq:dife6}
\end{equation}
Substituting \eqref{eq:dife6} into \eqref{eq:dife5} gives the claim.
\end{enumerate}
\end{proof}

\subsection{Proof of Theorem \ref{thm:propfundcurves}}
\label{app:propfundcurves}
\begin{proof}\ \\
\textbf{Property 1:} Take $\psi = 0$. We have 
\begin{equation}
\frac{\p \Gamma_1}{\p r_1} = 0 \iff C_{1,1} = \frac{1}{K_1}. \label{eq:ct1}
\end{equation}
We assume that the turning point can occur at a point $(r_1, r_2)$, with $r_1 \in (0,1)$ and $r_2 > 0$. Furthermore, since $(r_1, r_2) \in \Gamma_1$ we have $r_1 = V(K_1 r_1 + L_1 r_2)$. Hence, it follows from Lemma \ref{derV1} that
\begin{equation}
C_{1,1} = 1 - \frac{r_1}{K_1 r_1 + L_1 r_2} - r_2^2.
\end{equation}
Solving $C_{1,1} = \frac{1}{K_1}$ in terms of $r_2$, we find
\begin{equation}
r_2(r_1) = \frac{-K_1^2 r_1^3+K_1^2 r_1-2 K_1 r_1}{L_1 \left(K_1 r_1^2-K_1+1\right)}. \label{eq:er2}
\end{equation}
The existence of a turning point requires that there exists a $r_1 \in (0,1)$ such that $r_2(r_1) = 0$. Solving for $r_2 = 0$, we get three solutions:
\begin{equation}
r_1 = 0, \quad r_1 = \sqrt{1 - \frac{2}{K_1} }, \quad r_1 = - \sqrt{ 1 - \frac{2}{K_1} }.
\end{equation}
Since $r_1 \in [0,1]$, we can discard the last solution. Furthermore, for $K_1 < 2$ the second solution is complex and at $K_1 =2$ the second solution is zero. Hence we may assume that $K_1 > 2$.

It is easy to see that $r_2$ is mirrored along the $x$-axis as $L_1 \mapsto - L_1$. Furthermore, if $L_1 < 0$, then $r_2(r_1) \geq 0$ when $0 \leq r_1 \leq \sqrt{1 - \tfrac{2}{K_1}}$ and $r_2(r_1) > 0$ when $\sqrt{1 - \tfrac{2}{K_1}} < r_1 \leq 1$. Since a solution requires that $(r_1, r_2) \in (0,1)^2$, it follows that if $L_1 >0$, then the possible solutions are
\begin{equation}
\left\{ (r_1, r_2) \in (0,1)^2 : \sqrt{1 - \tfrac{2}{K_1}} < r_1 < 1,~ r_2 > 0 \right\}. \label{eq:solset1}
\end{equation}

We next show that every solution in the solution set in \eqref{eq:solset1} is not contained in $\Gamma_1$. Using \eqref{eq:er2}, we find
\begin{align}
K_1 r_1 + L_1 r_2 &= K_1 r_1 + \frac{-K_1^2 r_1^3+K_1^2 r_1-2 K_1 r_1}{K_1 r_1^2-K_1+1} \nonumber\\
&= \frac{ K_1 r_1}{K_1(1 - r_1^2) - 1}.
\end{align}
It follows that a turning point solution on the curve $\Gamma_1$ solves 

\begin{equation}
r_1 = V\left( \frac{ K_1 r_1}{K_1(1 - r_1^2) - 1} \right). \label{eq:sysv1}
\end{equation}
Note that $r_1 \mapsto \frac{ K_1 r_1}{K_1(1 - r_1^2) - 1}$ is strictly increasing, and let $r_1(K_1)$ be the non-trivial solution of \eqref{eq:sysv1}. By plotting $r_1(K_1)$ and $\sqrt{1 - \tfrac{2}{K_1}}$, we see that $r_1(K_1) < \sqrt{1 - \tfrac{2}{K_1}}$ for all $K_1 > 2$. Hence, all points in the solution set given in \eqref{eq:solset1} are not on the curve $\Gamma_1$. If $L_1 < 0$, then a solution $(r_1, r_2) \in \Gamma_1$ satisfying \eqref{eq:ct1} exists and is unique. Furthermore, $\Gamma_1$ is strictly concave for $L_1 < 0$ so that this solution is a turning point.\\
\textbf{Property 2:} We prove this property by Taylor expanding $h_1(r1, r2)$ around $(r_1, r_2) = (0, 0)$ and showing that it is only possible to solve the equation $h_1(\ee, \dd) = 0$ in the domains mentioned in the theorem. The Taylor expansion leads to

\begin{equation}
h_1(\ee, \dd) = \left( \frac{K_1}{2} - 1 \right) \ee + \frac{L_1}{2} \dd + O(\ee^2) + O(\dd^2). \label{eq:ccond}
\end{equation}
Setting this to zero and solving for $\ee$, we get
\begin{equation}
\ee = - \frac{L_1}{(K_1 - 2)} \dd.
\end{equation}
Since we require both $\ee$ and $\dd$ to be larger than zero, we see that $(\ref{eq:ccond})$ can only hold in the domains of the theorem.\\
\textbf{Property 3:} Note that the function $r \mapsto V(K_1 r + L_1)$ is continuous on $[0,1]$, is starting at $V(L_1) > 0$ and is bounded by $1$. Hence there exists some $r \in (0,1)$ such that $V(K_1 r + L_1) - r = 0$. The uniqueness of this $r \in (0,1)$ follows from the observation that $\Gamma_1$ is strictly concave. The claim follows because $\Gamma_1$ is continuous on $[0,1]\setminus\{0,0\}$ by Lemma \ref{lem:ccurve}.
\end{proof}

\subsection{Proof of Theorem \ref{thm:zero}}
\label{app:zero}
\begin{proof}
First note that by Lemma \ref{0r} solutions of the form $r_1 = 0$ and $r_2 > 0$, and vice-versa, do not exist. Suppose we have a strictly positive solution $(r_1, r_2)$, i.e.
\begin{equation}
r_1	= V(K_1 r_1 + L_1 r_2 ),\quad \text{and}\quad r_2	= V(K_2 r_2 + L_2 r_1 ),
\end{equation}
with $r_1, r_2 > 0$. Since $V(x) > 0$ if and only if $x > 0$. It follows that
\begin{equation}
K_1 r_1 + L_1 r_2  > 0,\quad \text{and}\quad K_2 r_2 + L_2 r_1  > 0.
\end{equation}
Now we use property $6$ in Lemma \ref{lV} to get
\begin{align}
r_1 &< \frac{K_1 r_1 + L_1 r_2 }{2}, \label{eq:r11}\\
r_2 &< \frac{K_2 r_2 + L_2 r_1 }{2}. \label{eq:r22}
\end{align}

Hence if for given $(K_1,K_2,L_1,L_2) \in \R^4$ there is a positive solution, then the equation (\ref{eq:r11}) and (\ref{eq:r22}) hold simultaneously. Now we proceed as follows: we will prove the contrapositive. We will show that for the regions stated in the theorem equations \eqref{eq:r11} and \eqref{eq:r22} do not hold simultaneously.
\begin{itemize}
\item \textbf{Region 1 \& 2:} 
First, assume that $K_1 < 2$ and $L_2  > 0$. Rewriting equation \eqref{eq:r11} and using $K_1 < 2$ gives
\begin{equation}
r_1 \left(1 - \frac{K_1}{2} \right) < \frac{L_1 r_2}{2}.
\end{equation}
Since $K_1 < 2$,
\begin{equation}
r_1 < \frac{L_1}{2 - K_1} r_2. \label{eq:r12}
\end{equation}
Now if $L_2 > 0$, then substitution of \eqref{eq:r12} in \eqref{eq:r22} gives
\begin{align}
r_2 < \frac{K_2 r_2}{2} + \frac{L_2}{2} \frac{L_1}{2 - K_1} r_2  = \frac{1}{2} r_2 \left[ K_2  + \frac{L_1 L_2}{2 - K_1} \right].
\end{align}
Multiplying both sides by $2/r_2$ and rewriting gives
\begin{equation}
\beta^{\zero}(K_1, K_2, L_1, L_2) = (K_1 - 2)(K_2 - 2) - L_1 L_2  < 0.
\end{equation}
We conclude that, in order to have no synchronized solutions we need to have $\beta^{\zero} \geq 0$.
The same can be done when $K_2 < 2$ and $L_2 > 0$, by first rewriting equation \eqref{eq:r22}.

\item \textbf{Region 3 \& 4:} Assume $K_1 > 2$ and $L_2 \cos \psi < 0$. Using $K_1 > 2$, we obtain by rewriting \eqref{eq:r11} that
\begin{equation}
r_1 > \frac{L_1 r_2}{2 - K_1},
\end{equation}
Multiplying both sides by $-\frac{L_2 }{2}$ gives
\begin{equation}
-\frac{L_2 r_1}{2} > - \frac{1}{2} \frac{L_1 L_2}{2 - K_1} r_2, \label{eq:-r12}
\end{equation}
since $L_{2}<0$. Substituting \eqref{eq:-r12} in \eqref{eq:r22} and rearranging gives
\begin{equation}
\beta^{\zero}(K_1,K_2,L_1,L_2) = (K_1 - 2)(K_2 - 2) - L_1 L_2  > 0.
\end{equation}
In order to have no synchronized solutions, we need to have $\beta^{\zero} \leq 0$. We can apply the same procedure as above for the parameter values $K_2 > 2$ and $L_1 < 0$.

\item \textbf{Region 5 \& 6:}
Let us first consider region $5$. Using $K_1 > 2$ and $K_2 < 2$ rewriting \eqref{eq:r11} and \eqref{eq:r22} gives
\begin{equation}
r_1 > \frac{L_1}{2 - K_1} r_2 \quad \text{and} \quad r_2 < \frac{L_2}{2 - K_2} r_1. \label{eq:las1}
\end{equation}
Note that 
\begin{equation}
\frac{L_2}{2 - K_2} > 0 \quad \text{and} \quad \frac{L_1}{2 - K_1} < 0. \label{eq:las2}
\end{equation}
Combining \eqref{eq:las1} and \eqref{eq:las2} gives
\begin{equation}
r_1 > \frac{L_1 L_2}{(K_1 - 2)(K_2 - 2)} r_1,
\end{equation}
from which it follows that $\beta^{\zero} < 0$. Hence in the case that $\beta^{\zero} \geq 0$ equations \eqref{eq:r11} and \eqref{eq:r22} do not hold simultaneously. The proof for region $6$ follows by a similar argument.

\item \textbf{Region 7 \& 8:} This is a direct consequence of Theorem \ref{thm:hneg}.

\end{itemize}
Since we have shown that for each region in the theorem the equations \eqref{eq:r11} and \eqref{eq:r22} do not hold simultaneously the proof is complete.
\end{proof}

\section{Properties of V and W}

\begin{lemma}[Properties of $V$] The function $V$ has the following properties: \label{lV}
\begin{enumerate}
\item $V(0) = 0$,
\item $V'(0) = \frac{1}{2}$,
\item $\lim_{x \to \infty} V(x) = 1$,
\item $V'(x)>0$ everywhere except at zero,
\item $V''(x)<0$ everywhere except at zero,
\item $V(x) < \frac{x}{2}$ for $x \in (0,\infty)$,
\item $V(-x) = - V(x)$ for all $x \in (0,\infty)$.
\end{enumerate}
\end{lemma}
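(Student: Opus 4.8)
The plan is to derive all seven properties from a single object: the partition function $Z(x):=\int_{\S}\eee^{x\cos\theta}\,\ddd\theta$, which is smooth and strictly positive, together with the tilted probability measure $\mu_x(\ddd\theta)\propto \eee^{x\cos\theta}\,\ddd\theta$ on $\S$. Writing $\langle f\rangle_x$ for the $\mu_x$-expectation of $f$, one has $V(x)=Z'(x)/Z(x)=\langle\cos\theta\rangle_x$, and differentiating such a ratio yields the covariance identity $\tfrac{\ddd}{\ddd x}\langle f\rangle_x=\langle f\cos\theta\rangle_x-\langle f\rangle_x\langle\cos\theta\rangle_x$. Applying this twice gives $V'(x)=\mathrm{Var}_x(\cos\theta)$ and $V''(x)=\EE_x[(\cos\theta-V(x))^3]$, i.e. the variance and the third central moment of $\cos\theta$ under $\mu_x$. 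I would record these three formulas first, because six of the seven properties then reduce to short computations.

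With this setup, Properties 1, 2, 4 and 7 are immediate. Property 1 holds because $\mu_0$ is uniform, so $V(0)=\tfrac{1}{2\pi}\int_{\S}\cos\theta\,\ddd\theta=0$; Property 7 follows from the substitution $\theta\mapsto\theta+\pi$, which sends $\cos\theta\mapsto-\cos\theta$ while preserving the normalization, giving $V(-x)=-V(x)$. Property 4 is just positivity of a non-degenerate variance: $\cos\theta$ is not $\mu_x$-almost-surely constant, so $V'(x)=\mathrm{Var}_x(\cos\theta)>0$ for every $x$; specializing to $x=0$ gives $V'(0)=\EE_0[\cos^2\theta]=\tfrac12$, which is Property 2. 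For Property 3 I would invoke Laplace's method: as $x\to\infty$ the measure $\mu_x$ concentrates at the maximizer $\theta=0$ of $\cos\theta$, so $\langle\cos\theta\rangle_x\to 1$.

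Next I would establish the auxiliary identity $Z''=Z-Z'/x$, obtained by integrating $\tfrac{\ddd}{\ddd\theta}[\sin\theta\,\eee^{x\cos\theta}]$ over $\S$ (the integral of a total derivative of a periodic function vanishes). This yields the Riccati relation $V'(x)=1-V(x)/x-V(x)^2$ and, after one more differentiation, the explicit form $V''(x)=2V^3-2V+3V^2/x-1/x+2V/x^2$. Property 6 I would prove independently of concavity from the power series: recalling $V=\I_1/\I_0$ from the proof of Proposition \ref{prop:selfcons}, write $\I_0(x)=\sum_{k\ge0}a_k$ and $\I_1(x)=\tfrac{x}{2}\sum_{k\ge0}a_k/(k+1)$ with $a_k=(x/2)^{2k}/(k!)^2>0$. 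Since $1/(k+1)\le1$ with strict inequality for $k\ge1$, the weighted average of the $1/(k+1)$ against the $a_k$ is strictly below $1$ for $x>0$, so $V(x)<x/2$.

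The main obstacle is Property 5, the strict concavity $V''<0$ on $(0,\infty)$ (by oddness the sign then reverses on $(-\infty,0)$ and $V''(0)=0$). Its content is that the third central moment $\EE_x[(\cos\theta-V)^3]$ is negative: for $x>0$ the tilting pushes mass toward the maximum of $\cos\theta$, making the law of $\cos\theta$ left-skewed. The delicacy is that the crude bounds $0<V<\min(1,x/2)$ are not sharp enough to sign the explicit expression for $V''$ directly (viewed as a downward quadratic in $x$, the available lower bound $x>V/(1-V^2)$ from the Riccati relation does not push $x$ past the larger root). I would therefore reduce $V''$ to $\tfrac14 v_3-\tfrac34 v_1-\tfrac32 v_1 v_2+2v_1^3$ with $v_n=\I_n/\I_0$, using $\cos^2\theta=\tfrac12(1+\cos2\theta)$, $\cos^3\theta=\tfrac14(3\cos\theta+\cos3\theta)$ and $\langle\cos n\theta\rangle_x=v_n$, and close the sign with a Turán-type inequality among $\I_0,\I_1,\I_2,\I_3$ (equivalently, by invoking the classical strict concavity of $\I_1/\I_0$ on $(0,\infty)$). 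As a consistency check, concavity together with $V(0)=0$ and $V'(0)=\tfrac12$ re-derives Property 6, since a strictly concave $V$ lies below its tangent line $x/2$ at the origin.
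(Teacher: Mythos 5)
Your framework is sound, and for six of the seven properties it is complete and correct: the exponential-family identities $V=\langle\cos\theta\rangle_x$, $V'=\mathrm{Var}_x(\cos\theta)$, $V''=\EE_x[(\cos\theta-V(x))^3]$ are right, the derivations of Properties 1, 2, 3, 4, 6, 7 all go through, and your Riccati relation and expression for $V''$ agree with Lemma \ref{derV1}. Note that the paper itself does not prove this lemma at all --- it simply cites \cite[Proposition 3.1]{Meylahn2020} --- so any self-contained argument is by construction a different route. You are also more careful than the lemma's own wording on one point: since $V$ is odd, $V''$ is odd, so ``$V''<0$ everywhere except at zero'' can only be meant on $(0,\infty)$, with the sign reversing on $(-\infty,0)$; your reading is the correct one.

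The genuine gap is Property 5, which is the one property the paper's downstream arguments actually lean on (concavity of $h_1$ in Lemma \ref{lem:derivativesofh1}, the uniqueness argument in Theorem \ref{reg1}). Your reduction of $V''$ to $\tfrac14 v_3-\tfrac34 v_1-\tfrac32 v_1v_2+2v_1^3$ with $v_n=\I_n/\I_0$ is correct, but the closing step is not supplied: the ``Turán-type inequality among $\I_0,\I_1,\I_2,\I_3$'' is never stated, and it is not the standard Turán inequality $\I_1^2>\I_0\I_2$ --- using the recurrences $\I_3=\I_1-4\I_2/x$ and $\I_2=\I_0-2\I_1/x$, the inequality you need collapses back to exactly $V''<0$, i.e.\ it is a restatement of the claim rather than a known result that implies it. The parenthetical fallback, ``invoking the classical strict concavity of $\I_1/\I_0$ on $(0,\infty)$,'' is circular as a proof step, since $V=\I_1/\I_0$ means this \emph{is} Property 5; as a citation it is legitimate, but then the whole lemma could equally be dispatched by citation, which is precisely what the paper does. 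So the proposal's added value over the paper hinges on a self-contained proof of the third-moment negativity $\EE_x[(\cos\theta-V)^3]<0$ for $x>0$, and that piece is missing.
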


\begin{proof}
The proof is given in \cite[Proposition 3.1]{Meylahn2020}
\end{proof}

\begin{lemma}[Derivatives of $V$] \label{derV1}
The first and second derivative of $V$ are given by
\begin{align}
V'(x) &= - V(x)^2 - \frac{V(x)}{x} + 1,\\
V''(x)&= 2 V(x)^3 + \frac{3}{x} V(x)^2 + \left(\frac{2}{x^2} - 2\right) V(x) - \frac{1}{x}.
\end{align}
\end{lemma}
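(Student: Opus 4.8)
The plan is to exploit the fact that $V$ is a logarithmic derivative of a partition function, which reduces both derivative formulas to moment computations that close up via a single integration-by-parts identity. Write $Z(x) = \int_\S \eee^{x\cos\theta}\,\ddd\theta$ and note that $Z'(x) = \int_\S \cos\theta\,\eee^{x\cos\theta}\,\ddd\theta$, so that $V(x) = Z'(x)/Z(x) = (\log Z)'(x)$. Equivalently, $V(x) = \langle\cos\theta\rangle_x$, the mean of $\cos\theta$ under the probability density proportional to $\eee^{x\cos\theta}$ on $\S$. This is consistent with the representation $V = \I_1/\I_0$ already used in the proof of Proposition \ref{prop:selfcons}.

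First I would compute $V'$. Differentiating the ratio $V = Z'/Z$ gives $V' = Z''/Z - (Z'/Z)^2$; since $Z''(x) = \int_\S \cos^2\theta\,\eee^{x\cos\theta}\,\ddd\theta$, this is the variance-type identity $V'(x) = m_2 - V^2$, where $m_2 := \langle\cos^2\theta\rangle_x$. The crux is then to express $m_2$ in terms of $V$ and $x$. Here I would integrate the relation $\frac{\ddd}{\ddd\theta}\big(\sin\theta\,\eee^{x\cos\theta}\big) = \cos\theta\,\eee^{x\cos\theta} - x\sin^2\theta\,\eee^{x\cos\theta}$ over $\S$; by periodicity the left-hand side integrates to zero, yielding $\int_\S\cos\theta\,\eee^{x\cos\theta}\,\ddd\theta = x\int_\S(1-\cos^2\theta)\eee^{x\cos\theta}\,\ddd\theta$. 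Dividing by $Z(x)$ gives $V = x(1 - m_2)$, i.e. $m_2 = 1 - V/x$. Substituting back produces $V'(x) = 1 - V/x - V^2$, the first claimed formula. The same identity follows alternatively from the modified-Bessel recurrences $\I_0' = \I_1$ and $\I_1' = \I_0 - \I_1/x$ applied to $V = \I_1/\I_0$, which may be the more natural route given the earlier identification of $V$ with $\I_1/\I_0$.

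The second derivative is then purely mechanical: differentiate $V' = 1 - V^2 - V/x$ to obtain $V'' = -2VV' - V'/x + V/x^2$, and substitute the just-derived expression for $V'$ to eliminate it. Expanding and collecting terms in powers of $V$ and $1/x$ reproduces $V''(x) = 2V^3 + \frac{3}{x}V^2 + \big(\frac{2}{x^2} - 2\big)V - \frac{1}{x}$.

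The only genuine obstacle is the moment identity $m_2 = 1 - V/x$; once it is in hand, the first formula is immediate and the second is bookkeeping. I expect the integration-by-parts derivation of that identity to be the cleanest self-contained step, with the Bessel-recurrence argument serving as an equivalent shortcut.
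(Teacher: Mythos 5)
Your proof is correct, and it shares the paper's overall skeleton: writing $m_2(x) := \int_\S \cos^2\theta\,\eee^{x\cos\theta}\,\ddd\theta / \int_\S \eee^{x\cos\theta}\,\ddd\theta$, both you and the paper differentiate the quotient defining $V$ to get the variance identity $V' = m_2 - V^2$, both then eliminate $m_2$ via the relation $m_2(x) = 1 - V(x)/x$, and both obtain $V''$ by differentiating the formula for $V'$ and substituting it back in (this last step is identical in the two arguments). The genuine difference is how the moment relation is justified. The paper passes to modified Bessel functions and cites Watson's recurrence $x\I_2(x) + 2\I_1(x) - x\I_0(x) = 0$ from the literature, whereas you derive the needed relation from scratch by integrating $\tfrac{\ddd}{\ddd\theta}\bigl(\sin\theta\,\eee^{x\cos\theta}\bigr) = \cos\theta\,\eee^{x\cos\theta} - x\sin^2\theta\,\eee^{x\cos\theta}$ over $\S$, periodicity killing the left-hand side. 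Your route is fully self-contained — it is, in effect, an elementary proof of exactly the Bessel identity the paper imports — while the paper's is shorter at the cost of an external citation; your alternative via the recurrences $\I_0' = \I_1$ and $\I_1' = \I_0 - \I_1/x$ is the closest match to the paper's computation. A side benefit of your derivation: it bypasses a small slip in the paper's intermediate display, which asserts $m_2(x) = \tfrac{1}{2}\,\I_2(x)/\I_0(x)$; with the standard $\I_2$ appearing in Watson's identity the correct statement is $m_2(x) = \tfrac{1}{2} + \tfrac{1}{2}\,\I_2(x)/\I_0(x)$, and the dropped constant $\tfrac{1}{2}$ is needed to land on the (correct) final formula $V' = 1 - V/x - V^2$ rather than $\tfrac{1}{2} - V/x - V^2$. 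The lemma as stated is right, as your independent computation confirms.
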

\begin{proof}
Note that

\begin{align}
V'(x) &= \frac{\int_\S \eee^{x \cos \theta } \ddd \theta \int_\S \cos^2 \theta \eee^{x \cos \theta} \ddd \theta}{(\int_\S \eee^{x \cos \theta} \ddd \theta)^2 }\\
 &\hspace{2cm} - \frac{ (\int_\S \cos \theta \eee^{x \cos(\theta)} \ddd \theta)^2}{(\int_\S \eee^{x \cos \theta} \ddd \theta)^2 }\nonumber\\
	&= \frac{\int_\S \cos^2 \theta \eee^{x \cos \theta} \ddd \theta}{\int_\S \eee^{x \cos \theta } \ddd \theta } - V^2(x),\\
	&= \frac{1}{2}\frac{\I_2(x)}{\I_0(x)} - V^2(x). \label{eq:pe3}
\end{align}
The following identity in \cite{Watson1944} is crucial:
\begin{equation}
x \I_2(x) + 2 \I_1(x) - x \I_0(x) = 0, \quad \text{for all } x \in \R\setminus \{0\}. \label{eq:pe1}
\end{equation}
Rewriting (\ref{eq:pe1}), we get
\begin{equation}
\frac{\I_2(x)}{\I_0(x)} = 1 - \frac{2}{x} \frac{\I_1(x)}{\I_0(x)} = 1 -  \frac{2}{x} V(x). \label{eq:pe2}
\end{equation}
Substitution of (\ref{eq:pe2}) into (\ref{eq:pe3}) gives the result for the first derivative of $V$. For the second derivative of $V$, note that
\begin{align}
V''(x) &= \frac{\p}{\p x} \left( - V(x)^2 - \frac{V(x)}{x} + 1 \right)\nonumber\\
 &= - 2 V(x) V'(x) - \frac{V'(x)}{x} + \frac{V(x)}{x^2}. \label{eq:pe4}
\end{align}
The result follows by substitution of $V'(x) = - V(x)^2 - \frac{V(x)}{x} + 1$ into the right-hand side of (\ref{eq:pe4}).
\end{proof}

\begin{lemma}[Properties of $W$]\label{thm:W} The function $W$ has the following properties:
\begin{enumerate}
\item $\lim_{x\to 0} W(x) = 1$,
\item $\lim_{x \to \infty} W(x) = 0$. 
\item $x \mapsto W(x)$ is continuous and strictly decreasing on $(0, \infty)$,
\item $W(-x) = W(x)$.
\end{enumerate}
\end{lemma}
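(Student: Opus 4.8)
The plan is to derive each of the four properties of $W(x) = 2V(x)/x$ directly from the corresponding properties of $V$ established in Lemma \ref{lV} and Lemma \ref{derV1}, treating the monotonicity claim as the one step requiring a small idea.

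For the limit at zero (Property 1), I would recognize $W$ as a difference quotient: since $V(0) = 0$ (Property 1 of Lemma \ref{lV}),
\begin{equation}
\lim_{x \to 0} W(x) = \lim_{x \to 0} \frac{2 V(x)}{x} = \lim_{x \to 0} \frac{2\bigl(V(x) - V(0)\bigr)}{x - 0} = 2 V'(0) = 1,
\end{equation}
where the last equality uses $V'(0) = \tfrac{1}{2}$ (Property 2 of Lemma \ref{lV}). For the limit at infinity (Property 2), I would use that $V$ is bounded, in fact $V(x) \to 1$ by Property 3 of Lemma \ref{lV}, so that $2V(x)/x \to 0$ as $x \to \infty$.

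The main work lies in the monotonicity claim of Property 3. Continuity of $W$ on $(0,\infty)$ is immediate, since $V$ is smooth there and $x \neq 0$. For strict monotonicity I would introduce the auxiliary function $g(x) := x V'(x) - V(x)$, so that
\begin{equation}
W'(x) = \frac{2\bigl(x V'(x) - V(x)\bigr)}{x^2} = \frac{2 g(x)}{x^2}.
\end{equation}
The key observation is that $g'(x) = V'(x) + x V''(x) - V'(x) = x V''(x)$, which is strictly negative on $(0,\infty)$ by Property 5 of Lemma \ref{lV}. Since $g$ extends continuously to $x = 0$ with $g(0) = 0 \cdot V'(0) - V(0) = 0$, and $g$ is strictly decreasing on $(0,\infty)$, it follows that $g(x) < 0$ for all $x > 0$, whence $W'(x) < 0$ and $W$ is strictly decreasing. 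I expect this reduction to be the only genuinely nontrivial step: the advantage over differentiating the explicit formula for $V'$ from Lemma \ref{derV1} is that the sign of $g'$ is transparent, whereas a direct attack on the sign of $x V'(x) - V(x)$ through the expression $x\bigl(1 - V(x)^2\bigr) - 2V(x)$ would be comparatively awkward.

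Finally, Property 4 follows from the symmetry $V(-x) = -V(x)$ (Property 7 of Lemma \ref{lV}):
\begin{equation}
W(-x) = \frac{2 V(-x)}{-x} = \frac{2\bigl(-V(x)\bigr)}{-x} = \frac{2 V(x)}{x} = W(x).
\end{equation}
Assembling these four observations completes the proof.
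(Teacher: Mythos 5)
Your proof is correct, but it takes a genuinely different route from the paper: the paper disposes of Properties 1--3 by citing Proposition 3.2 of the earlier work \cite{Meylahn2020} and only argues Property 4 (exactly as you do, via $V(-x)=-V(x)$), whereas you give a self-contained derivation of all four properties from Lemma \ref{lV}. Your Properties 1, 2 and 4 are immediate and unimpeachable: the difference-quotient identification $\lim_{x\to 0}2V(x)/x = 2V'(0) = 1$, boundedness of $V$ for the decay at infinity, and the odd symmetry of $V$. The one substantive step, strict monotonicity, is also handled cleanly: writing $W'(x) = 2g(x)/x^2$ with $g(x) = xV'(x)-V(x)$, noting $g'(x) = xV''(x) < 0$ on $(0,\infty)$ by Property 5 of Lemma \ref{lV}, and using $g(0)=0$ together with continuity of $g$ at the origin to conclude $g(x)<0$ for $x>0$. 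This is a nice observation precisely because, as you point out, substituting the explicit formula $V'(x) = 1 - V(x)^2 - V(x)/x$ from Lemma \ref{derV1} yields $g(x) = x\bigl(1-V(x)^2\bigr) - 2V(x)$, whose sign is not transparent. What your approach buys is independence from the external reference (the argument compiles entirely out of facts stated in this paper's appendix); what the paper's approach buys is brevity, at the cost of requiring the reader to consult \cite{Meylahn2020} for the analogous statements.
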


\begin{proof}
The proofs for properties 1--3 are given in \cite[Proposition 3.2]{Meylahn2020}. The last property is an immediate consequence of Property 7 of Lemma \ref{lV}.
\end{proof}

\begin{lemma}[Asymptotics for $V$]\label{thm:Vasym}
The following asymptotics for $V$ can be found:
\begin{align}
V(x) 		&\sim \frac{x}{2}, \quad \text{as } x \downarrow 0, \\
1 - V(x)	&\sim \frac{1}{2x}, \quad \text{as } x \to \infty. 
\end{align}
\end{lemma}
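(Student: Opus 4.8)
The plan is to treat the two asymptotics entirely separately, since they are of very different character: the small-$x$ statement is immediate from the value of $V'$ at the origin, whereas the large-$x$ statement is the substantive part and I would extract it from the first-order differential identity for $V$ recorded in Lemma \ref{derV1}. For the regime $x\downarrow 0$, Properties 1 and 2 of Lemma \ref{lV} give $V(0)=0$ and $V'(0)=\tfrac12$, so directly from the definition of the derivative
\[
\lim_{x\downarrow 0}\frac{V(x)}{x/2}=2\lim_{x\downarrow 0}\frac{V(x)-V(0)}{x-0}=2V'(0)=1,
\]
which is exactly $V(x)\sim x/2$; no further estimates are needed here.

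For the regime $x\to\infty$ I would start from $V'(x)=1-V(x)^2-V(x)/x$ and set $u:=1-V$, which by Property 3 of Lemma \ref{lV} is positive, decreasing, and tends to $0$. Substituting $V=1-u$ turns the identity into
\[
u'(x)=-2u(x)+u(x)^2+\frac{1-u(x)}{x}.
\]
The quantity of interest is $g(x):=2x\,u(x)$, since the claim $1-V(x)\sim\tfrac{1}{2x}$ is precisely $g(x)\to 1$. A direct differentiation gives the clean relation
\[
g'(x)=-2\bigl(g(x)-1\bigr)+g(x)\,u(x).
\]

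I would then finish in two steps. First, an a priori bound: once $x$ is large enough that $u(x)<1$, the rewriting $g'=g(u-2)+2\le -g+2$ together with the integrating factor $\eee^{x}$ yields $\limsup_{x\to\infty}g\le 2$, so $g$ is bounded. Since $u\to 0$, the forcing term then obeys $g\,u\to 0$. Writing $h:=g-1$, so that $h'+2h=g\,u$ with right-hand side tending to $0$, the integrating factor $\eee^{2x}$ gives
\[
h(x)=\eee^{-2(x-x_0)}h(x_0)+\eee^{-2x}\int_{x_0}^{x}\eee^{2s}\,g(s)u(s)\,\ddd s,
\]
and splitting the integral at a point beyond which $g\,u<\ee$ shows $\limsup|h|\le\ee/2$ for every $\ee>0$; hence $h\to 0$, i.e.\ $g\to 1$, which is the assertion.

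The delicate point, and the main obstacle, is exactly that $g\to 1$ (equivalently $xV'(x)\to 0$) and not merely that $g$ stays bounded. The purely algebraic rearrangement $x(1-V)=(xV'+V)/(1+V)$ is consistent with \emph{any} limiting value of $x(1-V)$, so the value $\tfrac12$ cannot be pinned down without genuinely using the dynamics; this is why the differential identity of Lemma \ref{derV1}, rather than the self-consistency relation, is essential, and why establishing the a priori bound on $g$ (needed to kill the forcing term $g\,u$) is the crux. As a cross-check, or as a shorter alternative route, one can instead insert the classical large-argument expansions $\I_0(x)\sim\tfrac{\eee^{x}}{\sqrt{2\pi x}}\bigl(1+\tfrac{1}{8x}\bigr)$ and $\I_1(x)\sim\tfrac{\eee^{x}}{\sqrt{2\pi x}}\bigl(1-\tfrac{3}{8x}\bigr)$ into $V=\I_1/\I_0$, which gives $V(x)=1-\tfrac{1}{2x}+O(x^{-2})$ directly.
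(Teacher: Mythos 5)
Your proof is correct, but it takes a genuinely different route from the paper, which does not prove this lemma at all: it simply cites \cite[p.~213]{GA18}, where the asymptotics come from the classical large-argument expansions of the modified Bessel functions $\I_0$ and $\I_1$ --- essentially the ``cross-check'' route you mention at the end. Your argument is instead self-contained within the paper's own toolbox: the small-$x$ statement is read off from $V(0)=0$ and $V'(0)=\tfrac12$ (Lemma \ref{lV}), and the large-$x$ statement is extracted from the Riccati-type identity $V'=1-V^2-V/x$ of Lemma \ref{derV1}. The key computations check out: with $u=1-V$ and $g=2xu$ one indeed gets $g'=-2(g-1)+gu$; the comparison $g'\le -g+2$ is valid since $0<u<1$ for large $x$ (indeed $V$ is strictly increasing with limit $1$, so $u>0$ and $u\to 0$) and $g>0$, which gives boundedness of $g$; and the integrating-factor estimate for $h=g-1$ with forcing term $gu\to 0$ correctly yields $g\to 1$, i.e.\ $1-V(x)\sim\tfrac{1}{2x}$. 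Your observation that the self-consistency-style algebraic rearrangement alone cannot pin down the constant $\tfrac12$ is also apt --- the dynamics must enter somewhere. What your route buys is independence from Bessel-function asymptotics: nothing outside Lemmas \ref{lV} and \ref{derV1} is used, so the lemma could be proved inside this paper without the external reference. What the paper's citation buys is brevity and, via the expansion $V(x)=1-\tfrac{1}{2x}+O(x^{-2})$, a quantitative error term that is sharper than the plain asymptotic equivalence your ODE argument delivers.
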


\begin{proof}
The proof can be found in \cite[p.213]{GA18} 
\end{proof}

\begin{lemma}[Bounds for $V$]\label{thm:Vbounds} For all $x \in (0, \infty)$ we have 
\begin{equation}
V^-(x) \leq V(x) \leq V^+(x),
\end{equation}
with 
\begin{align}
V^+(x) 	&= \frac{2x}{1 + 2x},\\
V^-(x)	&= \frac{x}{2 + x}.
\end{align}
\end{lemma}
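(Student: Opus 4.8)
The plan is to exploit the first-order Riccati equation $V'(x) = 1 - V(x)^2 - V(x)/x$ supplied by Lemma \ref{derV1}, together with the boundary data $V(0)=0$ and $V'(0)=\tfrac12$ from Lemma \ref{lV}, and to compare $V$ against the two candidate bounds by a differential-inequality (first-touch) argument. Writing $F(x,v) := 1 - v^2 - v/x$ for the right-hand side of the Riccati equation, I would first verify by direct substitution that $V^+$ is a strict supersolution and $V^-$ a strict subsolution. A short computation gives $(V^+)'(x) - F(x,V^+(x)) = 3/(1+2x)^2 > 0$ and $(V^-)'(x) - F(x,V^-(x)) = -3x/(2+x)^2 < 0$ for all $x \in (0,\infty)$.

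For the upper bound, set $d := V^+ - V$. Subtracting the two identities and using the factorization $F(x,u)-F(x,v) = -[(u+v)+1/x]\,(u-v)$ yields the linear relation $d'(x) = -[(V^+(x)+V(x))+1/x]\,d(x) + 3/(1+2x)^2$, with $d(0^+)=0$ and $d'(0^+) = 2 - \tfrac12 = \tfrac32 > 0$, so $d>0$ on a right-neighbourhood of $0$. I would then argue by first touch: if $x_1 := \inf\{x>0 : d(x)=0\}$ were finite, positivity of $d$ on $(0,x_1)$ would force $d'(x_1)\le 0$, whereas the displayed relation gives $d'(x_1) = 3/(1+2x_1)^2 > 0$ (the linear term dies since $d(x_1)=0$), a contradiction. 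Hence $V < V^+$ on $(0,\infty)$. The lower bound is handled identically with $e := V - V^-$, for which $e'(x) = -[(V(x)+V^-(x))+1/x]\,e(x) + 3x/(2+x)^2$, and the same contradiction arises at any putative first zero, where $e'=3x/(2+x)^2>0$.

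The one genuinely delicate point — and the step I expect to be the main obstacle — is the behaviour near $x=0$ for the \emph{lower} bound. There $V$ and $V^-$ share both value $0$ and slope $\tfrac12$, so $e'(0^+)=0$ and the initial sign of $e$ is not immediate from the boundary data alone. I would resolve this by a second-order expansion: since $V$ is odd (Property 7 of Lemma \ref{lV}) its Taylor series has no even powers, and inserting $V(x) = \tfrac{x}{2} + c_3 x^3 + O(x^5)$ into the Riccati equation forces $c_3 = -\tfrac1{16}$, while $V^-(x) = \tfrac{x}{2} - \tfrac{x^2}{4} + O(x^3)$. Thus $e(x) = \tfrac{x^2}{4} + O(x^3) > 0$ for small $x>0$, which supplies the initial positivity needed to launch the first-touch argument. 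For the upper bound this complication does not arise, because the slopes $2$ and $\tfrac12$ at the origin already differ, giving $d>0$ immediately. Combining the two parts yields $V^-(x) \le V(x) \le V^+(x)$ on $(0,\infty)$, as claimed (in fact with strict inequalities).
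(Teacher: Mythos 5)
Your proof is correct, but it does not follow the paper's route, because the paper gives no proof at all: Lemma \ref{thm:Vbounds} is dispatched by citation to \cite{GA18} (Lemma 5, pp.\ 213--214), where the bounds on $V=\I_1/\I_0$ are established as part of the hierarchical-lattice analysis. Your argument is instead self-contained within the toolkit this paper actually develops: the Riccati identity $V'(x)=1-V(x)^2-V(x)/x$ of Lemma \ref{derV1} plus the boundary data $V(0)=0$, $V'(0)=\tfrac12$ of Lemma \ref{lV}. I checked the computations: the residues are indeed $(V^+)'(x)-F(x,V^+(x))=3/(1+2x)^2>0$ and $(V^-)'(x)-F(x,V^-(x))=-3x/(2+x)^2<0$; the factorization $F(x,u)-F(x,v)=-\left[(u+v)+1/x\right](u-v)$ is right, so at a putative first zero of $V^+-V$ (resp.\ $V-V^-$) the linear term vanishes and the ODE relation forces a strictly positive derivative, contradicting the one-sided derivative being $\leq 0$ there; and you correctly isolated and resolved the only delicate point, namely the degenerate tangency at the origin for the lower bound, where $V$ and $V^-$ share value $0$ and slope $\tfrac12$: the cubic coefficient $c_3=-\tfrac1{16}$ you extract from the Riccati equation matches the known expansion $\I_1(x)/\I_0(x)=\tfrac{x}{2}-\tfrac{x^3}{16}+O(x^5)$, while $V^-(x)=\tfrac{x}{2}-\tfrac{x^2}{4}+O(x^3)$, so $V-V^-=\tfrac{x^2}{4}+O(x^3)>0$ near $0$, which legitimately launches the first-touch argument. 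As for what each approach buys: the paper's citation is short and leans on a published result, whereas your argument makes the lemma independent of the external reference, uses only facts already proved in this paper, and yields the slightly stronger conclusion that both inequalities are strict on $(0,\infty)$.
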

\begin{proof}
The proof can be found in \cite[Lemma 5, p. 213-214]{GA18}.
\end{proof}

\section{Properties and Asymptotics of r(C)}
\begin{proposition}[Properties of $C \mapsto r(C)$]
\label{prop:r(C)}
If $C > 2$, then the following hold:
\begin{enumerate}
\item $C \mapsto r(C)$ is strictly increasing,
\item $\lim_{C \to \infty} r(C) = 1$,
\end{enumerate}
\end{proposition}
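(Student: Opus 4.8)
The plan is to treat $r(C)$ as the unique nontrivial root in $(0,1)$ of the scalar equation $r = V(Cr)$, whose existence and uniqueness for $C>2$ is exactly the fact already exploited in the proof of Theorem \ref{reg1} (namely that $V(Cr)=r$ has a solution $r>0$ iff $C>2$). Throughout I would write $h_C(r):=V(Cr)-r$, so that $r(C)$ is the positive zero of $h_C$. Note $h_C(0)=0$, $h_C'(0)=\tfrac{C}{2}-1>0$ by Property 2 of Lemma \ref{lV}, and $h_C$ is strictly concave in $r$ since $V$ is strictly concave (Property 5 of Lemma \ref{lV}); moreover $h_C(r)\to-\infty$ as $r\to\infty$ because $V$ is bounded. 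A strictly concave function that starts at $0$ with positive slope and eventually tends to $-\infty$ has exactly the two zeros $0$ and $r(C)$, with $h_C>0$ on $(0,r(C))$ and $h_C<0$ on $(r(C),\infty)$. Establishing this sign pattern is the structural fact on which both claims rest, and it is also what forces $r(C)<1$, since $V<1$ everywhere.

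For Property 1 (strict monotonicity) I would use a comparison argument rather than implicit differentiation, to avoid having to justify smoothness of $C\mapsto r(C)$. Fix $2<C_1<C_2$. Since $V$ is strictly increasing (Property 4 of Lemma \ref{lV}) and $r(C_1)>0$, we have $V(C_2\,r(C_1))>V(C_1\,r(C_1))=r(C_1)$, i.e. $h_{C_2}(r(C_1))>0$. By the sign pattern of $h_{C_2}$ recorded above, any point at which $h_{C_2}$ is strictly positive must lie strictly to the left of its positive root $r(C_2)$; hence $r(C_1)<r(C_2)$. (If one prefers the infinitesimal version, the implicit function theorem applies because $\partial_r h_C = C V'(Cr)-1 = h_C'(r(C))<0$ at the root, yielding $r'(C) = -\frac{r(C)\,V'(C\,r(C))}{C\,V'(C\,r(C))-1}>0$.)

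For Property 2 (the limit) I would combine monotonicity with the bound $r(C)<1$ to conclude that $L:=\lim_{C\to\infty}r(C)$ exists in $(0,1]$. To pin down $L=1$, fix any $C_0>2$; by monotonicity $r(C)\ge r(C_0)=:\rho_0>0$ for all $C\ge C_0$, so the fixed-point argument $C\,r(C)\ge C\rho_0\to\infty$ as $C\to\infty$. Feeding this into the relation $r(C)=V(C\,r(C))$ and using $\lim_{x\to\infty}V(x)=1$ (Property 3 of Lemma \ref{lV}) gives $r(C)\to1$, so $L=1$.

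The routine parts are the sign bookkeeping for $h_C$ and the short limit computation; the only genuine subtlety, and the main obstacle I anticipate, is making the monotonicity rigorous, which amounts to cleanly justifying that $r(C)$ is a well-defined single-valued branch and that $h_{C_2}$ is positive on exactly $(0,r(C_2))$. Both of these reduce to the strict concavity of $h_C$ together with its positive slope at the origin, so once that structural fact is secured both claims follow with essentially no further computation.
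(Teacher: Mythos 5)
Your proposal is correct and takes essentially the same route as the paper: the paper's entire proof is the one-line assertion that both properties follow from Properties 3 and 4 of Lemma \ref{lV} (that $V$ is strictly increasing and $V(x)\to 1$ as $x\to\infty$), which is exactly the skeleton your argument fleshes out. The concavity-based sign analysis of $h_C$ you add (using Properties 2 and 5 of Lemma \ref{lV}) rigorously justifies the well-definedness of the root and the monotone-comparison step that the paper leaves implicit, rather than constituting a different method.
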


\begin{proof}
Properties $1$ and $2$ follow from properties $3$ and $4$ of Lemma \ref{lV}.
\end{proof}

\begin{proposition}[Asymptotics for $r(C)$]
\label{prop:asympr(C)} The following hold:
\begin{enumerate}
\item $r(C) \sim \sqrt{C-2}$ as $C \downarrow 2$,
\item $1 - r(C) \sim \frac{1}{2C}$ as $C \to \infty$. 
\end{enumerate}

\end{proposition}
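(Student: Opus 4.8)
The plan is to analyze the two regimes separately, using throughout that $r(C)$ is the nontrivial solution of $r = V(Cr)$ for $C > 2$, that $r(C) \downarrow 0$ as $C \downarrow 2$ (since $V(x) < x/2$ by Property 6 of Lemma~\ref{lV} rules out a positive root at $C = 2$, while $C \mapsto r(C)$ is continuous and strictly increasing), and that $r(C) \uparrow 1$ as $C \to \infty$ by Proposition~\ref{prop:r(C)}. I would settle the second asymptotic first, as it is immediate. Since $r(C) \to 1$, the argument $x = C\,r(C) \to \infty$, so feeding it into the large-$x$ estimate $1 - V(x) \sim \tfrac{1}{2x}$ of Lemma~\ref{thm:Vasym} gives $1 - r(C) = 1 - V(C r(C)) \sim \tfrac{1}{2 C r(C)}$; because $r(C) \to 1$ the factor $r(C)$ is absorbed into the equivalence, leaving $1 - r(C) \sim \tfrac{1}{2C}$.

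For the first asymptotic the point to appreciate is that the leading estimate $V(x) \sim \tfrac{x}{2}$ of Lemma~\ref{thm:Vasym} is insufficient: inserted into $r = V(Cr)$ it only reproduces the degenerate balance $1 = C/2$ holding at $C = 2$ and says nothing about the rate at which $r$ detaches from $0$. This reflects that $(r, C) = (0, 2)$ is a bifurcation point, the linearization $\partial_r[V(Cr) - r]$ vanishing there because $C V'(0) - 1 = 2 \cdot \tfrac12 - 1 = 0$ by Property 2 of Lemma~\ref{lV}. I therefore need one more term in the expansion of $V$ at the origin. I would extract it from the modified Bessel series $\I_0(x) = 1 + \tfrac{x^2}{4} + O(x^4)$ and $\I_1(x) = \tfrac{x}{2} + \tfrac{x^3}{16} + O(x^5)$, whose quotient gives $V(x) = \I_1(x)/\I_0(x) = \tfrac{x}{2} - \tfrac{x^3}{16} + O(x^5)$; the same cubic coefficient also follows by expanding the identity $V'(x) = 1 - V(x)^2 - V(x)/x$ of Lemma~\ref{derV1}.

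With this in hand I would substitute $x = C r(C)$ into $r = V(Cr)$, cancel the common factor $r$ on the nontrivial branch, and rearrange to $\tfrac{C^3 r^2}{16} = \tfrac{C-2}{2} + O(r^4)$. Letting $C \downarrow 2$, so that $r \to 0$ and $C^3 \to 8$, yields $r^2 \sim C - 2$, that is $r(C) \sim \sqrt{C-2}$. The only real obstacle is the bookkeeping in this last step: I must confirm the $O(r^4)$ remainder is negligible against $\tfrac{C-2}{2}$, which is automatic once $r^2$ and $C-2$ are seen to be comparable, and that the Bessel expansions may be divided termwise for small $x$. Beyond securing the cubic coefficient $-\tfrac{1}{16}$, the argument is elementary.
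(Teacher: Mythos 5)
Your proof is correct, and your treatment of part 2 coincides with the paper's: both feed $x = C\,r(C) \to \infty$ into the tail estimate $1 - V(x) \sim \tfrac{1}{2x}$ of Lemma \ref{thm:Vasym} and absorb the factor $r(C) \to 1$. For part 1, however, you take a genuinely different route. The paper differentiates the fixed-point relation $r(C) = V(C r(C))$ implicitly to get $r'(C)/r(C) = V'(Cr)/(1 - C V'(Cr))$, expands $V'$ to second order so that, with $\varepsilon = C-2$, the ratio behaves like $1/(3r^2 - \varepsilon)$, then observes that the comparison ODE ${r^*}'/r^* = 1/(3(r^*)^2 - \varepsilon)$ has the exact solution $r^*(\varepsilon) = \sqrt{\varepsilon}$ and concludes $r \sim r^*$. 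You instead expand $V$ itself to cubic order, $V(x) = \tfrac{x}{2} - \tfrac{x^3}{16} + O(x^5)$, cancel the common factor $r$ on the nontrivial branch, and solve algebraically: $\tfrac{C^3 r^2}{16} = \tfrac{C-2}{2} + O(r^4)$, whence $r^2 \sim C-2$ because $C^3 \to 8$ and the remainder is $o(r^2)$. Both arguments rest on the same cubic coefficient (your $-\tfrac{1}{16}$ in $V$, equivalently the paper's $-\tfrac{3}{16}$ in $V'$) and on the preliminary fact that $r(C) \downarrow 0$ as $C \downarrow 2$, which you justify correctly from Property 6 of Lemma \ref{lV} and monotonicity. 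What your route buys is rigor at the final step: the paper's passage from asymptotic equivalence of the right-hand sides of two ODEs to asymptotic equivalence of their solutions is left implicit and would need extra justification, whereas your absorption of the $O(r^4)$ error into the $r^2$ term is immediate. What the paper's route buys is that it never divides the Bessel series and displays the bifurcation through an exactly solvable ODE; your version is the more elementary and self-contained of the two.
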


\begin{proof}
Note that 
\begin{equation}
\frac{\ddd r(C)}{\ddd C} = \frac{\ddd V(C r(C) )}{\ddd C} = ( r(C) + C r'(C) ) V'(C r(C) ), \label{eq:r'}
\end{equation}
and therefore
\begin{equation}
\frac{r'(C)}{r(C)} = \frac{V'(C r(C) )}{1 - C V'(C r(C))}. \label{eq:sep1}
\end{equation}
Furhtermore, note that
\begin{equation}
V'(C r(C)) \to \frac{1}{2} \label{eq:limV'}
\end{equation}
as $C \downarrow 2$, so that the denominator of \eqref{eq:sep1} tends to zero as $C \downarrow 2$. For this reason, we perform an expansion of $V'(C r(C))$ around $0$ in the denominator. It follows that
\begin{equation}
V'(C r(C)) = \frac{1}{2} - \frac{3}{16} (C r(C))^2 +  O((C r(C))^3). \label{eq:tV'}
\end{equation}
Take $\ee = C - 2$, by using \eqref{eq:limV'} in the numerator of \eqref{eq:sep1} and using \eqref{eq:tV'} in the denominator of \eqref{eq:sep1} we get
\begin{align}
\frac{r'(\ee)}{r(\ee)} \sim \frac{1}{2 - (\ee + 2)( 1- \frac{3}{8} ( \ee + 2)^2 r(\ee)^2 ) },
\end{align}
as $\ee \downarrow 0$.
Neglecting all higher order terms, we get
\begin{equation}
\frac{r'(\ee)}{r(\ee)} \sim \frac{1}{3 r(\ee)^2 - \ee} \label{eq:sep2}
\end{equation}
as $\ee \downarrow 0$. It follows that the ODE 
\begin{equation}
\frac{{r^*}'(\ee)}{r^*(\ee)} = \frac{1}{3 {r^*(\ee)}^2 - \ee}
\end{equation}
has a solution $r^*(\ee) = \sqrt{\ee}$. Hence $r(\ee) \sim r^*(\ee) = \sqrt{\ee}$ as $\ee \downarrow 0$, from which the result follows.

Note that $1 - V(x) \sim \frac{1}{2x}$ as $x \to \infty$ and since $C r(C) \sim C$ as $C \to \infty$, the result follows.

\end{proof}

\begin{remark}
The synchronization level $r(C)$ is approximated well (see Figure \ref{fig:approx}) on the entire domain by
\begin{equation}
r(C) \approx \sqrt{\frac{C-2}{C-1}}.
\end{equation}
\end{remark}

\begin{figure}[!ht]
\begin{center}
\includegraphics[width=0.4\textwidth]{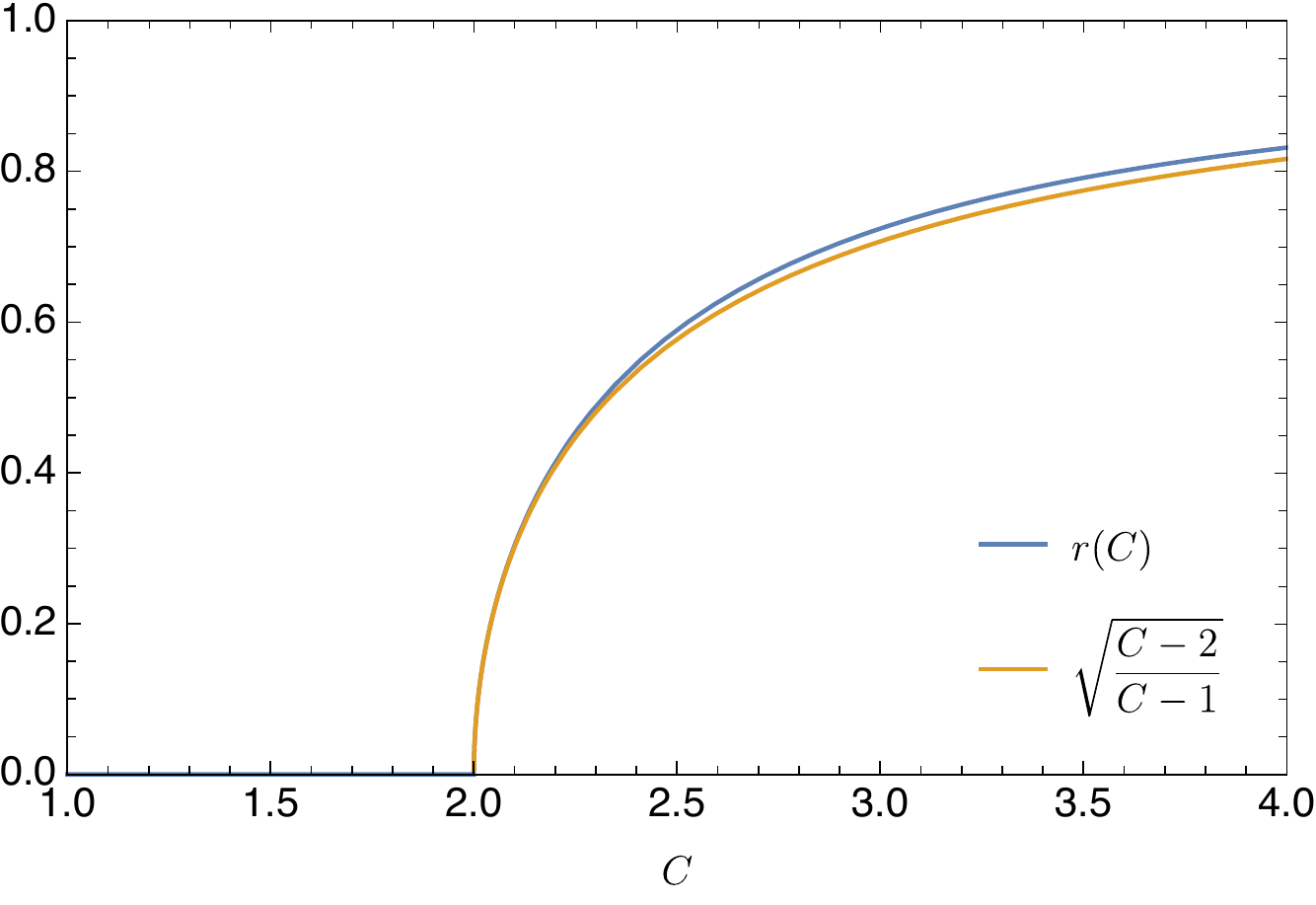} 
\end{center}  
\caption{Plot of $C \mapsto r(C)$ and $C \mapsto \sqrt{\frac{C-2}{C-1}}$.}
\label{fig:approx}
\end{figure}

%\nocite{*}
%\bibliography{aipsamp}% Produces the bibliography via BibTeX.
%%%%%%%%%%% REFERENCES %%%%%%%%%%%%%%%%%%%%%
\newpage

\end{document}